\title{Reachability in Bidirected Pushdown VASS}
\author{Moses Ganardi}{Max Planck Institute for Software Systems (MPI-SWS), Germany}%
{ganardi@mpi-sws.org}%
{https://orcid.org/0000-0002-0775-7781}%
{}%
\author{Rupak Majumdar}{Max Planck Institute for Software Systems (MPI-SWS), Germany}%
{rupak@mpi-sws.org}%
{https://orcid.org/0000-0003-2136-0542}%
{}%
\author{Andreas Pavlogiannis}{Aarhus University, Aabogade 34, 8200, Aarhus, Denmark}%
{pavlogiannis@cs.au.dk}%
{https://orcid.org/0000-0002-8943-0722}%
{}%
\author{Lia Sch\"{u}tze}{Max Planck Institute for Software Systems (MPI-SWS), Germany}%
{lschuetze@mpi-sws.org}%
{https://orcid.org/0000-0003-4002-5491}%
{}%
\author{Georg Zetzsche}{Max Planck Institute for Software Systems (MPI-SWS), Germany}%
{georg@mpi-sws.org}%
{https://orcid.org/0000-0002-6421-4388}%
{}%
\keywords{Vector addition systems, Pushdown, Reachability, Decidability, Complexity}
\authorrunning{M. Ganardi, R. Majumdar, A. Pavlogiannis, L. Sch\"{u}tze, G. Zetzsche}
\newcommand{\ComplexityFont}[1]{\ensuremath{{\small\mathsf{#1}}}}
\newcommand{\TOWER}{\ComplexityFont{TOWER}}
\newcommand{\PR}{\text{\ComplexityFont{PRIMITIVE}-\ComplexityFont{RECURSIVE}}}
\newcommand{\ACK}{\ComplexityFont{ACKERMANN}}
\newcommand{\PSPACE}{\ComplexityFont{PSPACE}}
\newcommand{\EXPSPACE}{\ComplexityFont{EXPSPACE}}
\newcommand{\NP}{\ComplexityFont{NP}}
\newcommand{\PTIME}{\ComplexityFont{P}}
\newcommand{\Cong}{\mathsf{Cong}}
\newcommand{\F}{\ComplexityFont{F}}
\renewcommand{\P}{\mathcal{P}}
\newcommand{\Q}{\mathcal{Q}}
\renewcommand{\S}{\mathcal{S}}
\newcommand{\Z}{\mathbb{Z}}
\newcommand{\N}{\mathbb{N}}
\newcommand{\gz}[1]{}
\newcommand{\Dominated}{\leq}
\newcommand{\StrictDominated}{<}
\newcommand{\Weight}{w}
\newcommand{\MaxStackHeight}{\mathit{MaxSH}}
\newcommand{\PathSum}{\Weight}
\newcommand{\PathMin}{\mathit{m}}
\newcommand{\Paths}[2]{\{ #1\rightsquigarrow #2 \}}
\newcommand{\DFinite}{\gamma}
\newcommand{\DOmega}{\delta}
\newcommand{\Extend}{\odot}
\newcommand{\DS}{\mathit{DS}}
\newcommand{\Some}{\_}
\newcommand{\Concat}{\circ}
\newcommand{\tuple}[1]{\langle #1 \rangle}
\newcommand{\ov}{\overline}
\newcommand{\Reverse}[1]{\ov{#1}}
\newcommand{\StackAlphabetBot}{\Gamma_{\bot}}
\newcommand{\Project}{\downharpoonright}
\newcommand{\DTo}[1]{\xrightarrow{#1}}
\newcommand{\Zstep}{\hookrightarrow}
\begin{document}

\maketitle

\begin{abstract}

A pushdown vector addition system with states (PVASS) extends the model of vector addition systems with a pushdown store.
A PVASS is said to be \emph{bidirected} if every transition (pushing/popping a symbol or modifying a counter)
has an accompanying opposite transition that reverses the effect.
Bidirectedness arises naturally in many models; it can also be seen as a overapproximation of reachability.
We show that the reachability problem for \emph{bidirected} PVASS is decidable in Ackermann
time and primitive recursive for any fixed dimension.
For the special case of one-dimensional bidirected PVASS, we show reachability is in $\mathsf{PSPACE}$, and in fact in 
polynomial time if the stack is polynomially bounded. 
Our results are in contrast to the \emph{directed} setting, where decidability of reachability is a long-standing open problem already
for one dimensional PVASS, and there is a $\mathsf{PSPACE}$-lower bound already for one-dimensional PVASS with bounded stack.

The reachability relation in the bidirected (stateless) case is a congruence over $\mathbb{N}^d$. 
Our upper bounds exploit saturation techniques over congruences.
In particular, we show novel elementary-time constructions of semilinear representations of congruences generated by finitely many vector pairs.
In the case of one-dimensional PVASS, we employ a saturation procedure over bounded-size counters.

We complement our upper bound with a $\mathsf{TOWER}$-hardness result for arbitrary dimension and 
$k$-$\mathsf{EXPSPACE}$ hardness in dimension $2k+6$ using a technique by Lazi\'{c} and Totzke  
to implement iterative exponentiations. 

\end{abstract}

\section{Introduction}

The reachability problem for infinite-state systems is one of the most basic
and well-studied tasks in verification. Given an infinite-state system and two
configurations $c_1$ and $c_2$ in the system, it asks: Is there a run from
$c_1$ to $c_2$?  
\emph{Pushdown systems} (PDS) and
\emph{vector addition systems with states} (VASS) are prominent models for which the reachability
problem has been studied extensively. 
Each of them features a finite set of control states and a storage mechanism that holds an unbounded
amount of information. In a PDS, there is a stack where we can push
and pop letters. In a VASS, there is a set of counters which can be
\emph{incremented} and \emph{decremented}, but not tested for zero.
Reachability in both models is understood in isolation
\cite{DBLP:conf/concur/BouajjaniEM97,CzerwinskiLLLM19,DBLP:conf/focs/Leroux21,DBLP:conf/focs/CzerwinskiO21,LerouxS19}, 
but the reachability problem for their \emph{combination} is a long-standing open problem.

\begin{comment}

However,
there are two barriers for employing reachability procedures in the algorithmic
analysis in infinite-state models.
First, even for seemingly simple models such as VASS, reachability can
have prohibitively high computational complexity: A recent string of
breakthrough results has shown that reachability in VASS is
Ackermann-complete~\cite{CzerwinskiLLLM19,DBLP:journals/corr/abs-2104-12695,DBLP:journals/corr/abs-2104-13866,LerouxS19}.
The second barrier is that even for simple combinations of well-understood
infinite-state systems, decidability of the reachability problem is subject to
decades old open problems. 

\end{comment}

\subparagraph{Pushdown VASS} 
A \emph{pushdown VASS} (PVASS) combines PDS and VASS.
A PVASS consists of finitely many control states and has access to
both a pushdown stack (as in PDS) and counters (as in VASS). 
A PVASS is \emph{$d$-dimensional} if it has $d$ counters. A PVASS is a natural
combination of the simple building blocks of PDS and VASS. The reachability problem for PVASS
has remained a long-standing open problem~\cite{LerouxST15,DBLP:conf/rp/SchmitzZ19,EnglertHLLLS21},
even if we combine a pushdown with a single counter.  

\subparagraph{Bidirectedness}
A step toward deciding reachability is to first study natural relaxations of the reachability relation.
A relaxation that has recently attracted attention is \emph{bidirectedness}. 
Bidirectedness assumes that for each transition from state $p$
to $q$ in our infinite-state system, there exists a transition from $q$ to $p$
with opposite effect. For example, in bidirected pushdown systems, for each
transition from $p$ to $q$ pushing $\gamma$ on the stack, there is a transition
from $q$ to $p$ that pops $\gamma$. Likewise, in bidirected VASS, if there is a transition from $p$ to $q$
that adds some vector $\bm{v}\in\Z^d$ to counters, then there is a transition
from $q$ to $p$ adding $-\bm{v}$. It turns out that several tasks in program
analysis can be formulated or practically approximated as reachability in
bidirected pushdown systems~\cite{ChatterjeeCP2018,DBLP:conf/pldi/ZhangLYS13}
or bidirected multi-pushdown
systems~\cite{DBLP:conf/ecoop/XuRS09,DBLP:conf/issta/YanXR11,DBLP:conf/popl/ZhangS17,DBLP:conf/pldi/LiZR20,DBLP:journals/pacmpl/LiZR21,DBLP:journals/pacmpl/KP22}.
Bidirected systems have also been considered in algorithmic group theory as an
algorithmic framework to provide simple algorithms for the membership problem
in subgroups~\cite{lohrey2010automata}.

Reachability in bidirected systems is usually considerably more efficient than
in the general case. In bidirected pushdown systems, reachability can be solved
in almost linear time~\cite{ChatterjeeCP2018} whereas a truly subcubic
algorithm for the general case is a long-standing open
problem~\cite{HeintzeMcAllester,Chaudhuri}.  
Reachability in bidirected VASS is equivalent to the
uniform word problem in finitely presented commutative semigroups, which is
$\EXPSPACE$-complete~\cite{MM82}.  
A separate polynomial time algorithm for bidirected two-dimensional VASS was given
in~\cite{DBLP:journals/pacmpl/LiZR21}. Moreover, recent results on reachability
in bidirected valence systems shows complexity drops across a large variety of
infinite-state systems~\cite{GanardiMajumdarZetzsche2022a}: For almost
every class of systems studied in~\cite{GanardiMajumdarZetzsche2022a}, the
complexity of bidirected reachability is lower than in the general case (the
only exception being pushdown systems, where the complexity is
$\PTIME$-complete in both settings). For example, reachability in bidirected
$\Z$-VASS, and even in bidirected $\Z$-PVASS, is in
$\PTIME$~\cite{GanardiMajumdarZetzsche2022a}.

However, little is known about bidirected PVASS.  They have recently been
studied in~\cite{DBLP:journals/pacmpl/KP22}, where decidability of reachability
in {\em dimension one is shown}.  
However, as in the non-bidirected case, decidability of reachability in bidirected PVASS is hitherto not known.

\subparagraph{Contributions}

We show that in bidirected PVASS (of arbitrary dimension), reachability is decidable.
Moreover, we provide an Ackermann complexity upper bound, and show that in any
fixed dimension, reachability is primitive recursive.

\begin{theorem}
\label{thm:bidirected-pvass}
Reachability in bidirected pushdown VASS is in $\ACK$,
and primitive recursive (in $\F_{4d+11}$) if the dimension $d$ is fixed.
\end{theorem}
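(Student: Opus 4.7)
My plan is to reduce reachability in a bidirected PVASS to a membership problem in an effectively computable congruence on $\N^d$. As the introduction observes for the stateless case, the set of counter effects realisable by runs that begin and end with the same stack is closed under addition and inversion, hence forms a congruence on $\N^d$. I would compute these effects as a family of summaries indexed by pairs of control states, and then reduce the final reachability question to a membership query in the resulting congruence.

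The summaries would be obtained by saturation. Initialise, for each pair $(p,q)$, the set of counter effects of single transitions that do not modify the stack. At each iteration close under two operations: concatenation of two summaries (sequential composition), and ``wrapping'' a summary $R$ between a push of some symbol $\gamma$ and the matching pop, producing the new summary $\bm{u} + R + \bm{u}'$ where $\bm{u}, \bm{u}'$ are the counter effects of the outer push and pop transitions. Bidirectedness forces each summary to be symmetric, so at every stage the summaries sit inside subgroups of $\Z^d$ and the induced relations on $\N^d$ are congruences. The key technical ingredient I would invoke is the paper's promised elementary-time construction of a semilinear representation of the congruence generated by finitely many vector pairs, which keeps the summary descriptions compact after each wrapping step.

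Termination and complexity rest on length-function bounds for strictly ascending chains of congruences on $\N^d$. Such chains stabilise in primitive-recursive time in $d$, placing the saturation inside $\F_{4d+11}$ for fixed dimension once the reduction constants are tracked, and inside $\F_\omega = \ACK$ for variable $d$. Once the summaries are stable, the original reachability question reduces to asking whether the desired counter difference lies in the summary for the appropriate control-state pair, which by Mayr--Meyer is an $\EXPSPACE$ check absorbed by the saturation cost.

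The main obstacle I anticipate is controlling the growth of semilinear representations through repeated wrapping. Each wrapping composes a push/pop pair with an already saturated summary and must then re-close under congruence generation; naively this blows up periods and base vectors catastrophically. The elementary-time semilinear construction presumably tames a single closure, but bounding the total iteration count by a primitive-recursive function of $d$ via ascending-chain arguments on congruence lattices, and matching the count precisely against $\F_{4d+11}$, is where I expect the quantitative heart of the proof to sit.
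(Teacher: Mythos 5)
Your high-level plan matches the paper's: saturate a congruence that encodes the reachability relation, use an elementary-time congruence-closure construction (Gröbner bases) to keep semilinear representations compact, and bound the number of saturation rounds via length-function machinery for ascending chains. But there are two gaps that are exactly where the ``quantitative heart'' you flag actually lies.

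First, you keep summaries indexed by pairs of control states and then assert that ``the induced relations on $\N^d$ are congruences.'' That is false: for $p\neq q$, the relation $R_{p,q}=\{(\bm{u},\bm{v})\mid (p,\bm{u},\varepsilon)\xrightarrow{*}(q,\bm{v},\varepsilon)\}$ is not even reflexive, hence not an equivalence relation. These state-indexed relations are \emph{slices}, not congruences. This distinction is precisely why the paper moves from PVASS to stateless PVAS (absorbing the control state into two extra counters) before running the saturation: for a bidirected PVAS the single reachability relation on $\N^d\times\N^d$ genuinely is a congruence, so the Gröbner-basis/binomial-ideal machinery applies. If you insist on state-indexed families, you would have to close under slices, and (as the paper notes, citing Grabowski) no elementary complexity bound for slice closure is known. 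Incidentally, the $+8$ in $\F_{4d+11}$ is exactly the cost of this PVASS-to-PVAS step: the PVAS bound is $\F_{4d+3}$, and simulating the state costs two extra counters, so $4(d+2)+3 = 4d+11$. Your bound has no derivation because your formulation has no such reduction.

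Second, you invoke ``length-function bounds for strictly ascending chains of congruences on $\N^d$'' as if these were off the shelf. They are not. The length-function theorems you can cite (Schmitz, Figueira et al.) control \emph{bad sequences over wqos}, or equivalently strictly increasing chains of upward-closed sets. Transferring this to chains of congruences is the paper's technical contribution here: one associates to each congruence $\Q$ on $\N^d$ an upward-closed set $U(\Q)\subseteq\N^{4d}$ and proves that $\Q_1\subsetneq\Q_2$ forces $U(\Q_1)\subsetneq U(\Q_2)$ with a \emph{norm-controlled} witness of the strict inclusion. Without such a translation, the ascending-chain condition gives termination but no quantitative bound, and the saturation's complexity is unproven. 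Your proposal identifies the right two ingredients (elementary congruence closure, controlled chains) but leaves both the PVAS reduction and the congruence-to-upward-set encoding implicit, and the latter is not a detail one can hand-wave past.
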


Here, $(\F_\alpha)_\alpha$ is an ordinal-indexed hierarchy of fast-growing complexity classes \cite{Schmitz16},
including $\F_3 = \TOWER$ and $\F_\omega = \ACK$. 
The formal definition of the hierarchy can be found in \cref{sec:bad-seq}.
A recurring theme in our upper bounds is that saturation techniques, the
standard method to analyze pushdown systems, combine surprisingly well with
counters in the bidirected setting. 
Saturation is used in each of our upper bounds. In~\cref{sec:decidability}, we
begin the exposition with a short, self-contained proof that reachability is
decidable in bidirected PVASS. It shows that non-reachability is always
certified by an inductive invariant of a particular saturation procedure.
In~\cref{sec:ackermann}, we show the Ackermann upper bound. Here, we saturate a
congruence relation that encodes the reachability relation. The upper bound
relies on two key ingredients. First, we use results about Gr\"{o}bner bases of
polynomial ideals to show that in elementary time, one can construct a
Presburger formula for the congruence generated by finitely many vector pairs.
This construction serves as one step in the saturation. To show termination in
Ackermannian time, we rely on a technique from~\cite{FigueiraFSS11} to bound
the length of strictly ascending chains of upward closed sets of vectors. Here,
the difficulty is to transfer this bound from chains of upward closed sets to
chains of congruences.

In \cref{sec:1d} we present a $\PSPACE$ algorithm for bidirected PVASS in dimension one.

\begin{theorem}\label{thm:bidirected-pvass-1d}
Reachability in 1-dimensional bidirected pushdown VASS is in $\PSPACE$.
\end{theorem}

Here, we rely on an observation
from~\cite{DBLP:journals/pacmpl/KP22} that reachability in bidirected one-dimensional
PVASS reduces to (i)~coverability in bidirected one-dimensional PVASS and
(ii)~reachability in one-dimensional bidirected $\Z$-PVASS. Since (ii)~is known
to be in $\PTIME$~\cite{GanardiMajumdarZetzsche2022a}, we show that (i)~can be done in $\PSPACE$. For this, we use saturation to compute, for
each state pair $(p,q)$, three bounds on counter values that determine whether
coverability holds. We show that these bounds have at most exponential absolute value,
which yields a $\PSPACE$ procedure.

Finally in~\cref{sec:tower}, we show that reachability in bidirected PVASS is
$\TOWER$-hard.  For this, we adapt a technique from~\cite{Lazic2017} that shows
a $\TOWER$ lower bound for general PVASS.

\begin{theorem}
	\label{thm:lower-bound}
	Reachability in bidirected PVASS is $\TOWER$-hard, and $k$-\ComplexityFont{EXPSPACE}-hard
	in dimension $2k+6$.
\end{theorem}

\subparagraph{Related work}

The model of pushdown VASS is surrounded
by extensions and restrictions of the storage mechanism for which decidability
is understood, the most prominent being the recent Ackermann-completeness for reachability in VASS
\cite{CzerwinskiLLLM19,DBLP:conf/focs/Leroux21,DBLP:conf/focs/CzerwinskiO21,LerouxS19}. 
If instead of the stack, we have a counter with zero tests, then
reachability is still
decidable~\cite{DBLP:journals/entcs/Reinhardt08,DBLP:journals/corr/abs-1205-4458}.
Here, decidability even holds if we have a zero-testable counter and one
additional counter that can be
reset~\cite{DBLP:conf/stacs/FinkelS00,DBLP:conf/fsttcs/FinkelLS18}. 
Furthermore, the extension of VASS by \emph{nested zero tests}, where for each
$i\in\{1,\ldots,d\}$, we have an instruction that tests all counters
$1,\ldots,i$ for zero simultaneously, also allows deciding
reachability~\cite{DBLP:journals/entcs/Reinhardt08,bonnet2013theory} and can be
seen as a special case of pushdown VASS~\cite{DBLP:conf/fsttcs/AtigG11}.
Another decidability result concerns the \emph{coverability problem}: Here, we
are given a configuration $c_1$ and a control state $q$ and want to know
whether from $c_1$, one can reach some configuration in control state $q$.  It
is known that the reachability problem for $d$-dimensional PVASS reduces to
coverability in $(d+1)$-dimensional PVASS, and that coverability in
$1$-dimensional PVASS is decidable~\cite{LerouxST15}. According to
\cite{EnglertHLLLS21}, the latter problem is $\PSPACE$-hard
and in $\EXPSPACE$.
Furthermore, if the counters in a PVASS are allowed to go negative during a run,
then we speak of an \emph{integer PVASS} ($\Z$-PVASS). For these, reachability 
is known to be decidable~\cite{DBLP:journals/jcss/HarjuIKS02} and $\NP$-complete~\cite{DBLP:conf/cav/HagueL11}.
However, if we extend the model of PVASS by allowing resets on the counters,
then even coverability is undecidable in dimension one~\cite{DBLP:conf/rp/SchmitzZ19}.

For VASS, several generalizations of bidirectedness have been studied.  It is $\EXPSPACE$-complete whether given two configurations are
mutually reachable~\cite{DBLP:journals/corr/abs-1301-4874}. Moreover,
if two configurations are mutually reachable, then their distance is at most
doubly exponential (linear for fixed dimension) in their
size~\cite{DBLP:conf/fsttcs/Leroux19}. Furthermore, for cyclic VASS (where each
transition can be reversed by some execution), it is known that the
reachability set has a semilinear representation of at most exponential
size~\cite{DBLP:journals/entcs/BouzianeF97}. Let us note that in the VASS/Petri
net literature, sometimes~\cite{DBLP:journals/entcs/BouzianeF97} (but not
entirely consistently~\cite{DBLP:journals/corr/abs-1301-4874}) the term
\emph{reversible} is used to mean bidirected. However, this clashes with the
reversibility notion in dynamical systems~\cite{DBLP:journals/ngc/Kari18}.

\section{Preliminaries}

\subparagraph{Vectors and semilinear sets}
We denote integer vectors by bold letters $\bm{x}$.
The maximum norm of $\bm{x}$ is denoted by $\|\bm{x}\|$.
The $i$-th unit vector is denoted by $\bm{e}_i$.
The componentwise order $\le$ on $\N^d$ is a {\em well-quasi order (wqo)},
i.e. for any infinite sequence $\bm{x}_1, \bm{x}_2, \dots$ over $\N^d$ there exist $i < j$ with $\bm{x}_i \le \bm{x}_j$. We write $\bm{x}<\bm{y}$ if $\bm{x}\le\bm{y}$ and $\bm{x}\ne\bm{y}$.
This implies that the set $\min(X)$ of minimal elements in any $X \subseteq \N^d$ is finite.
We denote by $X {\uparrow} = \{ \bm{y} \in \N^k \mid \exists \bm{x} \in X \colon \bm{x} \le \bm{y} \}$
the {\em upwards closure} of $X$.
We also write $\bm{x} {\uparrow}$ for $\{ \bm{x} \} {\uparrow}$.
A {\em congruence} on a commutative monoid $(M,+)$, for example $M = \N^d$,
is an equivalence relation $\Q \subseteq M \times M$ 
where $(a,b) \in \Q$ implies $(a+c,b+c) \in \Q$
for all $a,b,c \in M$.
We also write $a \sim_\Q b$ instead of $(a,b) \in \Q$.

For $X \subseteq \N^k$ we denote by $X^*$ the {\em submonoid} generated by $X$.
A set $L \subseteq \N^k$ is {\em linear} if it is of the form $L = \bm{b} + P^*$
for some base vector $\bm{b} \in \N^k$ and some finite set $P \subseteq \N^k$ of period vectors.
Finite unions of linear sets are called {\em semilinear}.
It is well-known that a set is semilinear if and only if it is definable in {\em Presburger arithmetic},
i.e.\ first-order logic over $(\N,+,\le,0,1)$.
Furthermore, one can effectively convert between these formats in elementary time:
While defining semilinear sets in Presburger arithmetic is straightforward,
for the converse we can use Cooper's quantifier elimination \cite{Cooper72} running in triply exponential time \cite{Oppen78},
see also \cite{Haase18} for an excellent overview.
We will confuse a semilinear $S$ with its representation,
which is either a list of base and period vectors for each linear set or a defining Presburger formula,
and denote by $\|S\|$ the size of its representation.

\subparagraph{Pushdown VASS}

A {\em $d$-dimensional pushdown VASS (PVASS)} is a tuple $\P = (Q,\Gamma,T)$
where $Q$ is a finite set of states, $\Gamma$ is a finite stack alphabet,
and $T \subseteq Q \times \Z^d \times \mathsf{Op}(\Gamma) \times Q$ is a finite set of transitions.
Here $\mathsf{Op}(\Gamma) = \{ a, \bar a \mid a \in \Gamma \} \cup \{\varepsilon\}$ is the set of
operations on the stack.
A configuration over $\P$ is a tuple $(q,\bm{x},s) \in Q \times \N^d \times \Gamma^*$.
The {\em one-step relation} $\to$ is the smallest binary relation on configurations such that
for all $(p,\bm{v},\alpha,q) \in T$ and $\bm{x} \in \N^d$ with $\bm{x}+\bm{v} \ge \bm{0}$ we have:
(i) If $\alpha \in \Gamma \cup \{\varepsilon\}$ then $(p,\bm{x},s) \to (q,\bm{x}+\bm{v},s \alpha)$
(ii) if $\alpha = \bar a$ then $(p,\bm{x},sa) \to (q,\bm{x}+\bm{v},s)$.
Its transitive-reflexive closure is denoted by $\xrightarrow{*}$.
We say that $\P$ is {\em bidirected} if $(p,\bm{v},\alpha,q) \in T$ implies $(q,-\bm{v},\bar \alpha,p) \in T$
where we set $\bar{\bar a} = a$ for $a \in \Gamma$ and $\bar \varepsilon = \varepsilon$.
The {\em reachability problem} for bidirected PVASS asks:
Given a bidirected PVASS $\P$ and two states $s,t$, does $(s,\bm{0},\varepsilon) \xrightarrow{*} (t,\bm{0},\varepsilon)$ hold?

The counter updates $\bm{u}$ in a PVASS transition $(p,\bm{u},q)$ can be given in either unary or binary encoding
since there are logspace translations in both directions:
To add a binary encoded number $u$ to a counter we push the binary notation of $u$ to the stack,
and repeatedly decrement the stack counter while incrementing $u$.
Since this computation is deterministic, the simulation also works for bidirected PVASS.

For the Ackermann upper bound it is convenient to use pushdown VASS with a single state.
A {\em pushdown VAS (PVAS)} $\P = (\Gamma,T)$ in dimension $d$ consists of a finite stack alphabet $\Gamma$
and a finite set of transitions $T \subseteq \N^d \times \N^d \times (\Gamma \cup \bar \Gamma \cup \{\varepsilon\})$.
Here, a configuration is a pair $(\bm{x},s) \in \N^d \times \Gamma^*$.
The effect of a transition $(\bm{u},\bm{v},\alpha)$ is subtracting $\bm{u}$ from the $d$ counters,
assuming that the counters stay non-negative, and then adding $\bm{v}$.
A PVAS $\P$ is bidirected if $(\bm{u},\bm{v},a) \in T$ implies $(\bm{v},\bm{u},\bar a) \in T$.
A bidirected PVASS in dimension $d$ can be simulated by a bidirected PVAS in dimension $d+2$
where the two additional counters add up to the number of states and specify the current state.
Hence, one can reduce the reachability problem for bidirected PVASS to the reachability problem for bidirected PVAS:
Given a bidirected PVAS $\P$ and two vectors $\bm{s},\bm{t} \in \N^d$,
does $(\bm{s},\varepsilon) \xrightarrow{*} (\bm{t},\varepsilon)$ hold?

\section{Decidability}\label{sec:decidability}
In this section, we present a simple and self-contained proof that reachability
is decidable in bidirected PVASS. Consider the reachability
relation between configurations with empty stack.
For any states $p,q$, define the set $R_{p,q}\subseteq\N^d\times\N^d$ with
\[ R_{p,q}=\{(\bm{u},\bm{v}) \mid \bm{u},\bm{v}\in\N^d,~(p,\bm{u},\varepsilon)\xrightarrow{*}(q,\bm{v},\varepsilon) \}. \]
We will prove that each $R_{p,q}$ is semilinear, for which we rely on the fact that these sets are slices.
A \emph{slice} is a subset $S\subseteq\N^k$ such that if
$\bm{u},\bm{u}+\bm{v},\bm{u}+\bm{w}\in S$ for some $\bm{u},\bm{v},\bm{w}\in\N^k$, then $\bm{u}+\bm{v}+\bm{w}\in S$.
Observe that each $R_{p,q}\subseteq\N^{2d}$ is a slice. This is because if $(\bm{u},\bm{v}), (\bm{u}+\bm{u}_1,\bm{v}+\bm{v}_1), (\bm{u}+\bm{u}_2,\bm{v}+\bm{v}_2)\in R_{p,q}$, then there is a run
\[ (p,\bm{u}+\bm{u}_1+\bm{u}_2,\varepsilon)\xrightarrow{*}(q,\bm{v}+\bm{v}_1+\bm{u}_2,\varepsilon)\xrightarrow{*}(p,\bm{u}+\bm{v}_1+\bm{u}_2,\varepsilon)\xrightarrow{*}(q,\bm{v}+\bm{v}_1+\bm{v}_2,\varepsilon), \]
where the middle part exists due to bidirectedness.
Thus, the pair $(\bm{u}+\bm{u}_1+\bm{u}_2,\bm{v}+\bm{v}_1+\bm{v}_2)$ belongs to $R_{p,q}$.
The following was first shown in \cite[Proposition 7.3]{EilenbergSchutzenberger1969}.
\begin{theorem}[Eilenberg \& Sch\"{u}tzenberger 1969]\label{slice-semilinear}
	Every slice is semilinear.
\end{theorem}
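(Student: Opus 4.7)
I would prove this by induction on the dimension $k$.

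For $k = 1$, a non-empty slice $S \subseteq \mathbb{N}$ with minimum $m$ equals $m + M$ where $M := \{v \in \mathbb{N} : m+v \in S\}$ is a submonoid of $(\mathbb{N},+)$ by the slice property; since submonoids of $\mathbb{N}$ are eventually periodic, $M$ and hence $S$ is semilinear. For the inductive step with $S \subseteq \mathbb{N}^k$ a slice, I apply Dickson's lemma to obtain $\min(S) = \{\mathbf{b}_1, \ldots, \mathbf{b}_n\}$ finite. Setting $M_i := \{\mathbf{v} \in \mathbb{N}^k : \mathbf{b}_i + \mathbf{v} \in S\}$, the slice property taken with $\mathbf{u} = \mathbf{b}_i$ shows each $M_i$ is a submonoid of $\mathbb{N}^k$ that is itself a slice, and clearly $S = \bigcup_{i=1}^n (\mathbf{b}_i + M_i)$. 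It thus suffices to prove that every submonoid-slice $M \subseteq \mathbb{N}^k$ is semilinear.

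Let $G := \langle M \rangle_{\mathbb{Z}} \le \mathbb{Z}^k$, a free abelian group of rank $r \le k$. In the \emph{full-rank case} $r = k$, for each standard basis vector $\mathbf{e}_j$ we may write $\mathbf{e}_j = \mathbf{a}_j - \mathbf{b}_j$ with $\mathbf{a}_j, \mathbf{b}_j \in M$. The key lemma is that, for any $\mathbf{u} \in M$ with $\mathbf{u} \ge \mathbf{b}_j$, applying the slice property to $\mathbf{b}_j$, $\mathbf{b}_j + \mathbf{e}_j = \mathbf{a}_j$, and $\mathbf{b}_j + (\mathbf{u} - \mathbf{b}_j) = \mathbf{u}$ (all in $M$) yields $\mathbf{u} + \mathbf{e}_j \in M$. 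Iterating over coordinates, any $\mathbf{u} \in M$ that coordinate-wise dominates $\mathbf{c} := \sum_j \mathbf{b}_j \in M$ satisfies $\mathbf{u} + \mathbb{N}^k \subseteq M$, so in particular $\mathbf{c} + \mathbb{N}^k \subseteq M$. The residue $M \setminus (\mathbf{c} + \mathbb{N}^k)$ lies in the union of slabs $\{\mathbf{x} \in \mathbb{N}^k : x_j < c_j\}$; each slab splits into $c_j$-many sections $M \cap \{x_j = n\}$ with $n \in [0, c_j)$, and a short verification shows each section is a slice in $\mathbb{N}^{k-1}$ (because the ``difference'' vectors $\mathbf{v}, \mathbf{w}$ in the slice condition must be supported off coordinate $j$, preserving $x_j = n$). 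By induction each section is semilinear, and hence so is $M$.

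The remaining case $r < k$ is the main obstacle. Here $M$ is contained in a strict sublattice of $\mathbb{Z}^k$ and the key lemma does not apply, since no $\mathbf{e}_j$ need lie in $G$. The plan is to adapt the basis of $\mathbb{Z}^k$ to $G$ via the Smith normal form, so that $M$ is supported on just $r$ coordinates in the new basis and the positivity constraint $M \subseteq \mathbb{N}^k$ translates to containment in a polyhedral cone $C \subseteq \mathbb{Z}^r$. Decomposing $C$ into finitely many simplicial subcones and restricting to each, the full-rank argument applies in dimension $r$, yielding semilinearity of $M$ by a secondary induction on $r$. The delicate point is verifying that the slice property transfers cleanly under this change of basis and cone decomposition; careful bookkeeping of the ``difference'' vectors together with a re-application of the Dickson reduction inside each subcone should suffice to close the induction.
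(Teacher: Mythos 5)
Your reduction to submonoid--slices via Dickson's lemma is correct, and the special case $G = \langle M \rangle_{\Z} = \Z^k$ does work as you describe: writing $\bm{e}_j = \bm{a}_j - \bm{b}_j$ with $\bm{a}_j, \bm{b}_j \in M$ and applying the slice property to the triple $\bm{b}_j$, $\bm{a}_j$, $\bm{u}$ indeed gives $\bm{u} + \bm{e}_j \in M$, and then $\bm{c} + \N^k \subseteq M$ for $\bm{c} = \sum_j \bm{b}_j$, reducing the residue to slices in dimension $k-1$. This is a pleasant and genuinely different strategy from the paper's.

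However, there is a gap, and it occurs one step earlier than you think. Full rank $r = k$ does \emph{not} imply $G = \Z^k$. Take $M = \{(a,b) \in \N^2 : a + b \text{ even}\}$, which is a submonoid and a slice (it is $G \cap \N^2$ for a subgroup $G$, hence subtractive); here $G = \{(a,b) \in \Z^2 : a + b \text{ even}\}$ has rank $2$, yet $\bm{e}_1 \notin G$, so $\bm{e}_1$ cannot be written as a difference of two elements of $M$ and your key lemma does not apply. Thus both the non-full-rank case \emph{and} the full-rank-but-proper-sublattice case need the Smith normal form machinery, and that machinery is exactly what your write-up leaves unverified. Concretely: after the unimodular change of basis, the positivity constraint $\N^k$ becomes an arbitrary rational polyhedral cone $C$, so $M$ is no longer a slice with respect to the coordinate cone; you then need to subdivide $C$ into \emph{unimodular} simplicial cones (simplicial alone does not suffice, as the generating rays need not form a $\Z$-basis), re-coordinatize each, check that the restriction $M \cap C_i$ is again a monoid-slice in the new coordinates, and handle the fact that $M \cap C_i$ may generate a group of strictly smaller rank than $r$, requiring a secondary induction. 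Each of these steps is plausible but none is carried out, and your own phrasing (``the delicate point,'' ``should suffice'') acknowledges that the hardest part of the theorem is being deferred rather than proven.

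For comparison, the paper avoids all of this lattice geometry. It associates to each $\bm{u} \in S$ the submonoid $M_{\bm{u}}$ generated by the minimal nonzero elements of $S - \bm{u}$, observes $\bm{u} + M_{\bm{u}} \subseteq S$ directly from the slice property, shows that $\bm{u} \le \bm{v}$ with $M_{\bm{u}} = M_{\bm{v}}$ forces $\bm{v} + M_{\bm{v}} \subseteq \bm{u} + M_{\bm{u}}$, and then deduces that finitely many terms $\bm{u} + M_{\bm{u}}$ cover $S$ by two applications of Dickson's lemma (one for a monotone subsequence of the $\bm{u}$'s, one for stabilization of the upward closures $(S - \bm{u}) \!\uparrow$). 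No change of basis, no cones, no case split on rank. If you want to salvage your approach you would need to fully execute the toric-style subdivision argument; the paper's route is shorter precisely because it trades the explicit combinatorial description for a non-constructive wqo compactness argument.
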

This seems to be stronger than the somewhat better-known fact that each
congruence on $\N^d$ is semilinear: Observe that every congruence on
$\N^d$, seen as a subset of $\N^{2d}$, is a slice. In the case of
congruences, a relatively simple proof was obtained by
Hirshfeld~\cite{hirshfeld1994congruences}. We present a proof of
\cref{slice-semilinear} that combines ideas from both
\cite{EilenbergSchutzenberger1969} and \cite{hirshfeld1994congruences} and is
(in our opinion) simpler than each. 

\newcommand{\ucl}[1]{#1\mathop{\uparrow}}
For a set $X\subseteq\N^k$, let $\min X$ be the set of minimal elements of $X$,
with respect to the usual component-wise ordering $\le$ on $\N^k$. Since this
ordering is a well-quasi ordering, $\min X$ is finite for every set $X$.
Suppose $S\subseteq\N^k$ is a slice. For each $\bm{u}\in S$, let
$S-\bm{u}:=\{\bm{v}\in\N^k \mid \bm{u}+\bm{v}\in S\}$.
Then $\bm{u} \le \bm{v}$ implies $S-\bm{u} \subseteq S-\bm{v}$.
Consider for each $\bm{u}\in S$ the submonoid
\[ M_{\bm{u}}=(\min (S-\bm{u}\setminus\{\bm{0}\}))^*. \]
In other words, $M_{\bm{u}}$ is the submonoid of $\N^k$ generated by the
non-zero minimal elements of $S-\bm{u}$. Note that for $\bm{u},\bm{v}\in S$, we
have $M_{\bm{u}}=M_{\bm{v}}$ if and only if
$\ucl{(S-\bm{u}\setminus\{\bm{0}\})}=\ucl{(S-\bm{v}\setminus\{\bm{0}\})}$.
Since $S$ is a slice, we have $\bm{u}+M_{\bm{u}}\subseteq S$ for every $\bm{u}
\in S$.  
Since $\bm{u}\in\bm{u}+M_{\bm{u}}$, we trivially have
\[ S=\bigcup_{\bm{u}\in S} \bm{u}+M_{\bm{u}}. \]
Since each $\bm{u}+M_{\bm{u}}$ is semilinear, it suffices to show that $S$ is
covered by finitely many sets $\bm{u}+M_{\bm{u}}$. We first observe that if
$\bm{u}\le\bm{v}$ and $M_{\bm{u}}=M_{\bm{v}}$, then $\bm{u}+M_{\bm{u}}$ already
covers $\bm{v}+M_{\bm{v}}$.
\begin{lemma}\label{smaller-cover}
	Let $\bm{u},\bm{v}\in S$. If $\bm{u}\le\bm{v}$ and $M_{\bm{u}}=M_{\bm{v}}$, then $\bm{v}+M_{\bm{v}}\subseteq\bm{u}+M_{\bm{u}}$.
\end{lemma}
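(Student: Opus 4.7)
The plan is to reduce the containment $\bm v + M_{\bm v} \subseteq \bm u + M_{\bm u}$ to the single claim that the displacement $\bm w := \bm v - \bm u \in \N^k$ lies in $M_{\bm u}$. Indeed, the hypothesis $M_{\bm v} = M_{\bm u}$ immediately gives $\bm v + M_{\bm v} = \bm u + \bm w + M_{\bm u}$, which collapses into $\bm u + M_{\bm u}$ as soon as $\bm w \in M_{\bm u}$.

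My approach to proving $\bm w \in M_{\bm u}$ is well-founded induction on $\bm w$ in the componentwise order on $\N^k$. The base case $\bm w = \bm 0$ is immediate. If $\bm w \ne \bm 0$, then $\bm w \in (S - \bm u) \setminus \{\bm 0\}$, so I can pick some $\bm p \in \min(S - \bm u \setminus \{\bm 0\})$ with $\bm p \le \bm w$ and try to shift attention to the pair $(\bm u + \bm p, \bm v)$, whose displacement $\bm w - \bm p$ is strictly smaller. The main obstacle will be that the inductive hypothesis for the shifted pair speaks about $M_{\bm u + \bm p}$, which has no a priori reason to equal $M_{\bm u}$.

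To overcome this I would first establish an invariance principle: for every $\bm u' \in S$ lying in the interval $\bm u \le \bm u' \le \bm v$, one has $M_{\bm u'} = M_{\bm u} = M_{\bm v}$. This follows directly from the monotonicity ``$\bm a \le \bm b$ in $S$ implies $S - \bm a \subseteq S - \bm b$'' already noted in the paper (itself an immediate consequence of the slice property), which sandwiches $\ucl{(S - \bm u' \setminus \{\bm 0\})}$ between the two outer upward closures $\ucl{(S - \bm u \setminus \{\bm 0\})}$ and $\ucl{(S - \bm v \setminus \{\bm 0\})}$; these two are equal by hypothesis via the equivalence stated just before the lemma, forcing equality of the middle closure as well. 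With this in hand the shift works cleanly: since $\bm u + \bm p \in S$ and $\bm u \le \bm u + \bm p \le \bm v$, the invariance gives $M_{\bm u + \bm p} = M_{\bm u}$; the inductive hypothesis applied to $(\bm u + \bm p, \bm v)$ then yields $\bm w - \bm p \in M_{\bm u + \bm p} = M_{\bm u}$, and hence $\bm w = \bm p + (\bm w - \bm p) \in M_{\bm u}$, closing the induction and the proof.
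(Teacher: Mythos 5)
Your proof is correct and mirrors the paper's argument: the same invariance principle that $M_{\bm u'} = M_{\bm u} = M_{\bm v}$ holds for every $\bm u' \in S$ in the interval $[\bm u, \bm v]$, followed by an induction that peels off a minimal generator $\bm p$ and recurses on the pair $(\bm u + \bm p, \bm v)$. You are slightly more explicit in two places — spelling out the reduction of the set containment to the membership $\bm v - \bm u \in M_{\bm u}$, and phrasing the induction as well-founded induction on the componentwise order (which is cleaner than the paper's ``induction on $\|\bm v - \bm u\|$'', since the max norm need not strictly decrease when subtracting $\bm p$) — but the substance is identical.
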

\begin{proof}
	We will use the following claim: For every $\bm{w}\in S$ with
	$\bm{u}\le\bm{w}\le\bm{v}$, we have $M_{\bm{u}}=M_{\bm{w}}=M_{\bm{v}}$.
	Indeed, since $M_{\bm{u}}=M_{\bm{v}}$, we have
	$\min(S-\bm{u}\setminus\{\bm{0}\})=\min(S-\bm{v}\setminus\{\bm{0}\})$.  Moreover, since $S$ is a slice, we
	have $S-\bm{u}\subseteq S-\bm{w}\subseteq S-\bm{v}$. Therefore,
	$\min(S-\bm{w}\setminus\{\bm{0}\})$ coincides with $\min(S-\bm{u}\setminus\{\bm{0}\})$ and $\min(S-\bm{v}\setminus\{\bm{0}\})$,
	which implies $M_{\bm{w}}=M_{\bm{u}}=M_{\bm{v}}$.

	Let us prove the lemma.   We proceed by induction on $\|\bm{v}-\bm{u}\|$.
	If $\bm{u}=\bm{v}$, then we are done. Otherwise, there exists an
	$\bm{m}\in \min(S-\bm{u}\setminus\{\bm{0}\})$ such that
	$\bm{u}+\bm{m}\le\bm{v}$.  By our claim, we have
	$M_{\bm{u}}=M_{\bm{u}+\bm{m}}=M_{\bm{v}}$. Therefore, induction implies
	$\bm{v}\in \bm{u}+\bm{m}+M_{\bm{u}+\bm{m}}$. But since $\bm{m}\in
	M_{\bm{u}}$ and $M_{\bm{u}+\bm{m}}\subseteq M_{\bm{u}}$, this implies
	$\bm{v}\in \bm{u}+M_{\bm{u}}$.
\end{proof}

The following implies semilinearity of $S$.
\begin{lemma}
	There is a finite set $F\subseteq S$ such that $S=\bigcup_{\bm{u}\in F} \bm{u}+M_{\bm{u}}$.
\end{lemma}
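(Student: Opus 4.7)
The plan is to argue by contradiction: assume no finite $F \subseteq S$ covers $S$ in the required way. Then one can greedily construct an infinite sequence $\bm{u}_1, \bm{u}_2, \ldots$ in $S$ satisfying $\bm{u}_{n+1} \notin \bigcup_{i \le n} (\bm{u}_i + M_{\bm{u}_i})$ for every $n$; the goal is then to exhibit indices $i < j$ with $\bm{u}_i \le \bm{u}_j$ \emph{and} $M_{\bm{u}_i} = M_{\bm{u}_j}$, so that \cref{smaller-cover} places $\bm{u}_j$ inside $\bm{u}_i + M_{\bm{u}_i}$ and produces a contradiction.

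The key tool is that the well-quasi-ordering of $\le$ on $\N^k$ yields two standard consequences: first, every infinite sequence has an infinite $\le$-nondecreasing subsequence; second, every ascending chain of upward closed subsets of $\N^k$ stabilizes. I would apply the first to extract $\bm{u}_{i_1} \le \bm{u}_{i_2} \le \cdots$ from the constructed sequence, and then consider the upward closures $U_k \coloneqq \ucl{(S - \bm{u}_{i_k} \setminus \{\bm{0}\})}$. Using the already-observed monotonicity $\bm{u} \le \bm{v} \Rightarrow S - \bm{u} \subseteq S - \bm{v}$ (from the slice property), this forms an ascending chain $U_1 \subseteq U_2 \subseteq \cdots$, which by the second fact must stabilize at some index $N$.

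Stabilization $U_N = U_{N+1}$ forces $\min(S - \bm{u}_{i_N} \setminus \{\bm{0}\}) = \min(S - \bm{u}_{i_{N+1}} \setminus \{\bm{0}\})$, and therefore $M_{\bm{u}_{i_N}} = M_{\bm{u}_{i_{N+1}}}$. Combined with $\bm{u}_{i_N} \le \bm{u}_{i_{N+1}}$, \cref{smaller-cover} yields $\bm{u}_{i_{N+1}} \in \bm{u}_{i_N} + M_{\bm{u}_{i_N}}$, contradicting the defining property of the sequence since $i_N < i_{N+1}$.

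The main obstacle I expect is the clean invocation of the two wqo consequences, particularly the ascending chain condition for upward closed sets in $(\N^k,\le)$; both are classical but each deserves a line of justification and should arguably be stated as auxiliary facts before entering the main argument. Everything else is a direct application of \cref{smaller-cover} and the slice property already established.
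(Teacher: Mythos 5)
Your proposal is correct and mirrors the paper's own proof almost exactly: both argue by contradiction, construct a sequence whose terms are not covered by earlier $\bm{u}_i + M_{\bm{u}_i}$, extract a $\le$-nondecreasing subsequence by Dickson's lemma, observe that the upward closures $\ucl{(S - \bm{u}_{i_k}\setminus\{\bm{0}\})}$ form an ascending chain that stabilizes (again by Dickson's lemma), and finish via \cref{smaller-cover}. No meaningful difference in strategy or structure.
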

\begin{proof}
	Suppose not. Then there is an infinite sequence
	$\bm{u}_1,\bm{u}_2,\ldots\in S$ such that each set $\bm{u}_i+M_{\bm{u}_i}$
	contributes a new element.
	By Dickson's lemma
	$\bm{u}_1,\bm{u}_2,\ldots$ contains a subsequence $\bm{v}_1, \bm{v}_2, \ldots$ with $\bm{v}_i \le \bm{v}_{i+1}$ for all $i \ge 1$.
	Since then $S-\bm{v}_1\subseteq S-\bm{v}_2\subseteq\cdots$, the sequence
	$\ucl{(S-\bm{v}_1 \setminus\{\bm{0}\} )}\subseteq \ucl{(S-\bm{v}_2 \setminus\{\bm{0}\})}\subseteq\cdots$ becomes stationary,
	again by Dickson's lemma,
	and therefore also the sequence $M_{\bm{v}_1},M_{\bm{v}_2},\ldots$. By
	\cref{smaller-cover}, this means that only finitely many terms in the
	sequence $\bm{v}_1+M_{\bm{v}_1},\bm{v}_2+M_{\bm{v}_2},\ldots$
	contribute new elements, a contradiction.
\end{proof}

\subparagraph{Saturation invariants} We have seen that the reachability
relations $R_{p,q}$ are all semilinear. However, since the semilinearity proof
is non-constructive, this does not explain how to decide reachability.
Nevertheless, we shall use semilinearity to show that in case of
non-reachability, there exists a certificate.  This yields a decision procedure
consisting of two semi-algorithms in the style of Leroux's algorithm for
reachability in VASS~\cite{DBLP:conf/popl/Leroux11}: One semi-algorithm
enumerates potential runs, and one enumerates potential certificates for
non-reachability.

We assume that we are given a
bidirected $d$-dimensional PVASS with state set $Q$ and stack alphabet
$\Gamma$.  We may assume that all transitions are of the form
$p\xrightarrow{\gamma}q$ or $p\xrightarrow{\bar{\gamma}}q$ for
$\gamma\in\Gamma$ or $p\xrightarrow{\bm{v}}q$ for $\bm{v}\in\Z^d$.  
Our certificates for non-reachability will be in the form of what we call
saturation invariants. Imagine a (non-terminating) naive saturation algorithm
that attempts to compute the sets $R_{p,q}$ by adding vector pairs one-by-one
to finite sets $F_{p,q}$.  It would start with $F_{p,q}=\emptyset$ and then add
pairs: For each transition $p\xrightarrow{\bm{v}}q$ and each vector
$\bm{u}\in\N^d$ with $\bm{u}+\bm{v}\in\N^d$, it would add the pair
$(\bm{u},\bm{u}+\bm{v})$ to $F_{p,q}$. Moreover, if $(\bm{u},\bm{v})\in
F_{p,q}$ and $(\bm{v},\bm{w})\in F_{q,r}$, it would add $(\bm{u},\bm{w})$ to
$F_{p,r}$.  Finally, if there are transitions $p\xrightarrow{\gamma}p'$ and
$q'\xrightarrow{\bar{\gamma}}q$ and there is a $(\bm{u},\bm{v})\in F_{p',q'}$,
then it would add $(\bm{u},\bm{v})$ to $F_{p,q}$.

Intuitively, a saturation invariant is a forward inductive invariant of this
naive saturation algorithm. Let us make this precise.  For subsets
$R_1,R_2\subseteq\N^d\times\N^d$, we define
\[ R_1\circ R_2 = \{(\bm{u},\bm{w})\in\N^d\times\N^d \mid \exists \bm{v}\in\N^d\colon (\bm{u},\bm{v})\in R_1,~(\bm{v},\bm{w})\in R_2\}. \]
A \emph{saturation invariant} consists of a family $(I_{p,q})_{(p,q)\in Q^2}$
of sets $I_{p,q}\subseteq\N^d\times\N^d$ for which
\begin{enumerate}
	\item For each transition $p\xrightarrow{\bm{v}} q$, $\bm{v}\in\Z^d$, each $\bm{u}\in\N^d$ with $\bm{u}+\bm{v}\in\N^d$, we have $(\bm{u}, \bm{u}+\bm{v})\in I_{p,q}$.
	\item For each $p,q,r\in Q$, we have $I_{p,q}\circ I_{q,r}\subseteq I_{p,r}$.
	\item For each $p,p',q,q'\in Q$ for which there are transitions $p\xrightarrow{\gamma}p'$, $q'\xrightarrow{\bar{\gamma}} q$ for some $\gamma\in\Gamma$,
		we have $I_{p',q'}\subseteq I_{p,q}$.
\end{enumerate}
There is a natural ordering of such families $(I_{p,q})_{(p,q)\in Q^2}$ 
defined by inclusion: We write $(I_{p,q})_{(p,q)\in Q^2}\le (J_{p,q})_{(p,q)\in Q^2}$,
if $I_{p,q}\subseteq J_{p,q}$ for each $p,q\in Q$. In this sense, we can speak
of a smallest saturation invariant.
\begin{lemma}\label{decidability-smallest-saturation-invariant}
	The family $(R_{p,q})_{(p,q)\in Q^2}$ is the smallest saturation invariant.
\end{lemma}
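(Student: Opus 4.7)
The plan is to establish the two inclusions separately, following the structure of an induction on run length that mirrors the three saturation rules.

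First, I verify that $(R_{p,q})_{(p,q)\in Q^2}$ itself satisfies conditions 1--3. Condition 1 is immediate: a single VASS transition $p\xrightarrow{\bm v}q$ is a one-step empty-stack run, giving $(\bm u,\bm u+\bm v)\in R_{p,q}$ whenever $\bm u+\bm v\in\N^d$. Condition 2 follows by concatenating two empty-stack runs. For condition 3, given transitions $p\xrightarrow{\gamma}p'$ and $q'\xrightarrow{\bar\gamma}q$ and an empty-stack run $\rho\colon(p',\bm u,\varepsilon)\xrightarrow{*}(q',\bm v,\varepsilon)$, I lift $\rho$ by prepending $\gamma$ to the bottom of every intermediate stack; since the heights along $\rho$ never go below $0$, this bottom $\gamma$ is never touched, so the lift is a valid run $(p',\bm u,\gamma)\xrightarrow{*}(q',\bm v,\gamma)$. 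Sandwiching with the push and pop then witnesses $(\bm u,\bm v)\in R_{p,q}$.

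Second, for minimality, I fix a saturation invariant $(I_{p,q})$ and prove $R_{p,q}\subseteq I_{p,q}$ by induction on the length $n$ of a witnessing run $\rho\colon(p,\bm u,\varepsilon)\xrightarrow{*}(q,\bm v,\varepsilon)$. The decomposition proceeds by looking at the positions along $\rho$ where the stack is empty. If $\rho$ passes through a configuration with empty stack strictly between its endpoints, split $\rho$ there into two shorter empty-stack runs, apply the induction hypothesis to each, and combine via condition~2. Otherwise the stack is empty only at the two endpoints; then $n\ge 2$, the first step must be a push $p\xrightarrow{\gamma}p'$ (the only way to leave an empty stack), and the last step must be the matching pop $q'\xrightarrow{\bar\gamma}q$ (since the $\gamma$ placed at the bottom stays on the bottom throughout and must be the letter removed by the final pop). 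Stripping this bottom $\gamma$ from every intermediate configuration yields a shorter empty-stack run $(p',\bm u,\varepsilon)\xrightarrow{*}(q',\bm v,\varepsilon)$, so by induction $(\bm u,\bm v)\in I_{p',q'}$, and condition~3 transports this to $I_{p,q}$. In the remaining case $n=1$, the step must be a VASS transition (a single push or pop leaves the stack nonempty at one end), so condition~1 applies directly.

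The base case $n=0$ requires $(\bm u,\bm u)\in I_{p,p}$; we handle it by assuming without loss of generality that each state carries a zero-effect self-loop, which preserves bidirectedness and does not alter reachability, so that condition~1 supplies the diagonal. (Alternatively, any outgoing transition $p\xrightarrow{\bm v'}q'$ together with its bidirected inverse $q'\xrightarrow{-\bm v'}p$ gives $(\bm u,\bm u)\in I_{p,p}$ via conditions~1 and~2.)

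The main obstacle I anticipate is the push/pop matching argument in the last-case decomposition: one must verify that the unique empty-stack-to-empty-stack run with no interior empty-stack configuration really does begin with a push whose letter sits at the bottom until being removed by the final pop. This is a standard stack-profile argument, but it is the conceptual heart of why condition~3, which only speaks of \emph{some} matched push/pop pair, suffices to simulate arbitrary stack behaviour when composed with conditions~1 and~2.
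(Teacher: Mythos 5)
Your proof is correct and fleshes out exactly the argument the paper only sketches: first verify that $(R_{p,q})$ satisfies conditions 1--3 (using the prepend-$\gamma$-at-the-bottom lifting for condition~3), then prove $R_{p,q}\subseteq I_{p,q}$ by induction on run length, splitting at interior empty-stack configurations for condition~2 and peeling off the matched outermost push/pop pair for condition~3. This is the same decomposition and the same induction the paper has in mind.

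You also caught a genuine subtlety that the paper's one-line proof glosses over: conditions 1--3 by themselves do not force $(\bm{u},\bm{u})\in I_{p,p}$, so the base case $n=0$ really does need an extra step --- e.g., in a PVASS with no counter transitions at all, the all-empty family would satisfy 1--3 yet fail to contain the diagonal, and the inner run in the push/pop-stripping case can have length zero even when the outer run does not. Your primary fix (normalize by adding a $\bm{0}$-effect self-loop at every state, which is its own bidirected partner and changes neither reachability nor any $R_{p,q}$) handles this cleanly, so that condition~1 supplies the diagonal. The parenthetical alternative, however, does not quite close the gap: applying conditions~1 and~2 to $p\xrightarrow{\bm{v}'}q'$ and $q'\xrightarrow{-\bm{v}'}p$ only gives $(\bm{u},\bm{u})\in I_{p,p}$ for those $\bm{u}$ with $\bm{u}+\bm{v}'\ge\bm{0}$, and it also presumes the chosen outgoing transition is a counter transition rather than a stack one --- so it does not produce the full diagonal and you should rely on the self-loop normalization.
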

\begin{proof}
	By induction on the length of a run, it follows that $(R_{p,q})_{(p,q)\in Q^2}$
	is included in every saturation invariant. Moreover, $(R_{p,q})_{(p,q)\in Q^2}$
	is clearly a saturation invariant itself.
\end{proof}

Our certificates will consist of saturation invariants defined in Presburger
arithmetic.  A family $(I_{p,q})_{(p,q)\in Q^2}$ is \emph{Presburger-definable}
if for each $(p,q)\in Q^2$, the set $I_{p,q}$ is semilinear. According to
\cref{slice-semilinear}, the family $(R_{p,q})_{(p,q)\in Q^2}$ is
Presburger-definable.  Therefore, the following is a direct consequence of
\cref{decidability-smallest-saturation-invariant}.
\begin{theorem}\label{thm-invariants}
	For each $s,t\in Q$, we have $(\bm{0},\bm{0})\notin R_{s,t}$ if and
	only if there exists a Presburger-definable saturation invariant
	$(I_{p,q})_{(p,q)\in Q^2}$ such that $(\bm{0},\bm{0})\notin I_{s,t}$.
\end{theorem}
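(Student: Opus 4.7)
The plan is to derive the theorem as a short corollary of the two ingredients already assembled in this section: that every $R_{p,q}$ is a slice (hence semilinear by \cref{slice-semilinear}), and that $(R_{p,q})_{(p,q)\in Q^2}$ is the smallest saturation invariant (\cref{decidability-smallest-saturation-invariant}). Both directions of the biconditional fall out of these facts, so there is no real obstacle; the work is mainly to set things out cleanly.

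For the backward direction, I would assume that $(I_{p,q})_{(p,q)\in Q^2}$ is some Presburger-definable saturation invariant with $(\bm{0},\bm{0})\notin I_{s,t}$. By \cref{decidability-smallest-saturation-invariant}, the family $(R_{p,q})_{(p,q)\in Q^2}$ is contained in every saturation invariant, in particular $R_{s,t}\subseteq I_{s,t}$. Hence $(\bm{0},\bm{0})\notin R_{s,t}$, as desired. Note this direction does not even use Presburger-definability of $(I_{p,q})$; it only requires that some saturation invariant excludes $(\bm{0},\bm{0})$ at the $(s,t)$-component.

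For the forward direction, I would use the witness family $(R_{p,q})_{(p,q)\in Q^2}$ itself. By \cref{decidability-smallest-saturation-invariant}, this family is already a saturation invariant. Each $R_{p,q}\subseteq\N^{2d}$ is a slice (as argued just before \cref{slice-semilinear} using bidirectedness), so by \cref{slice-semilinear} it is semilinear, i.e.\ Presburger-definable. Thus $(R_{p,q})_{(p,q)\in Q^2}$ is a Presburger-definable saturation invariant, and by assumption $(\bm{0},\bm{0})\notin R_{s,t}$.

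Combining the two directions gives the equivalence. The only subtle point worth double-checking is that Presburger-definability is applied componentwise, i.e.\ one Presburger formula per pair $(p,q)\in Q^2$, since $Q^2$ is finite this yields a finite family of Presburger formulas and no uniformity issue arises. From this theorem one obtains the decision procedure announced before the statement: one semi-algorithm enumerates runs of the PVASS to detect reachability, the other enumerates tuples of Presburger formulas and checks the (decidable) saturation-invariant conditions (1)--(3) together with $(\bm{0},\bm{0})\notin I_{s,t}$ to detect non-reachability.
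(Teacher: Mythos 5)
Your proof is correct and follows exactly the paper's route: the paper presents \cref{thm-invariants} as a direct consequence of \cref{decidability-smallest-saturation-invariant} together with the observation that each $R_{p,q}$ is a slice and hence semilinear by \cref{slice-semilinear}. You simply spell out the two directions explicitly, which the paper leaves implicit.
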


This yields our algorithm: One semi-algorithm enumerates transition sequences
and terminates if one of them is a run witnessing
$(s,\bm{0},\varepsilon)\xrightarrow{*}(t,\bm{0},\varepsilon)$.  The other
semi-algorithm enumerates Presburger-definable families $(I_{p,q})_{(p,q)\in
Q^2}$ in the form of Presburger formulas. Using Presburger arithmetic, it is
then easy to check whether (i)~$(I_{p,q})_{(p,q)\in Q^2}$ is a saturation
invariant and (ii)~$(\bm{0},\bm{0})\notin I_{s,t}$. If a saturation invariant
is found, the semi-algorithm reports non-reachability. By
\cref{thm-invariants}, one of the two semi-algorithms must terminate.

\section{Ackermann upper bound}\label{sec:ackermann}

In this section, we show that reachability in bidirected PVASS is solvable in
Ackermann time in the general case and in primitive recursive complexity in
every fixed dimension.

One way to avoid enumeration in the algorithm of \cref{sec:decidability} would
be to start with the semilinear one-step relation described in the first
condition of saturation invariants, and then to enlarge it according to the
second and third condition. Moreover, one could take the slice closure (the
smallest slice that includes the current set) after each enlargement. Since
slices satisfy an ascending chain condition~\cite[Corollary
12.3]{EilenbergSchutzenberger1969}, this would ensure termination.  In fact,
computing the slice closure of a semilinear set is possible with an algorithm
by Grabowski~\cite{grabowski1981}. Unfortunately, the latter is itself based on
enumeration and we are not aware of any complexity bounds for computing slice
closures.  Therefore, we use an analogous algorithm that uses congruences
instead of slices. Since congruences can be encoded in polynomial ideals, we
can tap into the rich toolbox of Gröbner bases to compute the congruence generated
by a semilinear set.

\subsection{The saturation algorithm}

In the following we will work with pushdown VAS instead of pushdown VASS.
Our decision procedure for bidirected reachability relies on the crucial fact
that the reachability relation
$R_\P = \{ (\bm{s},\bm{t}) \in \N^d \times \N^d \mid (\bm{s},\varepsilon) \xrightarrow{*} (\bm{t},\varepsilon) \}$
of a bidirected pushdown VAS $\P$ is a congruence:
It is always reflexive, transitive and additive, even for directed pushdown VAS,
and symmetric for bidirected systems.
Therefore, whenever we have found an underapproximation $R \subseteq R_\P$
we can replace $R$ by the smallest congruence containing $R$.
The smallest congruence containing a set $R \subseteq \N^d \times \N^d$ is denoted by $\Cong(R)$.
We also say that $R$ is a {\em basis of} (or {\em generates}) $\Cong(R)$.
Recall that every congruence on $\N^d$ is a slice.
Therefore, congruences are semilinear and ascending chains of congruences stabilize.

\begin{algorithm}[t]
\KwData{Bidirected $d$-dim. PVAS $\P = (\Gamma,T)$}
$R_0 := \Cong(\{ (\bm{u},\bm{v}) \mid (\bm{u},\bm{v},\varepsilon) \in T \})$\;
\For{$i = 1, 2, \dots$}{
  $R_i \leftarrow R_{i-1}$\;
  \For{$(\bm{u},\bm{u}',a) \in T$ and $(\bm{v}',\bm{v},\bar a) \in T$}{
    $R_i \leftarrow R_i \cup \{ (\bm{x}+\bm{u},\bm{y}+\bm{v}) \mid (\bm{x} + \bm{u}',\bm{y} + \bm{v}') \in R_{i-1}, \, \bm{x},\bm{y} \in \N^d \}$\; \label{line:saturate}
  }
  $R_i \leftarrow \Cong(R_i)$\; \label{line:ri}
  \lIf{$R_i = R_{i-1}$}{\Return{$R_i$}} \label{line:until}
}
\caption{Algorithm for bidirected reachability in PVAS.}
\label{alg:pvas}
\end{algorithm}

\Cref{alg:pvas} is a saturation algorithm that computes a semilinear representation for $R_\P$.
The sets $R_i$ are maintained by semilinear representations or Presburger formulas.
Since in this section we only prove elementary complexity bounds,
we can use both formats interchangeably.
Observe that the update in line~\ref{line:saturate} and the equality test in line~\ref{line:until} can be expressed in Presburger arithmetic.
The computation of $\Cong(\cdot)$ will be explained in the next subsection.
Consider the values of $R_i$ for $i \ge 1$ after line~\ref{line:ri} of \Cref{alg:pvas}.
They form an ascending chain of congruences $(R_i)_{i \ge 1}$, which
implies that the algorithm must terminate.
For the correctness one can prove by induction on $i$
that $(\bm{x},\bm{y}) \in R_i$ if and only if there exists a run between $(\bm{x},\varepsilon)$ and $(\bm{y},\varepsilon)$
whose stack height does not exceed $i$.
Moreover, if the algorithm terminates after $k$ iterations then $R_k = R_\P$.

We will use a primitive recursive algorithm (which is elementary in fixed
dimension) to compute $\Cong(R_i)$ from a semilinear representation for $R_i$.
Using the tools from \cite{Schmitz17} we can then prove upper
bounds for the length of the ascending chain.

\subsection{Semilinear representations for congruences}

In this section, we present an algorithm that, for a given semilinear
representation for a set $R\subseteq\N^d\times\N^d$, computes a semilinear
representation for $\Cong(R)$. Its run time is bounded by a tower of
exponentials in $\|R\|$ of height $O(d)$ (\cref{thm:cong-closure}). Note that
for bidirected VASS, it is known that in exponential space, one can compute a
semilinear representation of the reachability
set~\cite{KoppenhagenM97,DBLP:journals/entcs/BouzianeF97}. In other words, one
can compute in exponential space a representation of the congruence class of a
given vector $\bm{x}\in\N^d$. In contrast, our algorithm computes a
semilinear representation of the entire congruence.

Let the function $\exp^k$ be inductively defined by $\exp^0(x) = x$ and $\exp^{k+1}(x) = \exp^k(2^x)$.
In the following we show how to compute a semilinear representation for a congruence $\Q$
given by a semilinear basis $R \subseteq \N^d \times \N^d$ in time $\exp^{O(d)}(\|R\|)$.
In fact, we can assume that $R$ is {\em finite}
since a linear set $L = \bm{b} + P^*$ is contained in $\Cong(F_L)$
where $F_L = \{ \bm{b}, \bm{b} + \bm{p} \mid \bm{p} \in P \}$,
and, therefore, a semilinear set $\bigcup_{i=1}^m L_i$ generates the same congruence
as $\bigcup_{i=1}^m F_{L_i}$.
The semilinear representation of $\Q$ will be obtained by induction on the dimension $d$
via a decomposition of $\N^d$ into smaller regions.
A {\em region} is a linear set $L = \bm{b} + P^* \subseteq \N^d$ where $P \subseteq \{\bm{e}_1, \dots, \bm{e}_d \}$.
Its {\em dimension} is $|P|$.
In particular all sets $\bm{b} {\uparrow} = \bm{b} + \{\bm{e}_1, \dots, \bm{e}_d \}^*$ are regions.
For a region $L = \bm{b} + P^*$ and a congruence $\Q$, we define the congruence
$\Q_L = \{ (\bm{x},\bm{y}) \in (P^*)^2 \mid (\bm{b}+\bm{x},\bm{b}+\bm{y}) \in \Q \}$
on the submonoid $P^*$.

A submonoid $S \subseteq \N^k$ is {\em subtractive} if $\bm{x}, \bm{y} \in S$ and $\bm{x} \le \bm{y}$
implies $\bm{y}-\bm{x} \in S$.
For example, the non-negative restriction $G \cap \N^k$ of a group $G \subseteq \Z^k$ is a subtractive submonoid.
The following lemma is well-known, see \cite[Proposition~7.1]{EilenbergSchutzenberger1969} or \cref{app:ackermann} for a proof.

\begin{restatable}{lemma}{subtractive}
\label{lem:subtractive}
Every subtractive submonoid $S \subseteq \N^k$ is of the form $S = M^*$
where $M$ is the finite set of the minimal nonzero elements in $S$.
\end{restatable}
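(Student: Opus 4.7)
The plan is to rely on two standard ingredients: the well-quasi-order property of $(\N^k,\le)$, which ensures finiteness of $M$, and a greedy peeling induction enabled by subtractiveness.

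First I would observe that $M$ is finite by Dickson's lemma, which the paper has already invoked: the set of minimal elements of any subset of $\N^k$ with respect to $\le$ is finite. The containment $M^* \subseteq S$ is immediate: $M \subseteq S$ by definition, and $S$ is a submonoid, so it is closed under sums of elements of $M$.

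The main step is the reverse inclusion $S \subseteq M^*$, which I would prove by induction on the quantity $\sum_{i=1}^{k} x_i$ for $\bm{x} \in S$. The base case $\bm{x} = \bm{0}$ gives $\bm{0} \in M^*$ trivially. For the inductive step, let $\bm{x} \in S$ with $\bm{x} \ne \bm{0}$. Since $\bm{x} \in S \setminus \{\bm{0}\}$, the definition of $M$ as the set of minimal nonzero elements of $S$ yields some $\bm{m} \in M$ with $\bm{m} \le \bm{x}$. At this point the subtractiveness hypothesis is used crucially: it gives $\bm{x} - \bm{m} \in S$. Because $\bm{m} \ne \bm{0}$ we have $\sum_i (x_i - m_i) < \sum_i x_i$, so the inductive hypothesis applies and $\bm{x} - \bm{m} \in M^*$. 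Then $\bm{x} = \bm{m} + (\bm{x} - \bm{m}) \in M^*$, completing the induction.

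The argument is essentially direct and I do not anticipate a real obstacle beyond identifying the right induction measure. The only conceptual point worth flagging is that subtractiveness is precisely the hypothesis that makes the greedy peeling step go through: without it one would still know that some minimal generator of $S$ sits componentwise below $\bm{x}$, but there would be no way to conclude that the remainder $\bm{x} - \bm{m}$ lies back in $S$ so that the induction can proceed.
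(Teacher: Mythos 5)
Your proof is correct and follows essentially the same strategy as the paper's: peel a minimal nonzero element $\bm{m} \in M$ off $\bm{x}$, apply subtractiveness to keep $\bm{x}-\bm{m}$ in $S$, and recurse. The paper phrases this as a minimal-counterexample argument with respect to $\le$ rather than induction on $\sum_i x_i$, but this is only a cosmetic difference.
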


Eilenberg and Schützenberger observed that for every slice $S$ there exists an element $\bm{s} \in S$
such that $S - \bm{s}$ is a subtractive submonoid \cite[Proposition~7.2]{EilenbergSchutzenberger1969}.
As a consequence, for every congruence $\Q$ on $\N^d$ there exists $\bm{b} \in \N^d$
such that $\Q_{\bm{b} {\uparrow}}$ is a subtractive submonoid.
We provide an elementary bound on $\bm{b}$.

\begin{restatable}{lemma}{bigvector}
\label{lem:big-vector}
Given a finite basis $R$ for a congruence $\Q$ on $\N^d$,
one can compute in elementary time a vector $\bm{b} \in \N^d$ and a finite set $M \subseteq \N^{2d}$
such that $\Q_{\bm{b} {\uparrow}} = M^*$.
\end{restatable}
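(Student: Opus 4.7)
The plan is to approximate $\Q$ from above by an explicit subtractive submonoid $\tilde{\Q}\subseteq\N^{2d}$ built from the group of differences of $R$, and then elementarily bound how large $\bm{b}$ must be for $\Q_{\bm{b}\uparrow}$ to catch up with $\tilde{\Q}$. Write $R=\{(\bm{u}_i,\bm{v}_i)\}_{i=1}^n$, let $G\le\Z^d$ be the subgroup generated by the differences $\bm{v}_i-\bm{u}_i$, and set
\[
\tilde{\Q}=\{(\bm{x},\bm{y})\in\N^{2d}\mid \bm{y}-\bm{x}\in G\}.
\]
A direct check shows that $\tilde{\Q}$ is a congruence on $\N^d$ containing every generator $(\bm{u}_i,\bm{v}_i)$, so $\Q\subseteq\tilde{\Q}$, and is subtractive as a submonoid of $\N^{2d}$. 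Since the defining condition $\bm{y}-\bm{x}\in G$ is shift-invariant, $\tilde{\Q}_{\bm{b}\uparrow}=\tilde{\Q}$ for every $\bm{b}$. By \cref{lem:subtractive}, $\tilde{\Q}=M^*$ with $M=\min(\tilde{\Q}\setminus\{\bm{0}\})$. Thus $\Q_{\bm{b}\uparrow}\subseteq\tilde{\Q}=M^*$ always, and the task reduces to choosing $\bm{b}$ with $M\subseteq\Q_{\bm{b}\uparrow}$.

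The computation proceeds in three elementary steps. First, compute a $\Z$-basis of $G$ via Smith normal form of the $d\times n$ matrix with columns $\bm{v}_i-\bm{u}_i$, in polynomial time. Second, enumerate $M$: each element of $M$ is a minimal nonzero integer point of the rational pointed cone cut out by $\bm{y}-\bm{x}\in G\otimes\mathbb{Q}$ in $\N^{2d}$, so by standard Hilbert-basis bounds $M$ is finite with entries of magnitude elementary in $\|R\|$ and is computable in elementary time. Third, for each $(\bm{p},\bm{q})\in M$, express $\bm{q}-\bm{p}\in G$ as an integer combination $\sum_{i=1}^n n_i(\bm{v}_i-\bm{u}_i)$ with $|n_i|$ elementarily bounded (via Cramer-type bounds applied to the Smith basis of $G$), and set $\bm{b}$ coordinate-wise to
\[
B \;=\; \max_{(\bm{p},\bm{q})\in M}\sum_{i=1}^n|n_i|\cdot W \;+\; W, \qquad W=\max_j\|\bm{u}_j\|.
\]

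To verify $M\subseteq\Q_{\bm{b}\uparrow}$, fix $(\bm{p},\bm{q})\in M$ and start from $\bm{b}+\bm{p}$. Applying, in any order, $|n_i|$ forward moves of $(\bm{u}_i,\bm{v}_i)$ when $n_i>0$ (resp.\ reverse moves when $n_i<0$) produces $\bm{b}+\bm{q}$; the running counter deviates from $\bm{b}+\bm{p}$ by at most $\sum_i|n_i|\cdot W$ coordinatewise, so the choice of $\bm{b}$ keeps every intermediate vector above $W\bm{1}$ and each move stays applicable within $\N^d$. The main obstacle is securing both elementary bounds simultaneously — on the Hilbert-basis magnitudes of $\tilde{\Q}$ and on the coefficient sizes $|n_i|$ expressing minimal differences back in terms of the original generators $\bm{v}_i-\bm{u}_i$ — but both follow from classical integer linear algebra over the Smith normal form of~$G$. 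Combining the pieces yields the required elementary-time construction of $\bm{b}$ and $M$.
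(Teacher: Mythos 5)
Your overall strategy matches the paper's closely. Your $\tilde{\Q} = \{(\bm{x},\bm{y})\in\N^{2d}\mid \bm{y}-\bm{x}\in G\}$ is exactly the paper's auxiliary congruence $\S = \langle R\cup E\rangle\cap\N^{2d}$ (where $E = \{(\bm{e}_i,\bm{e}_i)\}$), since $\langle R\cup E\rangle = \{(\bm{x},\bm{y})\in\Z^{2d}\mid \bm{y}-\bm{x}\in G\}$. Both proofs show $\Q\subseteq\tilde{\Q}$, note that $\Q_{\bm{b}\uparrow}\subseteq\tilde{\Q}$ for all $\bm{b}$, extract $M$ via \cref{lem:subtractive}, and choose $\bm{b}$ large enough that every element of $M$ can actually be derived in $\Q$ starting above $\bm{b}$.

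One step in your write-up is imprecise and would not survive as stated: $M$ is \emph{not} the set of minimal nonzero integer points of the cone cut out by $\bm{y}-\bm{x}\in G\otimes\mathbb{Q}$. The cone condition only sees the rational span of $G$, whereas $\tilde{\Q}$ additionally imposes the lattice constraint $\bm{y}-\bm{x}\in G$. For instance with $d=1$ and $G=2\Z$, the cone condition is vacuous, its minimal nonzero points in $\N^2$ are $(1,0),(0,1)$, but $M = \{(2,0),(0,2),(1,1)\}$. What you actually need is a bound on the minimal nonzero solutions of the homogeneous linear Diophantine system ``$\bm{y}-\bm{x}\in G$'' over $\N^{2d}$; the paper obtains exactly this by invoking Pottier's theorem for the system $A\bm{x}=\bm{y}$ with $A$ built from $R\cup E$ and their negations, which cleanly gives the bound $\lambda = (1+\|A\|_{1,\infty})^{2d}$ on \emph{both} the minimal solutions and the coefficients expressing them, in one stroke. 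Your two-step route (Hilbert basis bound for $M$, then Cramer/Smith bounds for the coefficients $n_i$) can be made to work, but the appeal to ``standard Hilbert-basis bounds of the cone'' needs to be replaced by a bound for the lattice monoid. A second, minor point: your $W=\max_j\|\bm{u}_j\|$ only bounds the left halves of the pairs; since you take reverse moves you need $W$ to also dominate $\|\bm{v}_j\|$ (fine if you note $R$ may be taken symmetric, as the paper does). With those repairs, the proof is correct and essentially the paper's argument.
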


\subparagraph{Gröbner bases}
It remains to compute a semilinear representation of $\Q$ on the complement of ${\bm{b} {\uparrow}}$.
We will decompose $\N^d \setminus {\bm{b} {\uparrow}}$ into disjoint $(d-1)$-dimensional regions $L_j$,
compute bases for the restrictions $\Q_{L_j}$, and proceed inductively.
To compute the bases for $\Q_{L_j}$,
we will exploit the well-studied connection between congruences on $\N^d$ and binomial ideals \cite{MM82}.
Let $\Z[\bm{x}]$ be the polynomial ring in the variables $\bm{x} = (x_1, \dots, x_d)$ over $\Z$.
We write $\bm{x}^{\bm{u}}$ for the monomial $x_1^{\bm{u}(1)} \cdots x_d^{\bm{u}(d)}$.
An {\em ideal} is a nonempty set $I \subseteq \Z[\bm{x}]$ such that
$f,g \in I$ and $h \in \Z[\bm{x}]$ implies $f+g, hf \in I$.
An ideal $I$ is finitely represented by a {\em basis} $B$, i.e.\ $I$ is the smallest ideal containing $B$.
By Hilbert's basis theorem any ideal $I \subseteq \Z[\bm{x}]$ has a finite basis.
One of the main tools in computer algebra for handling polynomial ideals are {\em Gröbner bases},
e.g.\ to solve the ideal membership problem.
We defer the reader to \cite{BW93} for details on Gröbner bases and only mention the properties required for our purposes.
A Gröbner basis is defined with respect to an admissible monomial ordering,
e.g.\ a lexicographic ordering on the monomials.
Buchberger's algorithm \cite{Buchberger65} computes for a given basis for an ideal $I$
the {\em unique reduced} Gröbner basis for $I$ in doubly exponential space \cite{Dube90}.

A basis $R$ for a congruence $\Q$ on $\N^d$ can be translated
into the polynomial ideal $I \subseteq \Z[\bm{x}]$ generated by
$B_R = \{ \bm{x}^{\bm{u}} - \bm{x}^{\bm{v}} \mid (\bm{u},\bm{v}) \in R\}$.
It is known that $\bm{s} \sim_\Q \bm{t}$ if and only if $\bm{x}^{\bm{s}} - \bm{x}^{\bm{t}} \in I$ \cite[Lemma~1 and 2]{MM82}.
Moreover, the reduced Gröbner basis of $I$ with respect to an admissible monomial order
always consists of differences of monomials $\bm{x}^{\bm{s}} - \bm{x}^{\bm{t}}$ \cite[Theorem~2.7]{Huynh86}.

The following lemma can be reduced to two known applications of Gröbner bases.
Let $I \subseteq \Z[\bm{x}]$ be an ideal.
For a subsequence $\bm{y}$ of $\bm{x}$ we call $I \cap \Z[\bm{y}]$ the {\em elimination ideal},
which is indeed an ideal in $\Z[\bm{y}]$.
For $\bm{b} \in \N^d$ we define the {\em ideal quotient}
$I : \bm{x}^{\bm{b}} = \{ p \in \Z[\bm{x}] \mid p \bm{x}^{\bm{b}} \in I \}$,
which is also an ideal.
It is known that one can compute Gröbner bases
for $I \cap \Z[\bm{y}]$ and $I : \bm{x}^{\bm{b}}$ in elementary time \cite[Section~6]{BW93},
see~\cref{app:ackermann} for more details.

\begin{restatable}{lemma}{shift}
\label{lem:shift}
Given a finite basis $R$ for a congruence $\Q$ on $\N^d$ and a region $L \subseteq \N^d$,
one can compute in elementary time a finite basis for $\Q_L$.
\end{restatable}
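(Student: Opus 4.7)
The plan is to translate the two operations defining $\Q_L$---a shift by $\bm{b}$ and a restriction to $P^*$---into corresponding operations on the polynomial ideal $I \subseteq \Z[\bm{x}]$ associated with $\Q$, and then invoke the Gr\"obner basis machinery already cited. Recall that $I$ is the ideal generated by $B_R = \{\bm{x}^{\bm{u}} - \bm{x}^{\bm{v}} \mid (\bm{u},\bm{v}) \in R\}$, and that $\bm{s} \sim_\Q \bm{t}$ iff $\bm{x}^{\bm{s}} - \bm{x}^{\bm{t}} \in I$.

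First, I would encode the shift by $\bm{b}$ as an ideal quotient. For any $\bm{u},\bm{v} \in \N^d$,
\[
\bm{b}+\bm{u} \sim_\Q \bm{b}+\bm{v}
~\Longleftrightarrow~
\bm{x}^{\bm{b}}\bigl(\bm{x}^{\bm{u}} - \bm{x}^{\bm{v}}\bigr) \in I
~\Longleftrightarrow~
\bm{x}^{\bm{u}} - \bm{x}^{\bm{v}} \in I : \bm{x}^{\bm{b}}.
\]
Next, letting $\bm{y}$ denote the subsequence of variables of $\bm{x}$ indexed by $P$, we have $\bm{u} \in P^*$ iff $\bm{x}^{\bm{u}} \in \Z[\bm{y}]$. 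Combining both encodings, a pair $(\bm{u},\bm{v}) \in (P^*)^2$ belongs to $\Q_L$ iff $\bm{x}^{\bm{u}} - \bm{x}^{\bm{v}}$ lies in the elimination ideal $J := (I : \bm{x}^{\bm{b}}) \cap \Z[\bm{y}]$ of $\Z[\bm{y}]$.

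Then I would use the elementary-time algorithms for ideal quotients and elimination ideals from \cite{BW93} to produce a basis for $J$, and compute the reduced Gr\"obner basis $G$ of $J$ via Buchberger's algorithm. The crucial observation is that both the ideal-quotient-by-a-monomial and the elimination operations preserve the property of being generated by pure differences of monomials (a standard fact from the theory of binomial/lattice ideals): the congruence on $\N^d$ corresponding to $I : \bm{x}^{\bm{b}}$ is precisely the shifted congruence $\{(\bm{u},\bm{v}) : \bm{b}+\bm{u} \sim_\Q \bm{b}+\bm{v}\}$, and intersecting with $\Z[\bm{y}]$ restricts this congruence to $P^*$, recovering $\Q_L$. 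Hence $J$ is again generated by pure differences, and by \cite[Theorem~2.7]{Huynh86} its reduced Gr\"obner basis $G$ consists only of elements of the form $\bm{x}^{\bm{u}} - \bm{x}^{\bm{v}}$. Reading off the exponent pairs then yields the desired finite basis for $\Q_L$.

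The main obstacle is verifying that $J$ is indeed a pure-difference ideal: if the ideal quotient or the elimination step introduced binomials with coefficients other than $\pm 1$, we could not translate the Gr\"obner basis back into congruence pairs. Once this closure property is established---for instance, by arguing that every $f \in I:\bm{x}^{\bm{b}}$ has coefficients summing to zero within each shifted congruence class, so that $f$ already lies in the ideal generated by pure differences for the shifted congruence---the remainder of the argument reduces to tracking the elementary-time complexity inherited from the cited Gr\"obner basis subroutines.
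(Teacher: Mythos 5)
Your approach is essentially the paper's: encode the shift by $\bm{b}$ as the ideal quotient $I : \bm{x}^{\bm{b}}$, encode the restriction to $P^*$ as the elimination ideal with $\bm{y}$ the variables indexed by $P$, compute a reduced Gr\"obner basis, and read the pairs off via the pure-difference form guaranteed by \cite[Theorem~2.7]{Huynh86}. The only cosmetic difference is that you compose both operations into a single ideal $J = (I : \bm{x}^{\bm{b}}) \cap \Z[\bm{y}]$, whereas the paper performs them in sequence via the identity $\Q_L = (\Q_{\bm{b} {\uparrow}})_{P^*}$. Your flagged concern about pure-difference closure is legitimate (the paper leaves it implicit), and your sketched resolution is the correct one: by the Mayr--Meyer characterization, a polynomial $\sum_i c_i \bm{x}^{\bm{m}_i}$ lies in $I$ iff the coefficients sum to zero within each $\Q$-equivalence class of exponents, and applying this to $\bm{x}^{\bm{b}} f$ (resp.\ to $f \in I \cap \Z[\bm{y}]$) shows that $I : \bm{x}^{\bm{b}}$ (resp.\ the elimination ideal) coincides exactly with the pure-difference ideal of the shifted (resp.\ restricted) congruence, so Huynh's theorem applies and the reduced Gr\"obner basis converts back into a finite congruence basis for $\Q_L$.
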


We are ready to compute a semilinear representation of $\Cong(R)$ in $\exp^{O(d)}(n)$ time.
We proceed by induction over $d$.
Using \cref{lem:big-vector} we can write $\Q_{\bm{b} {\uparrow}} = M^*$.
We decompose $\N^d = \bigcup_{i=0}^m L_j$ into regions where $L_0 = \bm{b} {\uparrow}$
and $L_1, \dots, L_m$ are $(d-1)$-dimensional regions.
By \Cref{lem:shift} we can compute bases for $\Q_{L_i}$ for $i \in [1,m]$
and by induction hypothesis semilinear representations for $\Q_{L_i}$.
In this way, we obtain semilinear representations for the restrictions $Q_i = \Q \cap L_i^2$ for each $i \in [0,m]$.
Finally, we can express $\bm{s} \sim_\Q \bm{t}$ by a Presburger formula
that says that there exists a sequence of intermediate vectors of length $2(m+1)$
where adjacent elements are related by an $R$-step
or are contained in some relation $Q_i$.

\begin{restatable}{theorem}{congclosure}
\label{thm:cong-closure}
	Given a semilinear basis $R$ for a congruence $\Q$ on $\N^d$,
	one can compute a semilinear representation for $\Q$ in time $\exp^{c_1 d}(n)$ for some absolute constant $c_1$.
\end{restatable}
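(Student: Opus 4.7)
The plan is to proceed by induction on the dimension $d$. The base case $d = 0$ is trivial. Before entering the induction I would first reduce to the case of a \emph{finite} basis: each linear component $\bm{b} + P^*$ of a semilinear $R$ is contained in $\Cong(\{\bm{b}\} \cup \{\bm{b} + \bm{p} \mid \bm{p} \in P\})$, so the union of these finite skeletons across all linear components generates the same congruence and has size linear in $\|R\|$.

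For the inductive step on a finite basis $R$ for a congruence $\Q$ on $\N^d$, I would first apply \cref{lem:big-vector} to obtain an anchor $\bm{b} \in \N^d$ and a finite set $M$ with $\Q_{\bm{b}{\uparrow}} = M^*$; this already yields a semilinear description of $Q_0 := \Q \cap (\bm{b}{\uparrow})^2$. Next I would cover the complement $\N^d \setminus \bm{b}{\uparrow}$ by $(d-1)$-dimensional regions $L_1, \dots, L_m$ obtained by fixing, for each coordinate $i$ and each value $v < \bm{b}(i)$, the hyperplane $\{\bm{x} \mid \bm{x}(i) = v\}$; this gives $m \le \sum_{i=1}^d \bm{b}(i)$ regions. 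For each $L_j$, \cref{lem:shift} produces in elementary time a finite basis for the congruence $\Q_{L_j}$, which lives on the submonoid $P^*$ of the corresponding region and is naturally identified with a congruence on $\N^{d-1}$. The inductive hypothesis then yields a semilinear representation of $\Q_{L_j}$, from which one reads off a semilinear description of $Q_j := \Q \cap L_j^2$.

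Finally I would assemble a Presburger formula for $\Q$ itself. One cannot merely take $\bigcup_{j=0}^m Q_j$, since vectors in different regions may only be $\Q$-related via intermediate vectors crossing region boundaries. Instead I would state $\bm{s} \sim_\Q \bm{t}$ iff there is a sequence $\bm{s} = \bm{z}_0, \bm{z}_1, \dots, \bm{z}_{2(m+1)} = \bm{t}$ in which each consecutive pair either lies in $R \cup R^{-1}$ (allowing trivial repetitions) or lies in some $Q_j$. The nontrivial point is that $2(m+1)$ intermediate vectors suffice: any derivation by single generator steps can be compressed by merging maximal subchains that stay in one region into a single $Q_j$-hop, so the resulting alternating walk between region-blocks and boundary crossings is bounded in length by the number of regions.

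The main obstacle is tracking the elementary complexity carefully across the recursion. Each level incurs one application of \cref{lem:big-vector}, up to $m$ applications of \cref{lem:shift}, and one conversion of the final Presburger formula back to semilinear form via Cooper's procedure; each of these contributes only a constant number of exponentials in the current input size, and $m$ itself is already elementary in the original input. With recursion depth $d$, these contributions compose into a tower of exponentials of height $c_1 d$, giving the claimed bound $\exp^{c_1 d}(n)$.
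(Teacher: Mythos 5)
Your proposal follows the same inductive strategy as the paper's proof: reduce to a finite basis, apply \cref{lem:big-vector} to anchor a translate of the congruence as a finitely generated subtractive submonoid, cover the complement of $\bm{b}{\uparrow}$ by $(d-1)$-dimensional regions, recurse via \cref{lem:shift}, and assemble a Presburger formula quantifying over a bounded-length chain of intermediate vectors. The only structural difference is the covering of $\N^d \setminus \bm{b}{\uparrow}$: you use the (overlapping) hyperplanes $\{\bm{x} \mid \bm{x}(i) = v\}$ for $i \in [1,d]$, $v < \bm{b}(i)$, whereas the paper builds a \emph{disjoint} decomposition by peeling off one unit vector of $\bm{b}$ at a time. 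Both give $O(d\cdot\|\bm{b}\|)$ regions, and overlap does not hurt the counting, so your choice is fine.

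One step deserves tightening. You justify the bound $2(m+1)$ by ``merging maximal subchains that stay in one region into a single $Q_j$-hop.'' Compressing only consecutive runs inside a single region is not enough: the resulting alternating walk could revisit regions arbitrarily often. The argument that actually yields the bound (and the one in the paper) is stronger: take a \emph{minimal-length} derivation; if two elements $\bm{w}_i, \bm{w}_j$ with $j > i+1$ both lie in the same region $L_k$ — regardless of what happens in between — then since all elements of the derivation are $\Q$-equivalent, $(\bm{w}_i,\bm{w}_j) \in Q_k$, so the entire subsequence between them can be replaced by a single $Q_k$-hop, contradicting minimality. Hence each region contributes at most two elements to a minimal derivation, and since the regions cover $\N^d$, its length is at most $2(m+1)$. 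Your final sentence gestures at this, but the preceding ``merge maximal subchains'' phrasing would not, on its own, establish it. With that correction, the proof matches the paper's.
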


\subsection{Ascending chains of congruences}

\label{sec:bad-seq}

To bound the length of the chain of congruences $R_i$ in \cref{alg:pvas} we use a {\em length function theorem}
\cite[Theorem~3.15]{Schmitz17}, see also \cite{FigueiraFSS11}.
In general, such theorems allow to derive complexity bounds for algorithms
whose termination arguments are based on well-quasi orders.

\subparagraph{Fast-growing complexity classes}
In the following we state a simplified version of
\cite[Theorem~3.15]{Schmitz17}, which is sufficient for our application.  We
start by introducing fast-growing functions and complexity classes.
Recall that the Cantor normal form of an ordinal $\alpha \le \omega^\omega$
is the unique representation $\alpha = \omega^{\alpha_1}+ \dots + \omega^{\alpha_p}$
where $\alpha > \alpha_1 \ge \dots \ge \alpha_p$.
In this form $\alpha$ is a limit ordinal if and only if $p > 0$ and $\alpha_p > 0$.
A {\em fundamental sequence} for a limit ordinal $\lambda$ is a sequence $(\lambda(x))_{x < \omega}$
of ordinals with supremum $\lambda$.
Given a limit ordinal $\lambda\le\omega^\omega$ whose Cantor normal form is $\lambda = \beta + \omega^{k+1}$,
we use the standard fundamental sequence $(\lambda(x))_{x < \omega}$, defined inductively as
$\omega^\omega(x) = \omega^{x+1}$ and $(\beta + \omega^{k+1})(x) = \beta + \omega^k \cdot (x+1)$.
Given a function $h \colon \N \to \N$ the {\em Hardy hierarchy}
$(h^\alpha)_{\alpha \le \omega^\omega}$ relative to $h$ is defined by
\begin{align*}
	h^0(x) &= x, & h^{\alpha+1}(x) &= h^\alpha(h(x)), & h^\lambda(x) &= h^{\lambda(x)}(x).
\end{align*}
Using the Hardy functions $(H^\alpha)_{\alpha}$ relative to $H(x) = x+1$
we can define the {\em fast-growing} complexity classes $(\F_\alpha)_{\alpha}$ from \cite{Schmitz16}.
We denote by $\mathscr{F}_{<\alpha}$
the class of functions computed by deterministic Turing machines in time $O(H^\beta(n))$
for some $\beta < \omega^\alpha$.
By $\F_\alpha$ we denote the class of decision problems solved by deterministic Turing machines
in time $O(H^{\omega^\alpha}(p(n)))$ for some function $p \in \mathscr{F}_{<\alpha}$.
We define $\PR = \bigcup_{k < \omega} \F_k$ and $\ACK = \F_\omega$.

\subparagraph{Controlled bad sequences}
By Dickson's lemma, any sequence of vectors $\bm{x}_1, \bm{x}_2, \dots$ with $x_i \not \le x_j$ for all $i < j$
must be finite.
Such a sequence is also called {\em bad}.
To obtain complexity bounds on the length of bad sequences
we need to restrict to sequences that do not grow in an uncontrolled fashion.
In the following let $g \colon \N \to \N$ be {\em monotone}, {\em strictly inflationary},
i.e.\ $g(x) > x$ for all $x$, and {\em super-homogeneous}, i.e.\ $g(xy) \ge g(x) \cdot y$ for all $x,y \ge 1$.
A sequence of vectors $\bm{x}_0, \bm{x}_1, \dots, \bm{x}_\ell$ is {\em $(g,n)$-controlled}
if $\|\bm{x}_i\| \le g^i(n)$ for all $i \in [0,\ell]$.
The following statement follows from \cite[Theorem~3.15]{Schmitz17} and \cite[Eq.~(3.13)]{Schmitz17}.

\begin{theorem}
	\label{thm:bad-seq}
	Any $(g,n)$-controlled bad sequence over $\N^k$ has length at most $g^{\omega^k}(nk)$.
\end{theorem}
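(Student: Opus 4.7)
The plan is to verify the statement as a direct instance of the general length function theorem of Schmitz~\cite[Theorem~3.15]{Schmitz17}, which bounds any $(g,n)$-controlled bad sequence in a normed wqo $X$ by $g^{o(X)}$ applied to a polynomial in the norm bound $n$ and the ``width'' of $X$, where $o(X)$ is the maximal order type. For the normed wqo $(\N^k, \le, \|\cdot\|)$ the maximal order type is $o(\N^k) = \omega^k$; this is the classical computation quoted as Eq.~(3.13) of \cite{Schmitz17}. Plugging these parameters into the length function theorem yields exactly $g^{\omega^k}(nk)$, matching the claim, and the proof in the paper can be stated as a one-line invocation of these two external results.

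For a self-contained derivation one would argue by induction on $k$. The base case $k=1$ is immediate: a bad sequence in $(\N,\le)$ is strictly decreasing starting from an element of value $\le n$, so it has length at most $n+1$, which is bounded by $g^{\omega}(n) = g^{n+1}(n)$ because $g$ is strictly inflationary. For the inductive step, given a $(g,n)$-controlled bad sequence $\bm{x}_0, \bm{x}_1, \dots$ in $\N^k$, the head satisfies $\|\bm{x}_0\| \le n$, so for every $i \ge 1$ there is a coordinate $j_i$ with $\bm{x}_i(j_i) < \bm{x}_0(j_i) \le n$. Partitioning the tail according to the pair $(j_i, \bm{x}_i(j_i))$ produces at most $kn$ subsequences; projecting each out the distinguished coordinate turns it into a bad sequence in $\N^{k-1}$, to which the induction hypothesis applies.

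The main technical obstacle is the bookkeeping of the control as one passes to subsequences: an element sitting at position $i$ of the parent sequence has norm bound $g^i(n)$, but its position within its subsequence is strictly smaller, and we must express the resulting bound back in the $(g,n')$-controlled form for some $n'$ in order to invoke the induction hypothesis. The resolution is to interleave the induction with the recursion on the Hardy hierarchy and use the identity $\omega^k(x) = \omega^{k-1} \cdot (x+1)$ coming from the standard fundamental sequences: each of the (at most $kn$) subsequences consumes one step of the $\omega^{k-1}$ factor inside $\omega^k$, while strict inflationarity and super-homogeneity of $g$ ensure that the nested Hardy evaluations telescope precisely into the single bound $g^{\omega^k}(nk)$. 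Encoding this ordinal-theoretic manipulation cleanly is exactly what Schmitz's length function theorem accomplishes in the abstract, which is why quoting it is the shortest and most robust route and is the approach suggested by the surrounding exposition.
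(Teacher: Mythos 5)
Your first paragraph is exactly what the paper does: the paper proves \cref{thm:bad-seq} by a one-line invocation of Schmitz's length function theorem \cite[Theorem~3.15]{Schmitz17} together with the maximal-order-type identity $o(\N^k)=\omega^k$ from \cite[Eq.~(3.13)]{Schmitz17}. The self-contained inductive sketch in your remaining paragraphs is a reasonable outline of the underlying descent argument but is not part of the paper's proof, and (as you concede) the ordinal bookkeeping needed to land exactly on $g^{\omega^k}(nk)$ is left unfinished—which is why both you and the paper fall back on citing Schmitz.
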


A {\em chain} in $\N^k$ is a sequence $S_0 \subsetneq S_1 \subsetneq \dots \subsetneq S_\ell$ of subsets $S_i \subseteq \N^k$.
The chain is {\em $(g,n)$-controlled} if for each $i \in [0,\ell-1]$ there exists $\bm{s}_i \in S_{i+1} \setminus S_i$
with $\|\bm{s}_i\| \le g^i(n)$.
A set $X \subseteq \N^k$ is {\em upwards closed} if $X = X {\uparrow}$.
Observe that any $(g,n)$-controlled chain of upwards closed sets in $\N^k$
is bounded by $1 + g^{\omega^k}(nk)$ since we obtain a $(g,n)$-controlled bad sequence
$\bm{s}_0, \bm{s}_1, \dots, \bm{s}_{\ell-1}$
by picking arbitrary $\bm{s}_i \in S_{i+1} \setminus S_i$ with $\|\bm{s}_i\| \le g^i(n)$.

\subparagraph{Translating congruences into upwards closed sets}
The key trick in our upper bound for ascending chains of congruences is to
translate congruences into upwards closed sets. This allows us to translate bounds 
on the length of ascending chains of upward closed sets into corresponding bounds
for congruences. The translation works as follows.
To each congruence $\Q$ on $\N^d$, we associate the upwards closed set $U(\Q) \subseteq \N^{4d}$ with
\[
	U(\Q) = \{ (\bm{x},\bm{y},\bm{u},\bm{v}) \mid (\bm{x},\bm{y}) \in \Q, \, (\bm{x}+\bm{u},\bm{y}+\bm{v}) \in \Q, \, (\bm{u},\bm{v}) \neq (\bm{0},\bm{0}) \} {\uparrow}.
\]
Clearly $\Q_1 \subseteq \Q_2$ implies $U(\Q_1) \subseteq U(\Q_2)$.
In fact, strict inclusion is also preserved:
\begin{lemma}\label{lem:translation-strictly-monotone}
	Let $\Q_1$ and $\Q_2$ be congruences with $\Q_1\subseteq\Q_2$.
	Then (i) $U(\Q_1)\subseteq U(\Q_2)$
	and (ii) for each $\bm{q}\in \Q_2\setminus \Q_1$, there is a $\bm{p}\in U(\Q_2)\setminus
	U(\Q_1)$ with $\|\bm{p}\|\le\|\bm{q}\|$.
\end{lemma}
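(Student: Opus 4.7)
Part (i) is essentially by unfolding: a tuple $(\bm{a}', \bm{b}', \bm{c}', \bm{d}')$ in the defining set of $U(\Q_1)$ lies automatically in the defining set of $U(\Q_2)$ since $\Q_1 \subseteq \Q_2$, and the upward closure preserves the inclusion.

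For (ii), my plan is to construct $\bm{p}$ from a cleverly chosen decomposition of $\bm{q}$. Write $\bm{q} = (\bm{x}, \bm{y})$ and let $(\bm{u}, \bm{v})$ be a \emph{maximal} element, in the componentwise order on $\N^{2d}$, of the finite nonempty set $\{(\bm{u}', \bm{v}') \in \Q_1 : (\bm{u}', \bm{v}') \le \bm{q}\}$; this set contains $(\bm{0}, \bm{0})$ and is bounded by $\bm{q}$, so maximal elements exist, and since $\bm{q} \notin \Q_1$ we have $(\bm{u}, \bm{v}) \ne (\bm{x}, \bm{y})$. Define
\[ \bm{p} = (\bm{u}, \bm{v}, \bm{x}-\bm{u}, \bm{y}-\bm{v}). \]
Each coordinate of $\bm{p}$ is bounded by the corresponding coordinate of $(\bm{x}, \bm{y})$, so $\|\bm{p}\| \le \|\bm{q}\|$. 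Membership $\bm{p} \in U(\Q_2)$ follows immediately by taking $\bm{p}$ itself as a witness: $(\bm{u}, \bm{v}) \in \Q_1 \subseteq \Q_2$, the sum $(\bm{u}+(\bm{x}-\bm{u}), \bm{v}+(\bm{y}-\bm{v}))$ equals $\bm{q} \in \Q_2$, and $(\bm{x}-\bm{u}, \bm{y}-\bm{v}) \ne (\bm{0}, \bm{0})$.

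The crux, and main obstacle, is showing $\bm{p} \notin U(\Q_1)$. Assume for contradiction a witness $(\bm{a}, \bm{b}, \bm{c}, \bm{d}) \le \bm{p}$ with $(\bm{a}, \bm{b}), (\bm{a}+\bm{c}, \bm{b}+\bm{d}) \in \Q_1$ and $(\bm{c}, \bm{d}) \ne (\bm{0}, \bm{0})$. The plan is to deduce $(\bm{u}+\bm{c}, \bm{v}+\bm{d}) \in \Q_1$ using only additive closure of the congruence (translations by reflexive pairs $(\bm{w}, \bm{w})$) and transitivity. Concretely: translate $(\bm{a}, \bm{b})$ by $(\bm{u}-\bm{a}, \bm{u}-\bm{a})$ to get $(\bm{u}, \bm{b}+\bm{u}-\bm{a}) \in \Q_1$; combine with $(\bm{u}, \bm{v}) \in \Q_1$ via transitivity for $(\bm{v}, \bm{b}+\bm{u}-\bm{a}) \in \Q_1$; translate by $(\bm{d}, \bm{d})$; in parallel translate $(\bm{a}+\bm{c}, \bm{b}+\bm{d})$ by $(\bm{u}-\bm{a}, \bm{u}-\bm{a})$ for $(\bm{u}+\bm{c}, \bm{b}+\bm{d}+\bm{u}-\bm{a}) \in \Q_1$; a final transitivity yields $(\bm{u}+\bm{c}, \bm{v}+\bm{d}) \in \Q_1$. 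Since $\bm{c} \le \bm{x}-\bm{u}$ and $\bm{d} \le \bm{y}-\bm{v}$ this derived pair is at most $\bm{q}$; and since $(\bm{c}, \bm{d}) \ne (\bm{0}, \bm{0})$ it strictly dominates $(\bm{u}, \bm{v})$, contradicting maximality. The technical difficulty to watch is that every manipulation must be purely additive, since congruences on $\N^d$ are not in general cancellative and one cannot subtract inside $\Q_1$.
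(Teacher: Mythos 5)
Your proof is correct and follows essentially the same route as the paper's. The choice of $(\bm{u},\bm{v})$ maximal in $\{(\bm{u}',\bm{v}') \in \Q_1 : (\bm{u}',\bm{v}') \le \bm{q}\}$ is the mirror image of the paper's choice of a decomposition $(\bm{s},\bm{t})=(\bm{x},\bm{y})+(\bm{u},\bm{v})$ with $(\bm{x},\bm{y})\in\Q_1$ and the ``remainder'' $(\bm{u},\bm{v})$ minimal, and the chain of additive translations and transitivities deriving $(\bm{u}+\bm{c},\bm{v}+\bm{d})\in\Q_1$ is the same computation the paper carries out. Two minor ways in which your write-up is cleaner: (1) the paper first passes to a minimal element $(\bm{s},\bm{t})$ of $\Q_2\setminus\Q_1$ before decomposing, but this minimality is never actually used in the contradiction; and, as stated, it does not even obviously yield the bound $\|\bm{p}\|\le\|\bm{q}\|$ for the \emph{given} $\bm{q}$ (one would need $(\bm{s},\bm{t})\le\bm{q}$), whereas you work with $\bm{q}$ directly so the bound is immediate. (2) The paper closes with a two-case analysis on whether $(\bm{u}_1,\bm{v}_1)$ equals or is strictly below $(\bm{u},\bm{v})$; you fold both into the single observation that the derived element strictly dominates the maximal one and lies in the relevant finite set, which is tighter. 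Your remark about cancellativity failing in $\N^d$ is the right thing to be careful about, and your derivation respects it throughout.
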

\begin{proof}
	Statement (i) is immediate. For statement (ii) let $(\bm{s},\bm{t}) \in \Q_2 \setminus \Q_1$ be minimal.
	Since $(\bm{0},\bm{0}) \in \Q_1$ we must have $(\bm{s},\bm{t}) \neq (\bm{0},\bm{0})$
	and there exists $(\bm{x},\bm{y}) \in \Q_1$ with $(\bm{x},\bm{y})<(\bm{s},\bm{t})$.
	We choose such a vector $(\bm{x},\bm{y})$ in which $(\bm{u},\bm{v}) := (\bm{s},\bm{t}) - (\bm{x},\bm{y})$ is minimal.
	Clearly, $(\bm{x},\bm{y},\bm{u},\bm{v}) \in U(\Q_2)$.
	We claim that $(\bm{x},\bm{y},\bm{u},\bm{v}) \notin U(\Q_1)$.
	Towards a contradiction, suppose that there exists
	$(\bm{x}_1, \bm{y}_1, \bm{u}_1, \bm{v}_1) \le (\bm{x}, \bm{y}, \bm{u}, \bm{v})$
	with $(\bm{x}_1, \bm{y}_1) \in \Q_1$, $(\bm{u}_1,\bm{v}_1) \neq (\bm{0},\bm{0})$,
	and $(\bm{x}_1+\bm{u}_1,\bm{y}_1+\bm{v}_1) \in \Q_1$.
	Since $\Q_1$ is a congruence we have
	\begin{align*}
		\bm{x} + \bm{u}_1 = (\bm{x}-\bm{x}_1) + \bm{x}_1 + \bm{u}_1
		& \sim_{\Q_1} (\bm{x}-\bm{x}_1) + \bm{y}_1 + \bm{v}_1 \\
		& \sim_{\Q_1} (\bm{x}-\bm{x}_1) + \bm{x}_1 + \bm{v}_1
		= \bm{x} + \bm{v}_1 \sim_{\Q_1} \bm{y} + \bm{v}_1.
	\end{align*}
	If $(\bm{u}_1,\bm{v}_1) = (\bm{u},\bm{v})$ then this contradicts
	$(\bm{x}+\bm{u},\bm{y}+\bm{v}) = (\bm{s},\bm{t}) \notin \Q_1$.
	If $(\bm{u}_1,\bm{v}_1) < (\bm{u},\bm{v})$ then $(\bm{s},\bm{t}) =
		(\bm{x}+\bm{u}_1,\bm{y}+\bm{v}_1) + (\bm{u}-\bm{u}_1,
		\bm{v}-\bm{v}_1)$ contradicts the minimality of
		$(\bm{u},\bm{v})$.
\end{proof}

If $(\Q_i)_{i \le \ell}$ is a $(g,n)$-controlled chain of congruences in $\N^d$
then by \cref{lem:translation-strictly-monotone}, $(U(\Q_i))_{i \le \ell}$ is a $(g,n)$-controlled
chain of upwards closed subsets of $\N^{4d}$ and thus has length at
most $1+g^{\omega^{4d}}(4dn)$. Hence, the same bound applies to $(\Q_i)_{i \le \ell}$.

\begin{lemma}
	\label{lem:cong-chain}
	Any $(g,n)$-controlled chain of congruences in $\N^d$ has length $\le 1 + g^{\omega^{4d}}(4dn)$.
\end{lemma}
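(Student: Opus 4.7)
The plan is to reduce this bound directly to the controlled bad-sequence bound for upwards closed sets stated in the paragraph preceding the lemma, by transferring the chain of congruences through the map $U(\cdot)$.

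First I would take a $(g,n)$-controlled chain of congruences $\Q_0 \subsetneq \Q_1 \subsetneq \dots \subsetneq \Q_\ell$ in $\N^d$ and fix, for each $i \in [0,\ell-1]$, a witness pair $\bm{q}_i \in \Q_{i+1} \setminus \Q_i$ with $\|\bm{q}_i\| \le g^i(n)$; such a $\bm{q}_i$ exists by the definition of $(g,n)$-controlledness. Applying \cref{lem:translation-strictly-monotone}(ii) to the inclusion $\Q_i \subseteq \Q_{i+1}$, I obtain $\bm{p}_i \in U(\Q_{i+1}) \setminus U(\Q_i)$ with $\|\bm{p}_i\| \le \|\bm{q}_i\| \le g^i(n)$. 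Combined with \cref{lem:translation-strictly-monotone}(i), which yields $U(\Q_i) \subsetneq U(\Q_{i+1})$, this shows that $U(\Q_0), U(\Q_1), \dots, U(\Q_\ell)$ is a strictly ascending chain of upwards closed subsets of $\N^{4d}$, and that it is again $(g,n)$-controlled (with witnesses $\bm{p}_0, \bm{p}_1, \dots, \bm{p}_{\ell-1}$).

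Next I would invoke the observation from the paragraph above \cref{lem:cong-chain}: any $(g,n)$-controlled chain of upwards closed sets in $\N^k$ has length at most $1 + g^{\omega^k}(nk)$, which follows from \cref{thm:bad-seq} by picking any witness sequence and noting that it is a $(g,n)$-controlled bad sequence. Applied with $k = 4d$ this yields $\ell \le 1 + g^{\omega^{4d}}(4dn)$, as required.

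The only genuinely nontrivial step is the transfer of strictness through $U(\cdot)$, and this has already been handled in \cref{lem:translation-strictly-monotone}; everything else is bookkeeping. One small point I would double-check in writing is that the monotonicity and super-homogeneity assumptions on $g$ are not needed for the transfer itself but only to apply \cref{thm:bad-seq}, and that the bound $\|\bm{p}_i\| \le g^i(n)$ from the maximum norm on $\N^{4d}$ really does follow from $\|\bm{q}_i\| \le g^i(n)$ via the componentwise bound in \cref{lem:translation-strictly-monotone}(ii).
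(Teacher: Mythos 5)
Your proof is correct and follows the paper's argument exactly: translate the congruence chain through $U(\cdot)$, use \cref{lem:translation-strictly-monotone} to preserve both strict inclusion and the control on witness norms, then apply the bad-sequence bound from \cref{thm:bad-seq} to the resulting chain of upwards closed subsets of $\N^{4d}$. The only minor quibble is a slight off-by-one looseness at the very end (the chain $\Q_0 \subsetneq \dots \subsetneq \Q_\ell$ has $\ell+1$ elements, so the derived bad sequence has length $\ell \le g^{\omega^{4d}}(4dn)$, giving chain length $\le 1 + g^{\omega^{4d}}(4dn)$), but this does not affect correctness.
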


Putting together \Cref{thm:cong-closure} and \Cref{lem:cong-chain} we can conclude that
\Cref{alg:pvas} terminates after $H^{\omega^{4d+3}}(e(n))$ iterations for some elementary function $e(n)$.

\begin{restatable}{proposition}{bidirectedpvas}
\label{prop:bidirected-pvas}
Reachability in bidirected pushdown VAS is in $\ComplexityFont{ACKERMANN}$,
and in $\F_{4d+3}$ if the dimension $d$ is fixed.
\end{restatable}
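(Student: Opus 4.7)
The plan is to analyze \Cref{alg:pvas} by combining the congruence-closure procedure of \Cref{thm:cong-closure} with the chain-length bound of \Cref{lem:cong-chain}. Correctness has already been sketched in the text: an induction on $i$ establishes that after line~\ref{line:ri} the set $R_i$ contains exactly those pairs $(\bm{x},\bm{y})$ that are connected by a run whose intermediate stack height never exceeds $i$. Hence, when the loop halts at iteration $k$ with $R_k = R_{k-1}$, we have $R_k = R_\P$, and reachability reduces to a Presburger satisfaction check on $R_k$.

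For the complexity, I would track the representation size $\|R_i\|$. The update in line~\ref{line:saturate} is a Presburger-definable combination of $R_{i-1}$ with the (finitely many) transitions, producing a formula whose size is polynomial in $\|R_{i-1}\|$ plus the input size $n$. The congruence closure in line~\ref{line:ri} then blows this up by a tower of exponentials of height $O(d)$ via \Cref{thm:cong-closure}. Setting $g(x) = \exp^{c_1 d}(x+n)$, which is elementary for fixed $d$ and primitive recursive when $d$ is part of the input, one obtains $\|R_i\| \le g(\|R_{i-1}\|)$ and hence $\|R_i\| \le g^i(n)$.

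Next I would argue that the strictly ascending chain $R_0 \subsetneq R_1 \subsetneq \cdots$ of congruences on $\N^d$ is $(g,n)$-controlled in the sense of \cref{sec:bad-seq}: since $R_i$ is semilinear with representation size at most $g^i(n)$, the Presburger-definable set $R_{i+1} \setminus R_i \subseteq \N^{2d}$ contains a componentwise-minimal element of norm at most $g^i(n)$, which serves as the required witness. \Cref{lem:cong-chain} then caps the chain length by $1 + g^{\omega^{4d}}(4dn)$. Each iteration costs at most $g^i(n)$ time, so the overall running time is bounded by $H^{\omega^{4d+3}}(e(n))$ for some elementary function $e$. Using the standard fact that elementary functions lie in $\mathscr{F}_{<3}$, this places the problem in $\F_{4d+3}$ for fixed $d$ and in $\ACK = \F_\omega$ in general.

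The main obstacle is establishing the $(g,n)$-control rigorously: we must exhibit a \emph{new} element of small norm in each step, not merely argue that the representation of $R_i$ is small. This hinges on the general fact that a semilinear set of representation size $s$ always contains a componentwise-minimal element of norm polynomial in $s$, applied to the symmetric difference $R_{i+1} \setminus R_i$, which is itself semilinear and elementarily computable from the representations of $R_i$ and $R_{i+1}$. A secondary bookkeeping point is the precise identification of the exponent $4d+3$, which is forced by the dimension $4d$ in \Cref{lem:cong-chain} together with the $+3$ absorbed from the elementary per-iteration cost into the Hardy hierarchy.
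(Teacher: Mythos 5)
Your proposal follows the same route as the paper's proof: bound $\|R_i\|$ via \cref{thm:cong-closure}, establish that the strictly ascending chain of congruences is $(g,n)$-controlled by exhibiting a small witness in the semilinear difference $R_{i+1}\setminus R_i$, invoke \cref{lem:cong-chain} to cap the chain length by a value of the form $g^{\omega^{4d}}(4dn)$, and then translate into the Hardy hierarchy and the classes $\F_{4d+3}$ / $\F_\omega$. The only divergences are cosmetic bookkeeping (how $n$ is folded into the control function, and the exact way the elementary per-iteration cost is absorbed into the $+3$ in the ordinal exponent), so the substance matches the paper's argument.
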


The detailed complexity analysis can be found in \Cref{app:ackermann}.
The result above also holds for bidirected pushdown {\em VASS} (\Cref{thm:bidirected-pvass})
by simulating the state in two additional counters.
Hence the complexity for dimension $d$ increases from $\F_{4d+3}$ to $\F_{4d+11}$.

\section{One-dimensional pushdown VASS}\label{sec:1d}
In this section we prove that reachability is in polynomial space for
bidirected 1-PVASS (\cref{thm:bidirected-pvass-1d}).  For the rest of this
section consider a 1-PVASS $\P = (Q,\Gamma,T)$ of $|Q|=n$ states.  To simplify
our bounds, we assume $|\Gamma|=2$. This can be achieved with a simple
encoding: To simulate stack letters $a_1,\ldots,a_k$, we can encode each
$a_i$ by the string $ab^iab^{k-i}a$.

\subparagraph{Preliminaries}
We extend the usual component-wise ordering $\Dominated$ to tuples  $(\Z\cup \{-\infty, \omega\})^k$.
Given two functions $f,g\colon X\to (\Z\cup \{-\infty, \omega\})^k$, we write $f\Dominated g$ to denote that $f(x) \Dominated g(x)$ for each $x\in X$.
We define the {\em one-step $\Z$ relation} $\Zstep$ for $\P$ similarly to the one-step relation $\xrightarrow{}$ but with a $\Z$-counter, i.e., we do not require the counter to remain non-negative.
A \emph{path from $p$ to $q$} consists of the initial state $p$ and a sequence of transitions of $\P$, such that it induces a run $(p_1,x_1,w_1) \Zstep (p_2,x_2,w_2) \Zstep \dots \Zstep (p_{j},x_{j},w_{j})$, with the requirement that $p_1=p$, $x_1=0$ and $w_1=w_{j}=\varepsilon$.
Given such a path $P$, we let
\begin{itemize}
\item $\MaxStackHeight(P)=\max_i |w_i|$ be the maximum stack height along $P$,
\item  $\PathSum(P)=x_j$ be the value of the counter at the end of $P$, and
\item $\PathMin(P)=\min_{i}x_i$ be the minimum value of the counter along $P$.
Note that $\PathMin(P)\leq 0$, as the counter is $0$ at the beginning of $P$.
\end{itemize}
We also write $\Reverse{P}$ for the reverse of $P$.
We denote by $\Paths{p}{q}$ the set of paths from $p$ to $q$, 
and let $\Paths{p}{q}_k=\{P\in \Paths{p}{q}\colon \MaxStackHeight(P)\leq k \}$ be the set of such paths with stack height at most $k$.
Given two paths $P_1$ and $P_2$, we write $P_1\Dominated P_2$ to denote that
$(\PathMin(P_1), \PathSum(P_1))\Dominated (\PathMin(P_2), \PathSum(P_2))$.

We say that a state $q$ is reachable from a state $p$ if  $(p,\bm{0},\varepsilon) \xrightarrow{*} (q,\bm{0},\varepsilon)$.
We say that $q$ is \emph{$\Z$-reachable} from $p$ if there is a path $P\in \Paths{p}{q}$ with $\PathSum(P)=0$ (hence state reachability implies $\Z$-reachability).
Given additionally a natural number $i \in \N$, we say that
$p$ {\em covers} $(q,i)$ if $(p,0,\varepsilon) \xrightarrow{*} (q,i+j,\varepsilon)$ for some $j\geq 0$.
Thus reachability implies coverability for $i=0$.
The following is a simpler proof of a reduction observed in~\cite{DBLP:journals/pacmpl/KP22}: For bidirected $1$-PVASS, reachability reduces to coverability and $\Z$-reachability.
\begin{lemma}\label{lem:reach_to_cov_and_zreach}
For any two states $p,q$ of $\P$, we have that $p$ reaches $q$ iff
(i)~$p$ covers $(q,0)$, 
(ii)~$q$ covers $(p,0)$, and
(iii)~$p$ $\Z$-reaches $q$.
\end{lemma}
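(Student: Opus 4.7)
The forward direction is immediate: a reachability run $(p,0,\varepsilon) \xrightarrow{*} (q,0,\varepsilon)$ itself witnesses both $\Z$-reachability (iii) and coverability (i) with slack $0$, while its bidirected reversal witnesses (ii).

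For the converse, the plan is to weld the three witnesses into a genuine reachability run. I fix runs $\sigma_1 \colon (p,0,\varepsilon) \xrightarrow{*} (q,i,\varepsilon)$ and $\sigma_2 \colon (q,0,\varepsilon) \xrightarrow{*} (p,j,\varepsilon)$ with $i,j \geq 0$ from (i) and (ii), and a $\Z$-path $P \in \Paths{p}{q}$ with $\PathSum(P) = 0$ from (iii). If $i = 0$ then $\sigma_1$ is already a reachability run, so I may assume $i + j \geq 1$.

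The key construction is a pump-up cycle at $p$ and a matching pump-down cycle at $q$. Concatenating $\sigma_1 \sigma_2$ yields a valid run $(p,0,\varepsilon) \xrightarrow{*} (p,i+j,\varepsilon)$---shifting $\sigma_2$ upward by $i$ preserves non-negativity---so iterating $n$ times produces $(p,0,\varepsilon) \xrightarrow{*} (p,n(i+j),\varepsilon)$. Symmetrically, $\sigma_2 \sigma_1$ is a non-negative cycle at $q$ with gain $i+j$, and by bidirectedness its reverse $\sigma_1^{-1}\sigma_2^{-1}$ is a valid transition sequence that visits exactly the same counter values in reverse, giving a run $(q,i+j,\varepsilon) \xrightarrow{*} (q,0,\varepsilon)$ that can likewise be iterated.

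Choosing $n$ large enough so that $n(i+j) \geq -\PathMin(P)$, I can safely execute $P$ from $(p,n(i+j),\varepsilon)$: the counter stays non-negative throughout (since $P$'s excursions dip at most $-\PathMin(P)$ below the start) and ends at $(q,n(i+j),\varepsilon)$, because $\PathSum(P) = 0$ and $P$ begins and ends with empty stack. Splicing the pump-up, $P$, and the pump-down yields the required reachability run. The main delicate point to verify will be that the reversed pump-down cycle really is a non-negative run when started at $n(i+j)$; this is precisely where bidirectedness is used, and reduces to the observation that reversing a non-negative run traverses the same counter values in reverse order.
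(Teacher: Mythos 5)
Your proposal is correct and takes essentially the same approach as the paper: both concatenate the two coverability witnesses into a pump cycle, sandwich the $\Z$-path between iterated pump-up and pump-down phases, and invoke bidirectedness to turn the reverse of the non-negative pump-up into a valid pump-down. The only cosmetic difference is the degenerate-case handling---the paper splits on whether $p$ covers $(q,1)$ (ensuring the chosen witnesses form a strictly positive cycle), while you fix arbitrary witnesses and split on $i=0$; the two case analyses are interchangeable.
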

\begin{proof}
Clearly if $p$ reaches $q$ then conditions (i)-(iii) hold, so we only need to argue about the reverse direction.
If $p$ does not cover $(q,1)$, since $p$ covers $(q,0)$, we have that $p$ reaches $q$, and we are done.
Similarly, if $q$ does not cover $(p,1)$, we have that $q$ reaches $p$ and thus we are done.
Finally, assume that $p$ covers $(q,1)$ and $q$ covers $(p,1)$, and let $P_p$ and $P_q$ be the corresponding paths witnessing coverability.
Let $P$ be a path witnessing that $p$ $\Z$-reaches $q$.
We construct the path $H_{\ell}$ witnessing the reachability of $q$ from $p$ as
$H_{\ell}=(P_p \Concat P_q)^{\ell}\Concat P \Concat (\Reverse{P_p} \Concat\Reverse{P_q})^{\ell}$,
where $\ell$ is chosen such that $\PathSum((P_p \Concat P_q)^{\ell})\geq -\PathMin(P)$.
\end{proof}
In light of \cref{lem:reach_to_cov_and_zreach}, for
\cref{thm:bidirected-pvass-1d}, it suffices to show that for bidirected
$1$-PVASS, both $\Z$-reachability and coverability can be decided in $\PSPACE$.
The former is known already: Reachability in $\Z$-PVASS belongs to
$\NP$~\cite{DBLP:conf/cav/HagueL11}; in the bidirected case, it is
even decidable in $\PTIME$~\cite{GanardiMajumdarZetzsche2022a}.  Thus, the
rest of this section is devoted to deciding coverability in $\PSPACE$.
\begin{lemma}\label{lem:1dcoverability}
Coverability in 1-dimensional bidirected pushdown VASS is in $\PSPACE$.
\end{lemma}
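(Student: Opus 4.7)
The plan is to reduce coverability to a fixed-point computation over a polynomially-sized table whose entries have at most exponential absolute value, and hence fit in polynomial space. Let $n$ denote the size of the input bidirected 1-PVASS $\P$.

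For each ordered pair of states $(p,q)$ I would maintain three numerical invariants in $\Z \cup \{-\infty, \omega\}$ that jointly describe the set of \emph{summaries} from $p$ to $q$ (runs starting and ending at the same stack level without ever dipping below it), characterized by their profile $(\PathMin, \PathSum)$. A natural choice for the three bounds is: (i)~the largest counter value coverable at $q$ starting from $(p, 0, \varepsilon)$, i.e., the maximum $\PathSum(P)$ over executable summaries from $p$ to $q$; (ii)~the most negative $\PathMin(P)$ attainable over $\Z$-summaries, i.e., how deep the counter can dip below the starting value; and (iii)~an analogous bound describing summaries that return exactly to the starting counter. The crucial observation is that bidirectedness turns a summary of profile $(\PathMin,\PathSum)$ into one of profile $(\PathMin - \PathSum, -\PathSum)$ by reversal, collapsing what would otherwise be a two-dimensional Pareto frontier of achievable profiles into these three numbers; in the non-bidirected case one would need to track an unbounded frontier.

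The saturation itself follows the standard structure of pushdown summary computation. Base entries come from individual transitions with stack operation $\varepsilon$; inductive rules cover (a)~concatenation of two summaries via an intermediate state, (b)~nesting of a summary between matched push-pop pairs $p \xrightarrow{\gamma} p'$, $q' \xrightarrow{\bar\gamma} q$, and (c)~pumping with a net-zero loop, using the reverse transitions, to drive $\PathMin$ arbitrarily low or $\PathSum$ arbitrarily high. Each rule is monotone, and soundness is routine: the updated bound at $(p,q)$ is realized by concatenating explicit witness summaries obtained from the hypotheses. Completeness -- that three bounds per pair suffice to encode all coverability-relevant information -- is the main conceptual step and is where bidirectedness is essential.

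The critical quantitative step is a pumping argument bounding every finite invariant by $2^{\mathrm{poly}(n)}$ in absolute value. Any summary witnessing an extreme value decomposes, via standard pushdown pumping, into polynomially many segments whose counter effects are individually bounded; bidirectedness additionally lets us cancel surplus counter mass without violating stack matching, so witnesses can be shortened to exponential size while preserving their profile. It follows that each of the $3|Q|^2$ table entries fits in $\mathrm{poly}(n)$ bits, and the saturation can then be carried out in $\PSPACE$ either by iterative in-place updates (terminating after exponentially many monotone steps, each computable in polynomial space) or by nondeterministic guess-and-verify followed by Savitch's theorem. Coverability of $(t, 0)$ from $(s, 0, \varepsilon)$ is then read off from the first invariant at $(s,t)$. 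The main obstacle is the completeness step: proving that the three bounds suffice requires a careful case analysis that exploits the reverse transitions to transform an arbitrary covering run into one whose intermediate summaries attain the bookkept extremes.
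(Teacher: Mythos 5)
Your high-level strategy matches the paper's: a saturation procedure over per-state-pair summaries characterized by the profile $(\PathMin,\PathSum)$, with exponential bounds on finite table entries yielding a $\PSPACE$ algorithm, and with the reversal observation $(\PathMin,\PathSum)\mapsto(\PathMin-\PathSum,-\PathSum)$ doing the heavy lifting. However, there are two genuine gaps.

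First, your invariant (ii), ``the most negative $\PathMin$ attainable over $\Z$-summaries,'' is not a useful quantity: it is $-\infty$ whenever there is any cycle of nonzero counter effect (which holds in any non-degenerate bidirected system, since every cycle has a reverse of opposite sign, and either can be pumped). What one actually needs is the \emph{maximum} of $\PathMin(P)$ over summaries $P\in\Paths{p}{q}$, i.e.\ the least demanding initial credit needed for some run from $p$ to $q$, not the worst case. The paper's $\DFinite_k(p,q)=(a,b)$ records exactly this $a=\max\PathMin$, together with $b=\sup\{\PathSum(P):\PathMin(P)=a\}$; and a separate per-state quantity $\DOmega_k(p)=\max\{\PathMin(P):P\in\Paths{p}{p}_k,\,\PathSum(P)>0\}$ tracks the cheapest positive cycle. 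Your (i) (max executable $\PathSum$) is subsumed by $(a,b)$ (it is $b$ when $a=0$ and $-\infty$ otherwise), and your (iii) is not precisely specified.

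Second, and more substantially, you identify completeness -- that these few numbers suffice -- as ``the main conceptual step'' but you give no argument for it, and this is in fact where all the work lies. The paper's resolution is the interaction between $\DFinite$ and $\DOmega$: if some summary $P$ has $\PathSum(P)>b$, then bidirectedness lets one close $P$ with the reverse of the $\DFinite$-optimal witness to obtain a positive cycle at $p$ with $\PathMin$ at least $\PathMin(P)$, so the information in $P$ is recaptured by $\DOmega_k(p)$ (this is \cref{lem:domega_pumping}). That $\DOmega$ in turn allows pumping to absorb arbitrary deficits makes the two summaries mutually sufficient, and all of \cref{lem:min_bounded,lem:sum_bounded,lem:min_cycle_bounded} -- the exponential bounds you invoke as ``standard pumping'' -- depend on this specific interplay with bidirectedness rather than on generic pushdown pumping alone. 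Without spelling out this argument, the proposal as written does not establish the lemma.
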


\subparagraph{Summary functions}
We define a summary function $\DFinite_k\colon Q\times Q\to (\Z_{\leq 0} \cup \{ -\infty \}) \times (\Z\cup\{-\infty, \omega \})$, parametric on $k\in \N$, as $\DFinite_k(p,q)=(a,b)$, where $a$ and $b$ are defined as follows.
\begin{align*}
a= \max( \{ \PathMin(P)\colon P\in \Paths{p}{q}_k \} )\qquad
b= \sup( \{ \PathSum(P)\colon P\in \Paths{p}{q}_k \text{ and } \PathMin(P)=a  \} )
\end{align*}
with the convention that $\max(\emptyset)=\sup(\emptyset) = -\infty$.
We occasionally write $\DFinite_k(p,q)=(a,\Some)$ to denote that $\DFinite_k(p,q)=(a,b)$ for some $b$.
We further define a summary function $\DOmega_k\colon V\to (\Z_{\leq 0} \cup \{ -\infty \})$, parametric on $k\in \N$, as follows.
\begin{align*}
\DOmega_k(p) = \max( \{ \PathMin(P)\colon P\in \Paths{p}{p}_k \text{ and } \PathSum(P)>0 \} )
\end{align*}
Recall that we use $n=|Q|$ for the number of states in $\P$.
It is well-known that in any pushdown system of $n$ states (and only two stack letters), a shortest path between two states has length $2^{O(n^2)}$ (e.g.\ this follows by inspecting
a proof of the pumping lemma for context-free languages~\cite[Lemma~6.1]{HU1979}; a precise bound was obtained in~\cite{Pierre1992}).
Since both the weight and the minima of any path are lower-bounded by minus the length of the path,
if $\Paths{p}{q}_{k}\neq \emptyset$, then a shortest path in  $\Paths{p}{q}_{k}$ witnesses $\DFinite_k(p,q)=(a,b)$ where $a$ and $b$ are at most exponentially negative.
This is established in the following lemma.
\begin{lemma}\label{lem:min_bounded}
Consider any two states $p,q$ and natural number $k$, and let $\DFinite_k(p,q)=(a,b)$.
If $a>-\infty$ then $a,b\geq-\beta$, for $\beta=2^{O\left(n^2\right)}$.
\end{lemma}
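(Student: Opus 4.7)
The plan is to use a context-free pumping argument to exhibit a short witness inside $\Paths{p}{q}_k$ whenever this set is nonempty, and then to transfer the resulting length bound to both $a$ and $b$. Since $a > -\infty$, the set $\Paths{p}{q}_k$ is nonempty. The superset $\Paths{p}{q}$ of all paths (without any restriction on the stack) is the language of the standard triple-construction context-free grammar, whose nonterminals index balanced sub-paths between pairs of states. With $|\Gamma|=2$, this grammar has $O(n^2)$ nonterminals, and by the well-known shortest-word bound for nonempty context-free languages---the pumping-lemma argument underlying \cite[Lemma~6.1]{HU1979}, with sharper constants in \cite{Pierre1992}---the language contains a word of length at most $2^{O(n^2)}$.

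The second step will be to observe that the same length bound still applies when we insist on staying inside $\Paths{p}{q}_k$. Pumping \emph{down} in a derivation corresponds, on the path side, to excising a balanced sub-path $P_2$ from a factorization $P = P_1 P_2 P_3$, leaving the shorter path $P_1 P_3$. The sequence of configurations traversed by $P_1$ and by $P_3$ is unaffected by the excision, so the maximum stack height of $P_1 P_3$ cannot exceed that of $P$. Iterating this shrinking step starting from any path in $\Paths{p}{q}_k$ gives a path $P^* \in \Paths{p}{q}_k$ with $|P^*| \le 2^{O(n^2)}$.

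Finally, after the standard unary normalization of counter updates, each transition changes the counter by at most a constant, so $\PathMin(P^*) \ge -O(|P^*|) \ge -\beta$ for an appropriate $\beta = 2^{O(n^2)}$. Since $a$ is the maximum of $\PathMin$ over $\Paths{p}{q}_k$, we conclude $a \ge \PathMin(P^*) \ge -\beta$. For $b$: if $b = \omega$ the claim is immediate; otherwise every path $P$ in the set defining $b$ has $\PathMin(P) = a$, and since $\PathSum(P) \ge \PathMin(P)$ we get $b \ge a \ge -\beta$.

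The main subtlety will be in the second step: we must ensure that shrinking a path does not inadvertently raise its maximum stack height above $k$. Pumping \emph{down}---excising a balanced block---rather than pumping up is exactly the right primitive here, since it only removes transitions and the associated stack configurations and hence never raises the maximum stack height along the path.
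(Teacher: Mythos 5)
Your proof is correct and takes essentially the same route as the paper's (which is only a sketch in the surrounding text and not proved in the appendix): a shortest path within $\Paths{p}{q}_k$ has length $2^{O(n^2)}$ by a context-free pumping argument, and since each step changes the counter by a bounded amount, both $\PathMin$ and $\PathSum$ of such a path are at most exponentially negative; then $a$ and $b$ inherit this bound because $a$ is a maximum over a set containing the short witness, and $b$ is a supremum over a nonempty set whose elements $P$ satisfy $\PathSum(P)\ge\PathMin(P)=a$.

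One inaccuracy worth fixing: pumping down in a context-free derivation does \emph{not} correspond to excising a single balanced sub-path from a three-part factorization $P=P_1P_2P_3$. The standard pumping gives $uvxyz\rightsquigarrow uxz$, i.e.\ it removes \emph{two non-contiguous} pieces $v$ and $y$ (the parts between the outer and inner occurrence of a repeated nonterminal) while keeping the middle block $x$. In the path picture, $v$ goes from $(r,w_0)$ to $(r,w_1)$ with $w_1$ a proper extension of $w_0$, so $v$ is not balanced and cannot be excised by itself. Your conclusion nevertheless stands: after replacing $vxy$ by $x$, the block $x$ is re-based at the lower stack level $|w_0|\le|w_1|$, so every intermediate stack height in $x$ decreases or stays the same, while $u$ and $z$ are untouched; hence the maximum stack height cannot increase and the shortened path stays in $\Paths{p}{q}_k$. (An alternative is the two-step argument the paper uses for \cref{lem:min_cycle_bounded}: first bound the max stack height of a length-minimal path by $O(n^2)$, then bound the length using only simple excisions of balanced excursions once a $(\text{state},\text{stack})$ pair repeats. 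Note that simple excision alone, without first bounding the height, gives only a bound of the form $n\cdot 2^{O(k)}$, which is not enough when $k$ is large.)
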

The bidirectedness of $\P$ also leads to the following lemma.
\begin{lemma}\label{lem:sum_bounded}
Consider any two states $p,q$ and natural number $k$, and let $\DFinite_k(p,q)=(a,b)$.
There exists a constant $\alpha$ independent of $\P$ and $k$, such that, if $b>2^{\alpha n^2}$, then $b=\omega$.
\end{lemma}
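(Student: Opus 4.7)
I aim to prove the contrapositive: assume $b$ is a finite integer with $b > 2^{\alpha n^2}$ for a suitable constant $\alpha$ to be chosen, and derive a contradiction. Since the supremum $b$ ranges over integers, its finiteness implies it is achieved by some witness path $P^* \in \Paths{p}{q}_k$ with $\PathMin(P^*) = a$ and $\PathSum(P^*) = b$. The strategy is to reduce $P^*$ to an acyclic witness of length at most $2^{O(n^2)}$, which forces $b \leq 2^{O(n^2)}$ since counter updates are unit by the unary encoding stated in the Preliminaries.

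The heart of the argument shows that $P^*$ cannot contain any (state, stack)-cycle $\sigma$ (a contiguous subpath returning to the same state and stack) with $\PathSum(\sigma) \neq 0$; bidirectedness is crucial here. If $\PathSum(\sigma) > 0$, pumping $\sigma$ in place yields $P' \in \Paths{p}{q}_k$ (pumping at a fixed configuration preserves the maximum stack height) with $\PathSum(P') = b + \PathSum(\sigma) > b$; a direct calculation from the definition of $\PathMin$ shows that the inserted copy of $\sigma$ occurs at a strictly higher base counter than the original, so $\PathMin(P') \geq a$, and combined with the fact that every path in $\Paths{p}{q}_k$ satisfies $\PathMin \leq a$ by maximality of $a$, we obtain $\PathMin(P') = a$, contradicting maximality of $b$. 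If $\PathSum(\sigma) < 0$, bidirectedness furnishes the reverse $\bar\sigma$, again a (state, stack)-cycle at the same configuration, with $\PathSum(\bar\sigma) = -\PathSum(\sigma) > 0$; substituting $\bar\sigma$ for $\sigma$ yields a path with $\PathSum = b - 2\PathSum(\sigma) > b$, and the analogous computation gives $\PathMin = a$, again a contradiction.

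Zero-sum cycles can be freely unpumped: a direct computation using $\PathSum(\sigma) = 0$ shows that removing such a cycle preserves both $\PathSum = b$ and $\PathMin = a$ while strictly shortening the path. Iterating, I obtain a witness $P'$ containing no (state, stack)-cycle and still achieving $(a, b)$. To bound $|P'|$ by $2^{O(n^2)}$ independently of $k$, I appeal to the pushdown shortest-path bound cited just before \cref{lem:min_bounded}: through the standard pushdown-to-CFG construction, balanced paths from $(p,\varepsilon)$ to $(q,\varepsilon)$ correspond to derivations in a grammar with $O(n^2)$ nonterminals, so a derivation tree of depth exceeding $O(n^2)$ contains a repeated nonterminal on some root-to-leaf path, yielding a pumpable subpath whose three possible signs are handled exactly as in the cycle analysis (pumped if positive, reversed and substituted if negative, unpumped if zero) to either further shorten $P'$ or contradict maximality of $b$. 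Hence $|P'| \leq 2^{O(n^2)}$, and $b = \PathSum(P') \leq |P'| \leq 2^{\alpha n^2}$ for suitable $\alpha$, contradicting $b > 2^{\alpha n^2}$.

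The main difficulty lies in the last step: naive pigeonhole on (state, stack)-configurations yields only an $O(n \cdot 2^k)$ bound on $|P'|$, whereas the required $k$-independent $2^{O(n^2)}$ bound must be delivered by the CFG-level pumping argument, combined with bidirectedness so that the sign of the pumpable subpath's counter effect never obstructs either further shortening or an immediate contradiction with $b$'s maximality.
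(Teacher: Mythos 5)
Your approach differs substantially from the paper's. The paper never bounds the length of any witness path. Instead it takes a shortest path $P'\in\Paths{q}{p}_k$ (nonempty by bidirectedness) of length $|P'|\le 2^{\alpha n^2}$ by the pushdown shortest-path bound, and observes that if some $P\in\Paths{p}{q}_k$ has $\PathMin(P)=a$ and $\PathSum(P)>|P'|$, then $C=P\Concat P'$ is a cycle on $p$ with $\PathSum(C)>0$ and $\PathMin(C)\geq a$, so $\DOmega_k(p)\geq a$. \cref{lem:summary_functions} (proved separately) then yields $a=\DOmega_k(p)$ and $b=\omega$. Your proposal never invokes \cref{lem:summary_functions} and instead tries to compress a maximal witness directly.

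There is a genuine gap in exactly the step you flag as the main difficulty. The sign-based trichotomy does not transfer from the in-place $(\text{state},\text{stack})$-cycle case to the CFG-context case. Suppose the outer nonterminal yields $u_1w_2u_2$ and the inner one yields $w_2$. If $s=\PathSum(u_1)+\PathSum(u_2)>0$ but $\PathSum(u_1)<0$, then pumping to $u_1u_1w_2u_2u_2$ shifts the inner $w_2$ and the first copy of $u_2$ \emph{downward} by $|\PathSum(u_1)|$, so the pumped path may have $\PathMin<a$; it then lies outside the set over which $b$ is a supremum, and no contradiction with the maximality of $b$ follows. This is unlike a true $(\text{state},\text{stack})$-cycle $\sigma$, where pumping at a fixed configuration shifts everything downstream upward uniformly by $\PathSum(\sigma)>0$. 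A symmetric problem hits the zero-sum contraction when $\PathSum(u_1)>0$. Finally, ``reversing'' a context when $s<0$ is not well-typed: $\Reverse{u_1}$ and $\Reverse{u_2}$ have interchanged endpoints and cannot be substituted in place for $u_1,u_2$, in contrast to a cycle $\sigma$ whose reverse $\Reverse{\sigma}$ starts and ends at the same configuration. The $(\text{state},\text{stack})$-cycle part of your argument is sound, but as you note it only yields a $k$-dependent bound, and the step meant to remove the $k$-dependence is the one that fails.
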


The intuition behind the summary functions $\DFinite_k$ and $\DOmega_k$ is as follows.
\cref{lem:min_bounded} and \cref{lem:sum_bounded} hint on an algorithm to decide coverability by saturating $\DFinite_k$ iteratively for increasing $k$.
The lemmas state that the finite values of $\DFinite_k$ are exponentially bounded, and thus the process is guaranteed to reach a fixpoint within exponentially many iterations.
An obstacle to this approach is that $\DFinite_k$ may fail to capture paths that are useful in subsequent iterations.
In particular, $\DFinite_k(p,q)$ misses paths that can reach $q$ with a larger counter at the cost of a lower minima on the way.
The following lemma shows that $\DOmega_k$ captures the effects of all paths missed by $\DFinite_k$, and allows the two summaries to be \emph{mutually} saturated.
\begin{lemma}\label{lem:domega_pumping}
For any states $p,q$, let $\DFinite_k(p,q)=(a,b)$, and assume there exists a path $P\in \Paths{p}{q}_k$ with $\PathSum(P)>b$.
Then $\DOmega_k(p)\geq \PathMin(P)$.
\end{lemma}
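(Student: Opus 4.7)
The plan is to build from $P$ a cycle $R \in \Paths{p}{p}_k$ with $\PathSum(R) > 0$ and $\PathMin(R) \ge \PathMin(P)$, which by definition of $\DOmega_k$ immediately gives $\DOmega_k(p) \ge \PathMin(P)$. The idea is to append to $P$ the reverse of a carefully chosen witness path for the maximum in $\DFinite_k(p,q)$, exploiting bidirectedness to make the reversal a legal run.

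First I would extract from the hypothesis $\PathSum(P) > b$ the strict inequality $\PathMin(P) < a$. Indeed, if one had $\PathMin(P) = a$, then $P$ would lie in the set whose supremum defines $b$, giving $\PathSum(P) \le b$, a contradiction. This strictness will be crucial later. Next, since $\Paths{p}{q}_k$ is nonempty (it contains $P$) and $\PathMin$ takes values in $\Z_{\le 0} \cup \{-\infty\}$, the maximum $a$ is actually attained by some $Q \in \Paths{p}{q}_k$. By the definition of $b$ as the supremum over such maximizing paths, $\PathSum(Q) \le b < \PathSum(P)$.

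The candidate cycle is $R := P \Concat \Reverse{Q}$. Bidirectedness guarantees that $\Reverse{Q}$ is a legal sequence of $\Zstep$-transitions from $q$ back to $p$; a short computation shows that its counter trajectory launched from $0$ is the pointwise negation of $Q$'s trajectory in reverse order, so $\PathSum(\Reverse{Q}) = -\PathSum(Q)$ and $\PathMin(\Reverse{Q}) = \PathMin(Q) - \PathSum(Q) = a - \PathSum(Q)$. Reversal preserves the stack-height profile, so $\MaxStackHeight(\Reverse{Q}) = \MaxStackHeight(Q) \le k$, and since both $P$ and $\Reverse{Q}$ start and end with empty stack, $R \in \Paths{p}{p}_k$. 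Additivity gives $\PathSum(R) = \PathSum(P) - \PathSum(Q) > 0$, while the minimum along $R$ is the smaller of $\PathMin(P)$ (reached in the first half) and $\PathSum(P) + \PathMin(\Reverse{Q}) = (\PathSum(P) - \PathSum(Q)) + a$ (reached in the second half). Because $\PathSum(P) > \PathSum(Q)$, the latter quantity exceeds $a$, which in turn exceeds $\PathMin(P)$; hence $\PathMin(R) = \PathMin(P)$, and $R$ witnesses $\DOmega_k(p) \ge \PathMin(P)$.

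The only subtle point is ensuring that the returning segment $\Reverse{Q}$ does not dip below $\PathMin(P)$, and this is exactly where both strict inequalities $\PathSum(P) > b$ and $\PathMin(P) < a$ are combined: they conspire to give a strictly positive upward shift $\PathSum(P) - \PathSum(Q)$ applied to the minimum of $\Reverse{Q}$, lifting it safely above $\PathMin(P)$. Everything else is a bookkeeping exercise on counter values and stack heights across the concatenation.
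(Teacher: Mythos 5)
Your proof is correct and follows essentially the same approach as the paper: establish $\PathMin(P)<a$, concatenate $P$ with the reverse of a maximizing path, and check that the resulting cycle has positive effect and minimum exactly $\PathMin(P)$. The only cosmetic difference is that the paper picks a witness $H$ with $\PathMin(H)=a$ and $\PathSum(H)=b$ exactly, whereas you only require $\PathMin(Q)=a$ and use $\PathSum(Q)\le b$; both choices make the bookkeeping go through.
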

Moreover, the values of $\DOmega_k$ are also exponentially bounded, and thus the mutual saturation can be carried out in polynomial space.
\begin{lemma}\label{lem:min_cycle_bounded}
For any state $p$, if $\DOmega_k(p)>-\infty$ then $\DOmega_k(p)\geq -\beta$, for $\beta=2^{O\left(n^2\right)}$.
\end{lemma}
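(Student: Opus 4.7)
The plan is to follow the template of \cref{lem:min_bounded}: exhibit a positive-sum witness in $\Paths{p}{p}_k$ of length at most $2^{O(n^2)}$, and conclude that its $\PathMin$ is at least $-2^{O(n^2)}$, since each transition alters the counter by at most a constant.

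Concretely, suppose $\DOmega_k(p)>-\infty$ and fix any $P\in\Paths{p}{p}_k$ with $\PathSum(P)>0$. I iteratively shorten $P$ while preserving both the stack bound $\leq k$ and positivity of the sum. If the current path is longer than $2^{cn^2}$ for a sufficiently large constant $c$, the pushdown pumping lemma, applied to the pushdown trace whose underlying grammar has $O(n^2)$ nonterminals since $|\Gamma|=2$, produces a decomposition $\alpha C\beta$ where $C$ is a nonempty matched-stack sub-loop at some state $q$. Dropping $C$ never increases stack height, so $\alpha\beta\in\Paths{p}{p}_k$.

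The analysis splits on the sign of $\mathrm{eff}(C)$. If $\mathrm{eff}(C)\leq 0$, then $\PathSum(\alpha\beta)=\PathSum(P)-\mathrm{eff}(C)\geq\PathSum(P)>0$ and $\alpha\beta$ is strictly shorter than $P$, so we continue with $\alpha\beta$ in place of $P$. If $\mathrm{eff}(C)>0$, we exploit bidirectedness: by reversing the shorter side factor we form the $p$-to-$p$ loop $\alpha C\bar{\alpha}$ (when $|\alpha|\leq|\beta|$) or $\bar{\beta} C\beta$ (when $|\beta|<|\alpha|$). Each such loop lies in $\Paths{p}{p}_k$ (its stack profile is contained in that of $P$), has effect $\mathrm{eff}(C)>0$, and has length strictly smaller than $|P|$ whenever $|\alpha|\neq|\beta|$.

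The main obstacle is the symmetric subcase $|\alpha|=|\beta|$ of the second case, where a single pumping step fails to shorten the path. I would handle this by choosing the pumping locus greedily: among the polynomially many repeated pushdown configurations in a sufficiently long path there are always two whose positions in $P$ are not perfectly mirrored about the midpoint, forcing $|\alpha|\neq|\beta|$ for at least one choice. Consequently every iteration strictly decreases $|P|$, and the procedure terminates at a positive-sum witness of length $2^{O(n^2)}$, whose $\PathMin$ then automatically lies above $-2^{O(n^2)}$, yielding $\DOmega_k(p)\geq -\beta$ with $\beta = 2^{O(n^2)}$.
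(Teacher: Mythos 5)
Your high-level plan — exhibit a short positive witness in $\Paths{p}{p}_k$ and read off a bound on $\PathMin$ from its length — matches the paper's goal, but the route you take has a genuine gap, and the paper's route is quite different.

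The main problem is the shape of the pumping decomposition. Applying the grammar-based pumping lemma (with $O(n^2)$ nonterminals of the form $(q,a,r)$) to a pushdown trace does \emph{not} yield a decomposition $P=\alpha C\beta$ with a single matched-stack loop $C$ at a state. It yields $P=\mu_1\gamma_1\delta\gamma_2\mu_2$, where the outer matched segment $\gamma_1\delta\gamma_2$ and the inner matched segment $\delta$ both run from $q$ to $r$ for the repeated nonterminal $(q,a,r)$; pumping down removes the \emph{two} pieces $\gamma_1$ (a $q$-to-$q$ ascent from stack context $sa$ to $s'a$) and $\gamma_2$ (an $r$-to-$r$ descent from $s'a$ to $sa$). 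These two pieces live at different states and stack contexts and do not compose into a cycle, so your case split on ``$\mathrm{eff}(C)\le 0$'' vs.\ ``$\mathrm{eff}(C)>0$'' and the bidirected move $\alpha C\bar\alpha$ never get off the ground. A single-loop decomposition $\alpha C\beta$ would require a \emph{repeated configuration}, and that pigeonhole only gives a $2^{O(n^2)}$ bound after you have already bounded $\MaxStackHeight(P)$ — which your argument does not do. The further fix for ``$|\alpha|=|\beta|$'' (choosing a non-mirrored locus) is also unsubstantiated; nothing guarantees such a choice exists, and the iterative-shortening process does not obviously terminate.

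The paper proceeds quite differently, and the difference is exactly what sidesteps the positive-effect case you struggle with. It takes the \emph{shortest} positive cycle $P$ and proves two bounds by contradiction. First, if $\MaxStackHeight(P)>2n^2$, then by pigeonhole over the $n^2$ state pairs there are \emph{three} stack levels carrying the same (ascent-state, descent-state) pair; this yields \emph{two independent removable pairs} of subpaths, hence three strictly shorter paths $P'_1,P'_2,P'_{1,2}$ in $\Paths{p}{p}_{<k}$ obtained by removing the first pair, the second pair, or both. Minimality of $P$ (together with bidirectedness, so that the reverse of a shorter path is also available) forces $w(P'_1)=w(P'_2)=w(P'_{1,2})=0$, and then the additive decomposition
\[
w(P)=w(P'_{1,2})+\bigl(w(P'_1)-w(P'_{1,2})\bigr)+\bigl(w(P'_2)-w(P'_{1,2})\bigr)=0
\]
contradicts $w(P)>0$. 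With the stack height now bounded by $2n^2$, the configuration space has size $\le 2n|\Gamma|^{2n^2}$, and a second three-occurrence pigeonhole on repeated configurations (again giving two independently removable cycles and the same cancellation) bounds $|P|$ by $2n|\Gamma|^{2n^2}=2^{O(n^2)}$, which immediately bounds $\PathMin(P)$. The ``three occurrences $\to$ two independent removables $\to$ weight cancellation'' device is the missing ingredient in your proposal: it eliminates the troublesome case where the piece you would like to remove has positive effect, without ever needing the ad-hoc mirror/reversal surgery.
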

In the remainder of this section we describe the saturation procedure for $\DFinite_k$ and $\DOmega_k$.

\subparagraph{Finite graphs}
We consider weighted finite graphs $G=(V_G, E_G, \Weight_G)$ where $\Weight_G\colon E_G\to \Z$.
Moreover, we assume that every connected component of $G$ is strongly connected.
By a small abuse we extend some of the above notation on $\P$ to such graphs.
Given two nodes $u$, $v$ in $G$, we write $\Paths{u}{v}^G$ for the set of paths from $u$ to $v$ in $G$.
We similarly extend the summary functions to $\DFinite^{G}$ and $\DOmega^{G}$, defined by the corresponding paths $P\in \Paths{u}{v}^G$.

\begin{lemma}\label{lem:dfinite_computation}
Given a graph $G$ as above, the summary functions $\DFinite^G$ and $\DOmega^G$ can be computed in polynomial time.
\end{lemma}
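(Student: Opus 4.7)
The plan is to reduce both summary computations to polynomial-time subroutines on weighted graphs. Since every connected component of $G$ is strongly connected, $\Paths{u}{v}^G = \emptyset$ precisely when $u$ and $v$ lie in different components, in which case both summaries take the value $-\infty$ by the convention $\max(\emptyset)=\sup(\emptyset)=-\infty$. We may thus restrict attention to pairs within the same SCC, where walks always exist in both directions.

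To compute $\DFinite^G(u,v) = (a,b)$: the first coordinate $a$ is the largest integer $a \le 0$ such that some walk $u \to v$ has all prefix sums $\ge a$. Shifting the initial counter to $-a$, this feasibility question becomes reachability of $v$ from $u$ in a one-counter net (weighted graph with an $\N$-valued counter) with binary-encoded weights, which is known to be polynomial-time decidable. A priori $a \in [-nW, 0]$ where $n = |V_G|$ and $W = \max_{e \in E_G} |\Weight_G(e)|$, because a simple $(u,v)$-walk in the SCC has at most $n$ edges and thus prefix minimum $\ge -nW$. Binary searching over this range in $O(\log(nW))$ feasibility queries yields $a$ in polynomial time. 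Given $a$, the value $b = \sup\{\PathSum(P) : P \in \Paths{u}{v}^G,~\PathMin(P) = a\}$ equals $\omega$ precisely when a positive-weight cycle can be inserted into some optimal walk while still maintaining $\PathMin \ge a$; this is detected in polynomial time by a Bellman-Ford-style search for a positive cycle in an auxiliary graph restricted to the pairs of (vertex, counter value) consistent with the constraint. If $b$ is finite, it is the maximum total weight over constrained walks and is computed by a further bounded-counter reachability query.

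The computation of $\DOmega^G(p) = \max\{\PathMin(P) : P \in \Paths{p}{p}^G,~\PathSum(P) > 0\}$ proceeds by the same binary-search strategy: for each candidate $a$, ask whether there is a closed walk $p \to p$ with prefix min $\ge a$ and strictly positive total sum. After the shift, this reduces to coverability of configuration $(p, -a+1)$ from $(p, -a)$ in the same one-counter net, again polynomial-time decidable, and the optimal $a$ lies in $[-nW, 0]$ by the same simple-cycle argument.

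The main obstacle is ensuring polynomial runtime despite the binary encoding of weights: the naive product construction of graph vertices with explicit counter values is exponentially large. The polynomial bound therefore rests on the known polynomial-time algorithms for reachability and coverability in one-counter nets with succinctly-encoded updates, together with the a priori bound $|a| \le nW$ on the optimal threshold. A careful bookkeeping of which auxiliary queries can reuse computation across the binary search would then give the claimed polynomial total runtime.
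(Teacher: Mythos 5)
Your proposal takes a genuinely different route from the paper. The paper's proof is a direct, self-contained algorithm: it first runs a Bellman--Ford pass to detect positive cycles, identifies a \emph{critical node} $x$ on each positive cycle (a node from which the cycle attains prefix-minimum $0$), and removes the outgoing edges of $x$; iterating this produces a subgraph $G'$ with no positive cycles and a set $X$ of critical nodes. Then, for a fixed source $u$, a Bellman--Ford relaxation with the operator $(a,b)\Extend c = (\min(a,b+c),\,b+c)$ computes $\DFinite^{G'}(u,\cdot)$ in $n-1$ rounds. Finally $\DOmega^G(u)$ is recovered as the least first component over the critical nodes in $X$, and $\DFinite^G(u,v)$ is either $(c,\omega)$ (if $v$ is reachable with budget $c = \DOmega^G(u)$, meaning a positive cycle can be pumped) or $\DFinite^{G'}(u,v)$ otherwise. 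In contrast, you reduce the computation to coverability queries in one-counter nets (1-VASS) with binary-encoded updates, combined with a binary search over the candidate threshold $a\in[-nW,0]$.

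Your approach is workable in outline, but it has two concrete gaps. First, you invoke polynomial-time decidability of reachability/coverability for one-counter nets with binary weights as a black box. For \emph{reachability} (exact target counter) with binary updates this is in fact $\NP$-complete; you actually only need state-reachability/coverability, which is indeed polynomial, but this should be stated precisely and cited (or proved --- and any such algorithm essentially amounts to the Bellman--Ford-plus-positive-cycle argument the paper gives, so the reduction is somewhat circular as a proof strategy). Second, the treatment of the second component $b$ is not carried through: you first propose a Bellman--Ford search over an explicit $(\text{vertex},\text{counter value})$ product graph, which you then correctly observe is exponentially large under binary encoding, and you resolve this only by again gesturing at the one-counter-net black box without specifying the reduction for the $b=\omega$ test, the bound on finite $b$ (one needs, e.g., $|b|\le nW$ when $b$ is finite, which requires arguing that an optimal walk without positive cycles can be taken short), or the binary search for finite $b$. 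The paper's algorithm avoids all of this by handling positive cycles explicitly and relaxing with the extend operator over the two-component domain directly.

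Your overall plan is sound, but to make it a proof you should: state and cite (or prove) the precise 1-VASS coverability fact you rely on; establish the $O(nW)$ bound on finite $b$; and give the explicit reduction of the $b=\omega$ test and the finite-$b$ binary search to coverability queries, rather than relying on an exponentially large product graph.
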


\subparagraph{Computing $\DFinite_k$ and $\DOmega_k$.}
We now describe a dynamic-programming algorithm for computing $\DFinite_k$ and $\DOmega_k$, for increasing values of $k$.
We let $\StackAlphabetBot=\Gamma\cup\{\bot \}$, where $\bot$ is a special symbol, and assume without loss of generality that every transition in $\P$ that manipulates the counter does not affect the stack.
Afterwards we will argue that the algorithm terminates within exponentially many iterations.
\begin{enumerate}
\item We start with $k=0$.
We construct a graph $G_0$ that consists of nodes $\tuple{p,\bot}$  where $p$ is a state of $\P$.
Moreover, $G_0$ contains the corresponding transitions of $\P$ that do not manipulate the stack.
In particular, for every transition $(p,i,\varepsilon,q) \in T$ we have an edge $\tuple{p,\bot}\DTo{i}\tuple{q,\bot}$ in $G_0$.
We use \cref{lem:dfinite_computation} to compute $\DFinite^{G_0}$ and $\DOmega^{G_0}$, and report that, for all states $p$ and $q$, we have
$\DFinite_{0}(p,q)=\DFinite^{G_0}(\tuple{p,\bot}, \tuple{q,\bot})$ and $\DOmega_0(p)=\DOmega^{G_0}(\tuple{p,\bot})$.
\item We repeat the following until $\DFinite^{G_k}$ and $\DOmega^{G_k}$ have converged.
We construct a graph $G_k$ as follows.
For every state $p$ and every $\sigma \in \StackAlphabetBot$, we have a node $\tuple{p,\sigma}$ in $G_k$.
We then do the following.
\begin{enumerate}
\item Let $\DOmega^{G_{k-1}}(\tuple{p,\bot})=c$.
We insert a node $\tuple{p',\sigma}$, and if $-\infty < c$, we insert two edges manipulating the counter $\tuple{p,\sigma}\DTo{c} \tuple{p',\sigma}$ and $\tuple{p',\sigma}\DTo{-c+1}\tuple{p, \sigma}$.
\item For every state $q$, let $\DFinite^{G_{k-1}}(\tuple{p,\bot},\tuple{q,\bot})=(a,b)$.
If $-\infty< a$, we insert a node $\tuple{p,q,\sigma}$ and two edges manipulating the counter $\tuple{p,\sigma}\DTo{a} \tuple{p,q,\sigma}$ and $\tuple{p,q, \sigma}\DTo{-a+b'} \tuple{q,\sigma}$,
where $b'=b$ if $b<\omega$ and $b'=0$ otherwise.
\end{enumerate}
\item 
Finally, for each stack letter $\sigma\in \Gamma$, we connect nodes of the form $\tuple{p,\bot}$ and $\tuple{q, \sigma}$ using the transitions of $\P$ that manipulate the stack.
That is, for every transition $(p,0,\sigma,q) \in T$, we insert an edge $\tuple{p,\bot}\DTo{}\tuple{q,\sigma}$,
and for every transition $(p,0,\ov{\sigma},q) \in T$, we insert an edge $\tuple{p,\sigma}\DTo{}\tuple{q,\bot}$.
\item 
We use \cref{lem:dfinite_computation} to compute $\DFinite^{G_k}$ and $\DOmega^{G_k}$, and report that, for all states $p$ and $q$, we have
$\DFinite_{k}(p,q)=\DFinite^{G_k}(\tuple{p,\bot}, \tuple{q,\bot})$ and $\DOmega_k(p)=\DOmega^{G_k}(\tuple{p,\bot})$.
\end{enumerate}

We first argue that the graphs $G_k$ consist of strongly connected components, and thus \cref{lem:dfinite_computation} is applicable.

\begin{lemma}\label{lem:strongly_connected}
For all $k\in \N$, every connected component of $G_k$ is strongly connected.
\end{lemma}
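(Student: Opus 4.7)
The plan is to proceed by induction on $k$, tracking a slightly stronger invariant: the bidirectedness of $\P$ propagates through the levels of the construction in the sense that every edge added to $G_k$ comes together with a reverse path inside the same connected component. Strong connectivity of components then follows immediately.

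For the base case $k = 0$, the only edges in $G_0$ are of the form $\tuple{p,\bot}\DTo{i}\tuple{q,\bot}$ for transitions $(p,i,\varepsilon,q) \in T$. Bidirectedness of $\P$ provides $(q,-i,\varepsilon,p) \in T$, which yields the reverse edge $\tuple{q,\bot}\DTo{-i}\tuple{p,\bot}$, so every connected component of $G_0$ is strongly connected.

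For the inductive step, I would examine the three kinds of edges introduced in $G_k$. The self-loop gadget triggered by $\DOmega^{G_{k-1}}(\tuple{p,\bot}) = c > -\infty$ consists of the two edges $\tuple{p,\sigma}\DTo{c}\tuple{p',\sigma}$ and $\tuple{p',\sigma}\DTo{-c+1}\tuple{p,\sigma}$, which already form a cycle and therefore pull $\tuple{p',\sigma}$ into the strongly connected component of $\tuple{p,\sigma}$. For the path gadget triggered by $\DFinite^{G_{k-1}}(\tuple{p,\bot},\tuple{q,\bot}) = (a,b)$ with $a > -\infty$, the existence of a witnessing path in $G_{k-1}$ puts $\tuple{p,\bot}$ and $\tuple{q,\bot}$ in the same component of $G_{k-1}$, which by the induction hypothesis is strongly connected; hence also $\DFinite^{G_{k-1}}(\tuple{q,\bot},\tuple{p,\bot}) = (a',b')$ with $a' > -\infty$, so a symmetric gadget through $\tuple{q,p,\sigma}$ is inserted and $\tuple{p,q,\sigma}, \tuple{q,p,\sigma}$ are both absorbed into the component of $\tuple{p,\sigma}$ and $\tuple{q,\sigma}$. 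Finally, for each stack-manipulating edge $\tuple{p,\bot}\to\tuple{q,\sigma}$ coming from $(p,0,\sigma,q) \in T$, bidirectedness of $\P$ supplies $(q,0,\ov{\sigma},p) \in T$ and hence the reverse edge $\tuple{q,\sigma}\to\tuple{p,\bot}$.

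The main obstacle I expect is not any single case but making precise that the inductive hypothesis carries enough information: one needs that ``$p$ and $q$ are in the same component of $G_{k-1}$'' really does translate into the reverse summary $\DFinite^{G_{k-1}}(\tuple{q,\bot},\tuple{p,\bot})$ being finite (so that the symmetric gadget is actually triggered), and that the auxiliary nodes $\tuple{p',\sigma}$ and $\tuple{p,q,\sigma}$ do not form isolated dead-ends. Both points follow from reading off the construction: the auxiliary nodes are defined with edges in both directions to the main nodes $\tuple{p,\sigma}, \tuple{q,\sigma}$, and finiteness of $\DFinite^{G_{k-1}}$ in one direction is equivalent, under the inductive strong-connectivity hypothesis, to finiteness in the other.
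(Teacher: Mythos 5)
Your proof is correct, and it differs from the paper's in a noteworthy way on the key case (the path gadget $\tuple{p,\sigma}\to\tuple{p,q,\sigma}\to\tuple{q,\sigma}$). The paper argues: finiteness of $\DFinite^{G_{k-1}}(\tuple{p,\bot},\tuple{q,\bot})$ means $\Paths{p}{q}_{k-1}\neq\emptyset$, bidirectedness of $\P$ gives $\Paths{q}{p}_{k-1}\neq\emptyset$, and therefore $\DFinite^{G_{k-1}}(\tuple{q,\bot},\tuple{p,\bot})$ is finite as well. That argument passes through the semantic identification of graph summaries with $\P$-path summaries, i.e.\ it quietly relies on (a direction of) \cref{lem:1d_correctness}, which appears only later in the paper. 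You instead reason purely inside the graphs: the existence of a witness path in $G_{k-1}$ puts $\tuple{p,\bot}$ and $\tuple{q,\bot}$ in a common component, the induction hypothesis gives strong connectivity of that component, hence $\DFinite^{G_{k-1}}(\tuple{q,\bot},\tuple{p,\bot})>-\infty$ and the symmetric gadget through $\tuple{q,p,\sigma}$ is inserted. Your version is self-contained and fits the logical order of the lemmas more tightly; the paper's is shorter but leans on the (not-yet-stated) correspondence with $\P$-paths. Both are sound. One minor polishing point: when you say ``strong connectivity then follows immediately,'' the justification is that your invariant (every edge $u\to v$ is matched by a directed path $v\to u$ in the same component) is formally equivalent to every connected component being strongly connected, since any undirected walk can then be turned into a directed one edge by edge; it is worth stating that equivalence once to close the loop. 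You also correctly observe that the auxiliary nodes $\tuple{p',\sigma}$ and $\tuple{p,q,\sigma}, \tuple{q,p,\sigma}$ are absorbed into the relevant component once the symmetric gadget is present, and that an auxiliary node with no incident edges forms its own trivially strongly connected component, so no case is missed.
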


Given some $\sigma\in \StackAlphabetBot$ and $k\geq 1$, we denote by $G_k\Project \sigma$ the subgraph of $G_k$ induced by the nodes whose last element is $\sigma$.
It follows directly from the construction of $G_k$ that, for every pair of states $p,q$ of $\P$ and $\sigma\in\StackAlphabetBot$, for every path $P$ that goes from $\tuple{p,\sigma}$ to $\tuple{q,\sigma}$ and is contained in $G_k\Project \sigma$, there is a path $H\in \Paths{p}{q}_{k-1}$ with $P\Dominated H$.
In turn, this implies that the summary functions $\DFinite^{G_k}$ and $\DOmega^{G_k}$ are always dominated by paths in $\P$ of stack height at most $k$, i.e., for all states $p$ and $q$, we have $\DFinite^{G_k}(\tuple{p,\bot}, \tuple{q,\bot}) \Dominated \DFinite_k(p,q)$ and $\DOmega^{G_k}(\tuple{p,\bot})\Dominated \DOmega_k(p)$ for all $k\in \N$.
The following lemma states that $\DFinite^{G_k}$ and $\DOmega^{G_k}$ compute $\DFinite_k$ and $\DOmega_k$ exactly, by arguing that $\DFinite^{G_k}$ and $\DOmega^{G_k}$ also dominate all paths of $\P$ with stack height at most $k$.
In turn, this establishes the correctness of the algorithm.

\begin{lemma}\label{lem:1d_correctness}
For all $k\in \N$ and states $p,q\in Q$, we have $\DFinite_k(p,q)=\DFinite^{G_k}(\tuple{p,\bot}, \tuple{q,\bot})$ and $\DOmega_k(p)=\DOmega^{G_k}(\tuple{q, \bot})$.
\end{lemma}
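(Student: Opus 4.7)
The plan is to prove the lemma by induction on $k$. For the base case $k = 0$, paths in $\Paths{p}{q}_0$ never touch the stack, and $G_0$ is by construction a direct transcription of the counter-only transitions of $\P$ onto nodes $\tuple{p, \bot}$; this yields a weight-preserving bijection between $\Paths{p}{q}_0$ and the corresponding paths in $G_0$, so that $\DFinite_0(p,q) = \DFinite^{G_0}(\tuple{p, \bot}, \tuple{q, \bot})$ and $\DOmega_0(p) = \DOmega^{G_0}(\tuple{p, \bot})$ follow immediately from the definitions.

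For the inductive step ($k \geq 1$), I would assume the lemma for $k-1$. The text already argues $\DFinite^{G_k} \Dominated \DFinite_k$ and $\DOmega^{G_k} \Dominated \DOmega_k$, since every path in $G_k\Project \sigma$ projects, by the induction hypothesis, onto a $\P$-path of stack height $\leq k-1$, making every $G_k$-path a $\P$-path of stack $\leq k$ with the same summary. I would prove the reverse by showing that for every $P \in \Paths{p}{q}_k$ there is a $G_k$-path $P'$ from $\tuple{p,\bot}$ to $\tuple{q,\bot}$ with $\PathMin(P') \geq \PathMin(P)$ and $\PathSum(P')$ matching $\PathSum(P)$. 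Decompose $P$ at its stack-empty positions into alternating flat sub-paths $F_i$ (stack constantly $0$) and brackets $\pi_j R_j \rho_j$, where $\pi_j$ pushes some $\sigma_j \in \Gamma$, $R_j$ maintains stack in $[1,k]$, and $\rho_j$ pops $\sigma_j$. Each $F_i$ is a $\P$-path of stack $\leq 0$, and $R_j$ with the persistent bottom letter $\sigma_j$ stripped is a $\P$-path of stack $\leq k-1$. By the induction hypothesis, $\DFinite^{G_{k-1}}$ and $\DOmega^{G_{k-1}}$ coincide with $\DFinite_{k-1}$ and $\DOmega_{k-1}$, so the summary and $\DOmega$ gadgets of $G_k$ at the $\bot$-level and at each $\sigma_j$-level encode the correct optimal summaries of flat and bracket-interior segments. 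I would then construct $P'$ segment by segment: each $F_i$ is simulated by a composition of summary edges at the $\bot$-level (routed through intermediate states of $\P$ as needed), and each bracket by the push edge, a composition of summary edges at the $\sigma_j$-level, and the pop edge, with $\DOmega$ self-loops inserted at the relevant nodes whenever extra counter sum is needed.

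The delicate bookkeeping is that a single summary edge supplies the fixed optimal sum $b'$ (equal to $b$ when $b < \omega$, else $0$) rather than the specific sum of the $\P$-sub-path it replaces. Two observations make the simulation go through: (i)~the local min $a$ delivered by a summary edge dominates the corresponding segment's min in $P$, so using summary edges never drops the overall min below $\PathMin(P)$; and (ii)~whenever additional sum is required, \cref{lem:domega_pumping} guarantees a positive-sum $\DOmega_{k-1}$-cycle with min $\geq \PathMin(P)$, realized in $G_k$ by a $\DOmega$ self-loop that adds sum in unit increments without degrading the min beyond $\PathMin(P)$. The analogous argument applied to cycles $C \in \Paths{p}{p}_k$ with $\PathSum(C) > 0$ yields $\DOmega_k(p) \leq \DOmega^{G_k}(\tuple{p, \bot})$. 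The main obstacle I anticipate is verifying that enough $\DOmega$-driven pumping is always available to bridge the gap between the optimal summary sums and $\PathSum(P)$ — and, in the $b = \omega$ regime, to approach arbitrarily high sums while preserving the required min — but this is precisely the content of \cref{lem:domega_pumping}, so the tool needed is already in place.
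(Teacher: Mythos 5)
Your proposal is essentially the same proof as the paper's: induction on $k$, the same decomposition of a $\P$-path into flat segments and brackets, per-segment simulation by a single $\DFinite^{G_{k-1}}$-summary edge preceded by $\DOmega$-loops when extra sum is needed (with \cref{lem:domega_pumping} supplying the loop), and composition via the monotonicity of $\Dominated$ under path concatenation. One wording caution: your observation~(i) as stated is not sufficient on its own — local min-domination of each summary edge does not by itself prevent the global minimum from dropping, since a segment replaced by a summary edge with a \emph{smaller} sum would push later segments to lower counter levels; the paper avoids this by proving the single combined statement that each simulated segment $P'_i$ satisfies $R_i\Dominated P'_i$ (both min and sum dominated simultaneously) and only then invoking that $\Dominated$ respects concatenation, so your (i) and (ii) really need to be established jointly rather than as two separable facts.
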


Thus, to decide whether $p$ covers $(q,0)$, we saturate $\DFinite_k$ and $\DOmega_k$ and check whether $\DFinite_k(p,q)=(0,\Some)$.
The termination and complexity of the algorithm follows from the boundedness of the finite values of $\DFinite_k$ and $\DOmega_k$ (\cref{lem:min_bounded}, \cref{lem:sum_bounded} and \cref{lem:min_cycle_bounded}), which concludes \cref{lem:1dcoverability}.

\begin{lemma}\label{lem:1d_complexity}
The above algorithm terminates and uses polynomial space.
\end{lemma}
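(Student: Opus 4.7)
The plan is to establish two things simultaneously: that the sequence of tables $(\DFinite_k,\DOmega_k)_{k\geq 0}$ stabilizes after at most exponentially many iterations, and that each iteration uses only polynomial working space. Together these give the $\PSPACE$ bound.

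First, I would bound the number of distinct tables. By \cref{lem:min_bounded}, \cref{lem:sum_bounded}, and \cref{lem:min_cycle_bounded}, every finite entry of $\DFinite_k(p,q)=(a,b)$ and of $\DOmega_k(p)$ has absolute value at most $\beta = 2^{O(n^2)}$, with the auxiliary symbolic values $-\infty$ and (for the second coordinate of $\DFinite_k$) $\omega$. Consequently each entry fits in $O(n^2)$ bits, the full tables fit in polynomial space, and the constructed graph $G_k$ has polynomially many nodes with edge weights of $O(n^2)$ bits. By \cref{lem:dfinite_computation}, computing $\DFinite^{G_k}$ and $\DOmega^{G_k}$ from $G_k$ then runs in polynomial time and hence polynomial space.

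Next, I would establish monotonicity. Since $\Paths{p}{q}_{k-1}\subseteq\Paths{p}{q}_k$, the first coordinate $a_k$ of $\DFinite_k(p,q)=(a_k,b_k)$ is weakly increasing in $k$; and when $a_k=a_{k-1}$, the set of paths contributing to the supremum defining $b_k$ is a superset of that defining $b_{k-1}$, so $b_k\geq b_{k-1}$. Ordering each pair $(a,b)$ lexicographically (with $-\infty$ below all integers and $\omega$ above them), and applying the analogous argument to $\DOmega_k$, I obtain that $(\DFinite_k,\DOmega_k)$ is componentwise lex-monotone. Each of the $O(n^2)$ entries takes values in an ordered set of size $2^{O(n^2)}$, so the total number of strict increases before reaching a fixpoint is at most $2^{O(n^2)}$.

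Finally, I would observe that the stopping condition $\DFinite^{G_k}=\DFinite^{G_{k-1}}$ and $\DOmega^{G_k}=\DOmega^{G_{k-1}}$ is a genuine fixpoint indicator, since the construction of $G_{k+1}$ depends only on $\DFinite^{G_k}$ and $\DOmega^{G_k}$; hence a single repetition freezes all later iterations. During the main loop it suffices to store only the current and previous tables plus the current graph $G_k$, giving polynomial working space overall. The main subtlety I expect is handling the case when $a_k$ strictly increases from one iteration to the next: in that case $b_k$ need not dominate $b_{k-1}$ componentwise, and one must rely on the weaker lex comparison. Since, however, the lex-ordered set of admissible $(a,b)$ pairs is finite and totally ordered with exponentially many elements, this remains sufficient to guarantee termination in $2^{O(n^2)}$ iterations and therefore the overall $\PSPACE$ bound.
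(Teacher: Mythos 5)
Your proof is correct and follows the same overall strategy as the paper: exponential bounds on all finite table entries (via \cref{lem:min_bounded}, \cref{lem:sum_bounded}, \cref{lem:min_cycle_bounded}), a monotonicity argument to bound the number of iterations, and the observation that one stable iteration is a genuine fixpoint. The one point where you diverge is the monotonicity argument itself, and you have in fact identified a genuine subtlety that the paper glosses over. The paper's proof asserts that the algorithm proceeds iff $\DFinite_{k-1}\StrictDominated\DFinite_k$ or $\DOmega_{k-1}\StrictDominated\DOmega_k$, i.e.\ componentwise domination; but as you correctly observe, the second coordinate $b_k$ of $\DFinite_k(p,q)$ need not be monotone in $k$, since when $a_k$ strictly increases the supremum defining $b_k$ ranges over a different (not larger) set of paths, and one can construct small bidirected $1$-PVASS where $b$ actually drops. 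Your fix — switching to lexicographic order on $(a,b)$, which \emph{is} entry-wise monotone, with $2^{O(n^2)}$ possible values per entry and $O(n^2)$ entries — cleanly yields the $2^{O(n^2)}$ iteration bound. (The paper's claim can also be salvaged: when $b_k$ drops, \cref{lem:domega_pumping} applied to the old witness path forces $\DOmega_k(p) \geq a_{k-1}$, and \cref{lem:summary_functions} at level $k-1$ gives $\DOmega_{k-1}(p) < a_{k-1}$, so $\DOmega$ must strictly increase — but this argument is not spelled out in the paper, whereas your lex argument is self-contained.) In short, your proof is correct, and your explicit handling of the non-monotone $b$ coordinate is a genuine improvement in rigor over the paper's terse justification.
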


Finally, note that if we have a polynomial bound on the stack height, then the saturation procedure runs in polynomial time, which also leads to reachability in polynomial time (a closer analysis yields an $O(n^3)$ bound per iteration).
In particular, the $\PSPACE$-hardness proof for 1-dimensional {\em directed} PVASS from \cite{EnglertHLLLS21}
cannot be directly transferred to bidirected PVASS:
The 1-PVASS constructed in \cite{EnglertHLLLS21} has a polynomially bounded stack height.
See also \cref{app:failure} for details on how exactly the reduction fails.
Without a bound on the stack height, the saturation might indeed take exponential time: There are 1-dimensional bidirected PVASS on which shortest coverability witnesses require an exponential stack height (see \cref{app:1d} for an example).

\section{Tower lower bound}\label{sec:tower}

In this section, we show that reachability in bidirected PVASS is \ComplexityFont{TOWER}-hard with respect to elementary reductions,
and $k$-\ComplexityFont{EXPSPACE}-hard in dimension $2k + 6$.
Recall that \ComplexityFont{TOWER} is the class of all problems computable by deterministic Turing machines in time (or space) bounded by a tower of exponentials of elementary height.

\subparagraph{Lower bound for directed PVASS}
We first recall the $\TOWER$-hardness proof by Lazi\'{c} and Totzke~\cite{Lazic2017} for reachability in directed PVASS.
They reduce the $\exp^n(1)$-bounded halting problem for counter programs of size $n$,
which is $\TOWER$-complete with respect to elementary reductions (which allow us to replace the parameter $n$ with an arbitrary elementary function $e(n)$)~\cite{FischerMeyerRosenberg1968}.
A {\em counter program} is a finite sequence of commands which manipulate non-negative counters, initially set to zero.
The commands include increments $x := x + 1$, decrements $x := x - 1$, conditionals {\bf if} $x = 0$ {\bf then goto} $L_1$ {\bf else goto} $L_2$ (where $L_1$ and $L_2$ are line numbers),
and ${\bf halt}$.
If a counter of value 0 is decremented, the program aborts.
The $\exp^n(1)$-bounded halting problem asks whether given a counter program of size $n$,
starting from the first command and all counters set to zero,
a command ${\bf halt}$ can be reached using a run during which all counters
are bounded by $\exp^n(1)$ and are all zero at the end.

As in most lower bounds for vector addition systems and their extensions,
for each counter $x$ we store a complement counter $\bar x$,
maintaining the invariant $x + \bar x = B$ for some (large) bound $B$.
This can be achieved by complementing every in/decrement of $x$ by a de/increment of $\bar x$, and vice versa.
Then, a zero test {\bf if} $x = 0$ {\bf then goto} $L_1$ {\bf else goto} $L_2$ can be replaced by guessing
whether $x = 0$ and $x \neq 0$:
In the former case we add and subtract $B$ to $x$ and continue with $L_1$.
In the latter case we add and subtract $B$ to $\bar x$ and continue with $L_2$.

The challenge is to implement the addition (and subtraction) with a large number $B$, here $B = \exp^n(1)$, using a polynomially large system.
Suppose we have counters $c_1, \dots, c_n$ with complement counters $\bar c_1, \dots, \bar c_n$
satisfying $c_k + \bar c_k = \exp^k(1)$ for all $k$.
Lazi\'{c} and Totzke~\cite{Lazic2017} show how to construct for all $k = 1, \dots, n$
a $\text{poly}(k)$-sized PVASS that adds $\exp^k(1)$ to $c_k$.
It operates by pushing exactly $\exp^{k-1}(1)$ many zeros to the stack,
repeatedly incrementing the $\exp^{k-1}(1)$-bit binary counter on the stack, while simultaneously decrementing $c_k$,
and finally popping exactly $\exp^{k-1}(1)$ many ones from the stack.
Observe that the operations on the stack can be implemented with the help of $c_{k-1}$ that can be in/decremented by $\exp^{k-1}(1)$
by induction hypothesis.
Before simulating the counter program, each complement counter $\bar c_k$ has to be set to $\exp^k(1)$,
which can be done in a similar fashion.

\subparagraph{Simulation by bidirected systems}
In the following we will make the above construction outlined by Lazi\'{c} and Totzke~\cite{Lazic2017} explicit
and show that the simulation is correct even after making the PVASS bidirected.
To this end we need the following argument already found in Post's undecidability proof
of the word problem for Thue systems \cite[Lemma~II]{Post47}.
Consider a deterministic transition system where the final state does not have any outgoing transitions.
To such a system we now add reverse edges to make it bidirected.
Clearly, any original run is present in the bidirected system.
Conversely, consider a run to the final state in the bidirected system, which may use the new reverse edges.
It cannot end on a reverse edge, since there is no outgoing forward edge from the final state. So as long as the run contains reverse edges, at least one of these edges must be followed by one in the forward direction. Let us call them $p \xrightarrow{\bar a} q$ and $q \xrightarrow{b} r$. As the original system was deterministic $q$ has exactly one outgoing edge, and hence $(q \xrightarrow{b} r) = (q \xrightarrow{a} p)$. Since the effects of $\bar a$ and $b$ cancel out, we can omit both of them from the run. Iterating this argument eventually yields a run with no reverse edges. It follows that adding reverse edges to a fully deterministic system does not change its reachability set (this was originally shown by Mayr and Meyer \cite{MM82} for their proof of \ComplexityFont{EXPSPACE}-hardness of reachability for bidirected VAS).

The construction of Lazi\'{c} and Totzke is not fully deterministic.
However, it only uses very restricted nondeterminism that will not impact our simulation of the counter machine.

\subparagraph{Gadget construction}

Since we also want to show a $k$-$\EXPSPACE$ lower bound for dimension $2k+6$, we use a slightly more refined analysis: We will assume that two numbers $k$ and $n$ are given as input and construct a system that simulates counters bounded by $\exp^k(n)$ instead of $\exp^k(1)$ as in Lazi\'{c} and Totzke.

In the following a {\em gadget} $G$ consists of a PVASS and two distinguished terminal states $s$ and $t$.
We consider vectors $\bm{x} \in \N^{2k}$ where the first $k$ components are viewed the values of $k$ counters $c_1, \dots, c_k$
and the last $k$ components are the values of $k$ complementary counters $\bar c_1, \ldots, \bar c_k$.
Without further mention, any update on a counter $c$ is always understood
with complementary update on $\bar c$ so that the sums $c_i + \bar c_i$ remain constant.

Given two numbers $k,n$ (in unary), we will inductively construct a gadget $G_k$ with stack alphabet $\Gamma_k$.
This gadget will allow us to add $\exp^k(n)$ to a counter.
The gadget's size will grow exponentially in $k$ (and polynomially in $n$), and later, we improve the construction to grow only polynomially in $k$.
The gadget $G_1$ simply increments $c_1$ by $2^n$.
Assuming $G_{k-1}$ is already constructed, we construct the gadget $G_k$.
The gadget $\bar G_{k-1}$ is obtained from $G_{k-1}$ by reversing all transitions, and interchanging its terminal states.
Its behavior is inverse to that of $G_{k-1}$, as it subtracts $\exp^{k-1}(n)$ from $c_{k-1}$.
Let $Z_{k-1} = G_{k-1} \circ \bar G_{k-1}$ be the gadget obtained by composing $G_{k-1}$ with $\bar G_{k-1}$,
which is a zero test of $c_{k-1}$.
We can naturally view $G_{k-1}$, $\bar G_{k-1}$ and $Z_{k-1}$ as gadgets with $2k$ counters,
where $c_k$ and $\bar c_k$ are untouched.
The gadget $G_k$ is displayed in \Cref{fig:gk} where $\texttt{0}$ and $\texttt{1}$ are fresh stack symbols
and Inc$_k$ is a subprocedure which increments the binary counter on the stack.

To prove correctness of the gadget $G_k$ we need a bit of notation.
For brevity we write $[x_1, \dots, x_k]$ for $(x_1, \dots, x_k, \exp^1(n) - x_1, \dots, \exp^k(n) - x_k)$.
Our gadgets will always assume that the ``lower'' counters $c_j$ are set to zero and that the invariant is satisfied.
A counter vector of the form $[0, \dots, 0, x_i, \dots, x_k]$ is called {\em $i$-initialized}.
Moreover, we call a run $(s,\bm{u},w) \xrightarrow{*} (t,\bm{v},w')$ in a gadget \emph{$i$-initialized}
if either $\bm{u}$ or $\bm{v}$ is $i$-initialized.

\begin{figure}
	\begin{center}
	\begin{tikzpicture}[state/.style={fill, circle, inner sep=2pt}, ->, >=stealth]
			\node (aa) {Inc$_k$:};
			\node[state, label={below:$p_0$}, right = 1cm of aa] (a) {};
			\node[state, label={below:$p_1$}, right = 2cm of a] (b) {};
			\node[state, label={below:$p_2$}, right = 2cm of b] (c) {};
			\node[state, label={below:$p_3$}, right = 2cm of c] (d) {};
			\node[state, label={below:$p_4$}, right = 2cm of d] (e) {};
			\draw (a) edge[above] node{$Z_{k-1}$} (b);
			\draw[loop above] (b) edge[above] node[align=center]{\footnotesize $c_{k-1} +$= 1 \\ \footnotesize $\bar{\texttt{1}}$} (b);
			\draw (b) edge[above] node[align=center]{\footnotesize $\bar{\texttt{0}} \texttt{1}$} (c);
			\draw[loop above] (c) edge[above] node[align=center]{\footnotesize $c_{k-1} -$= 1 \\ \footnotesize $\texttt{0}$} (c);
			\draw (c) edge[above] node{$Z_{k-1}$} (d);
			\draw (d) edge[above] node{\footnotesize $c_k +$= 1} (e);
		\end{tikzpicture}
		
		\vspace{1em}
		
		\begin{tikzpicture}[state/.style={fill, circle, inner sep=2pt}, ->, >=stealth]
			\node[state, label={below:$q_0$}] (a) {};
			\node[state, label={below:$q_1$}, right = 2cm of a] (b) {};
			\node[state, label={below:$q_2$}, right = 2cm of b] (c) {};
			\node[state, label={below:$q_3$}, right = 2cm of c] (d) {};
			\node[state, label={below:$q_4$}, right = 2cm of d] (e) {};
			\node[state, label={below:$q_5$}, right = 2cm of e] (f) {};
			\draw (a) edge[above] node{$G_{k-1}$} (b);
			\draw[loop above] (b) edge[above] node[align=center]{\footnotesize $c_{k-1} -$= 1 \\ \footnotesize $\texttt{0}$} (b);
			\draw (b) edge[above] node{$Z_{k-1}$} (c);
			\draw[loop above] (c) edge[above] node[align=center]{$\text{Inc}_k$} (c);
			\draw (c) edge[above] node{$Z_{k-1}$} (d);
			\draw[loop above] (d) edge[above] node[align=center]{\footnotesize $c_{k-1} +$= 1 \\ \footnotesize $\bar{\texttt{1}}$} (d);
			\draw (d) edge[above] node{\footnotesize $c_k +$= 1} (e);
			\draw (e) edge[above] node{$\bar G_{k-1}$} (f);
		\end{tikzpicture}
	\end{center}
	\caption{Gadgets Inc$_k$ and $G_k$.}
	\label{fig:gk}
\end{figure}

\begin{restatable}{proposition}{directedgadget}
The $k$-initialized runs in $G_k$ from $q_0$ to $q_5$
are precisely the runs
\[
	(q_0,[0, \dots, 0],w) \xrightarrow{*}_{G_k} (q_5,[0, \dots, 0, \exp^k(n)],w) \quad \text{for } w \in \Gamma_{k-1}^*.
\]
\end{restatable}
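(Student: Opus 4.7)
The plan is to induct on $k$. The base case $k=1$ is immediate: $G_1$ consists of a fixed sequence of transitions adding $2^n = \exp^1(n)$ to $c_1$ without touching the stack, and the complement invariant $c_1 + \bar c_1 = \exp^1(n)$ forces $c_1 = 0$ on entry.

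For the inductive step, the induction hypothesis characterizes the $(k-1)$-initialized runs of $G_{k-1}$, from which we immediately extract the companion facts: $\bar G_{k-1}$ subtracts $\exp^{k-1}(n)$ from $c_{k-1}$ while preserving the stack, and $Z_{k-1} = G_{k-1} \circ \bar G_{k-1}$ acts as a zero test on $c_{k-1}$, namely between configurations with $c_{k-1} = 0$ and preserving the stack. All three leave the counters $c_j$ for $j \leq k-2$ untouched under $(k-1)$-initialization.

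The core of the argument is a forward trace through $G_k$ starting at $(q_0, [0,\dots,0], w)$, in which each step is forced. Writing $N := \exp^{k-1}(n)$: the initial $G_{k-1}$ sets $c_{k-1}$ to $N$; the $q_1$-loop must then fire exactly $N$ times (else the subsequent $Z_{k-1}$ cannot be satisfied), leaving $(q_2, [0,\dots,0], w \cdot \texttt{0}^N)$. Each invocation of $\mathrm{Inc}_k$ in the $q_2$-self-loop starts and ends with $c_{k-1}=0$, increments $c_k$ by one, and advances the $N$-bit binary counter on the top of the stack by one; the complement invariant $c_{k-1} + \bar c_{k-1} = N$ bounds the scratch counter, so $\mathrm{Inc}_k$ can only proceed while at least one $\texttt{0}$ remains in the top $N$ stack symbols. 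Hence after $\ell$ iterations the stack reads $w \cdot \mathrm{bin}_N(\ell)$ with $\ell \leq 2^N - 1$, and $c_k = \ell$.

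Finally, for the closing $\bar G_{k-1}$ (which demands $c_{k-1} = N$ on entry) to proceed in a $(k-1)$-initialized manner, the $q_3$-loop must pop exactly $N$ ones; this forces $\mathrm{bin}_N(\ell) = \texttt{1}^N$, i.e.\ $\ell = 2^N - 1$, so that after the edge $q_3 \to q_4$ one has $c_k = 2^N = \exp^k(n)$. Then $\bar G_{k-1}$ restores $c_{k-1}$ to $0$ without touching $c_k$ or the stack, yielding the target configuration, and conversely all steps are enabled, giving existence. The requirement $c_k = 0$ on entry is itself forced by the complement invariant, since incrementing $c_k$ by $\exp^k(n)$ requires $\bar c_k \geq \exp^k(n)$. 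The main obstacle is not conceptual but bookkeeping: one must carefully trace the interplay between the stack, the scratch counter $c_{k-1}$ used both inside $\mathrm{Inc}_k$ and by the outer $q_1$- and $q_3$-loops, and the complement invariants encoded in the bracket notation, to confirm that each transition is indeed forced.
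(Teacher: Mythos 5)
Your overall strategy — induction on $k$, extracting the behaviour of $\bar G_{k-1}$ and $Z_{k-1}$ from the induction hypothesis, and then forcing each step of a run through $G_k$ — is the same as the paper's. However, there is a genuine gap in how you invoke the hypothesis. You frame the whole argument as ``a forward trace through $G_k$ starting at $(q_0, [0,\dots,0], w)$,'' and later handle $c_k > 0$ via the complement invariant, but a \emph{$k$-initialized run} only promises that \emph{one} of its two endpoints has the form $[0,\dots,0,x_k]$; it could be the $q_5$ endpoint while the $q_0$ endpoint has $c_j > 0$ for some $j < k$. Your trace quietly assumes $c_1 = \dots = c_{k-1} = 0$ at $q_0$, which is exactly what must be proved in that case. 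More fundamentally, every time you apply the induction hypothesis to a subrun through $G_{k-1}$, $\bar G_{k-1}$ or $Z_{k-1}$, you need to know that that particular subrun is itself $(k-1)$-initialized at one of its ends — and this is not automatic from the hypothesis on the whole run.

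The paper closes this gap by introducing the auxiliary notion of an \emph{$i$-neutral} gadget (one where, for any run between its terminals, one endpoint is $i$-initialized if and only if the other is) and observing that the proposition at level $k-1$ makes $G_{k-1}$, $\bar G_{k-1}$, $Z_{k-1}$ all $(k-1)$-neutral, while the outer transitions of $G_k$ never touch $c_j$ for $j < k-1$. This lets one propagate $(k-1)$-initialization from whichever endpoint of the $G_k$-run is known to be $k$-initialized across every subgadget boundary, which in turn licenses applying the induction hypothesis to every subrun. You gesture at this when you write that the subgadgets ``leave the counters $c_j$ for $j\le k-2$ untouched under $(k-1)$-initialization,'' but you never establish the ``under $(k-1)$-initialization'' hypothesis for the subruns you actually encounter, nor that the $q_0$ endpoint is forced to be $k$-initialized when only $q_5$ is. Adding the neutrality observation and the resulting decomposition argument would repair the proof; the rest of your forced-trace bookkeeping (including the $2^N-1$ iterations of $\mathrm{Inc}_k$ and the role of the complement invariant on $c_k$) matches the paper's.

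As a smaller point: you attribute the bound on the number of $\mathrm{Inc}_k$ iterations to the complement invariant on $c_{k-1}$ ``so $\mathrm{Inc}_k$ can only proceed while at least one $\texttt{0}$ remains in the top $N$ stack symbols.'' That is not quite the right mechanism — if the word $w$ beneath happens to expose a $\texttt{0}$, an overflow step of $\mathrm{Inc}_k$ could still fire. What actually caps the number of iterations is the complement invariant on $c_k$ (each iteration, plus the final $q_3\to q_4$ edge, increments $c_k$), together with the requirement that the $q_3$-loop must pop $N$ ones to feed $\bar G_{k-1}$. This is a minor misattribution rather than an error in the conclusion, but worth tightening.
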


Next we will analyse the bidirected version of $G_k$.
In order to distinguish the original transitions from the reverse transitions
we define for a PVASS $G$ the relations $\leftrightarrow_G \,=\, \to_G \cup \leftarrow_G$
and $\stackrel{*}{\leftrightarrow}_G$, denoting the reflexive transitive closure of $\leftrightarrow_G$.
Similarly to the argument by Post~\cite[Lemma~II]{Post47}, we can prove the following:

\begin{restatable}{proposition}{bidirectedgadget}
	\label{prop:bidirectedgadget}
	Let $\bm{u}, \bm{v} \in \N^{2k}$ where $\bm{u}$ or $\bm{v}$ is $k$-initialized.
	\begin{itemize}
	\item If $(q_0,\bm{u},w)  \stackrel{*}{\leftrightarrow}_{G_k} (q_5,\bm{v},w')$
	then $(q_0,\bm{u},w)  \xrightarrow{*}_{G_k} (q_5,\bm{v},w')$.
	\item If $q \in \{q_0,q_5\}$ and $(q,\bm{u},w) \stackrel{*}{\leftrightarrow}_{G_k} (q,\bm{v},w')$
	then $\bm{u} = \bm{v}$ and $w = w'$.
	\end{itemize}
\end{restatable}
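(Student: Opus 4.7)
The plan is to prove both statements simultaneously by induction on $k$, using the inductive hypothesis on the sub-gadgets $G_{k-1}$, $\bar G_{k-1}$, and $Z_{k-1}$. The template is the Post-style cancellation argument recalled in the paragraphs immediately before the proposition: whenever a reverse edge $\bar a$ is followed by a forward edge $b$ at a state of out-degree one, determinism forces $b = a$ and the pair cancels. The two statements of the proposition support each other inductively: the second statement, applied to $G_{k-1}$ and $\bar G_{k-1}$, says that bidirected sub-gadget runs returning to their starting macro-state have no net effect, which is exactly what is needed to treat each sub-gadget invocation as an atomic macro-step in the proof of the first statement.

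For the base case $k=1$, the gadget $G_1$ is a deterministic straight-line PVASS whose terminal state has out-degree zero, so Post's argument applies verbatim to both statements. For the inductive step, a bidirected run in $G_k$ from $(q_0,\bm{u},w)$ to $(q_5,\bm{v},w')$ decomposes along the skeleton $q_0, q_1, \dots, q_5$ into sub-gadget invocations and self-loop iterations; the invocations of $\text{Inc}_k$ at $q_2$ decompose analogously along $p_0, \dots, p_4$. The hypothesis that $\bm{u}$ or $\bm{v}$ is $k$-initialized propagates through the run, since outside the sub-gadgets only the self-loops at the loop states touch the lower counters, and they only modify $c_{k-1}$. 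Hence every sub-gadget boundary encountered in the run is $(k-1)$-initialized, and the inductive hypothesis applies to each invocation, allowing us to replace any bidirected sub-gadget run with a directed one of the same effect.

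The main technical obstacle is that Post's argument fails directly at the loop states $q_1, q_2, q_3$ (and $p_1, p_2$), which have two distinct outgoing forward transitions (the self-loop and the exit edge). The resolution is to observe that these transitions have complementary effects: each self-loop strictly changes $c_{k-1}$ together with pushing or popping a stack symbol, whereas the exit edge is a $Z_{k-1}$ macro-step that, by inductive hypothesis, has net zero effect on $c_{k-1}$ and the stack. Consequently any reverse self-loop in the run must be matched by a forward self-loop at the same state, and such matched pairs can be removed once intervening sub-gadget excursions have been normalized via the inductive hypothesis. The cleanest formulation is an induction on the number of reverse transitions in the bidirected run, where each step eliminates one reverse edge via either local Post cancellation at a unique-outgoing-forward state or via this loop-pairing combined with IH-driven normalization of a bidirected sub-gadget excursion; I expect the bookkeeping of these rearrangements, in the presence of interleaved sub-gadget invocations, to be the most delicate point. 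The second statement of the proposition then follows from the same strategy applied to a bidirected cycle: after all cancellations one is left with a directed cycle at $q_0$ or $q_5$, and since any non-empty forward path through $G_k$ strictly increments $c_k$, such a directed cycle must be empty, forcing $\bm{u}=\bm{v}$ and $w=w'$.
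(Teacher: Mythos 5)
Your overall plan—simultaneous induction on $k$, using the second part of the inductive hypothesis at level $k-1$ to discard excursions into a sub-gadget that re-exit through the entry terminal, the first part to replace remaining bidirected sub-gadget traversals by forward ones, and then eliminating reverse transitions one at a time from a run with a minimal number of them—matches the paper's argument. You also correctly identify the key difficulty: the states $q_1,q_2,q_3,p_1,p_2$ have out-degree two, so Post's determinism argument does not apply as is.

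The resolution you propose for this difficulty, however, is not the right one, and would not go through. You argue via ``complementary effects'': the self-loops strictly change $c_{k-1}$ and the stack while the exit edges are $Z_{k-1}$ macro-steps with net zero effect, so reverse self-loops must be ``matched'' by forward self-loops at the same state, and the matched pairs can be removed. There are three problems. First, after normalizing sub-gadget excursions, the correct move is to look at the first reverse transition $\bar\tau$ and consider the forward transition immediately preceding it, i.e.\ a forward transition \emph{into} the loop state; this is the entry edge (for $q_1$, the exit of $G_{k-1}$ from $q_0$), not the exit edge out of the loop state, which is what your argument reasons about. Second, not all of the relevant edges are $Z_{k-1}$-steps: the exit from $q_3$ is $c_k{+}{=}1$, the edge $p_1\to p_2$ is $\bar{\texttt{0}}\texttt{1}$, etc. Third, and most importantly, ``net zero effect'' of a macro-step says nothing about whether the subsequent reverse transition is fireable. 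The actual argument is about the concrete configuration at that moment: e.g.\ after $G_{k-1}$ one has $\bar c_{k-1}=0$, so the reverse of the $q_1$-loop (which must decrement $\bar c_{k-1}$) is blocked; after $p_1\to p_2$ the stack top is $\texttt{1}$, so the reverse of the $p_2$-loop (which must pop $\texttt{0}$) is blocked. In each case this forces the transition immediately before $\bar\tau$ to be $\tau$ itself, and the pair cancels locally. Your ``matching at a distance'' would instead require rearranging the intervening portion of the run, which you give no argument for and is precisely what the local cancellation avoids. Finally, for the second statement your conclusion is correct but the remark about $c_k$ strictly increasing is unnecessary: once all reverse edges are cancelled, a forward cycle at $q_0$ or $q_5$ is trivial simply because $q_0$ has no incoming and $q_5$ no outgoing forward edges.
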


We need to reduce the size of $G_k$ so that it can be constructed in time polynomial in~$k$.
Since $G_k$ uses ten copies of the subgadget $G_{k-1}$ (each zero test $Z_{k-1}$ uses two copies of $G_{k-1}$),
we cannot simply insert $G_{k-1}$ by copying it, as this would induce exponential growth of the number of states of our system.
Instead, we instantiate each gadget $G_{k-1}$ once.
Then, whenever a gadget would be used between two states $p, q$,
we push a fresh stack symbol $t_{p,q}$ and move to $G_{k-1}$.
When exiting $G_{k-1}$ we pop $t_{p,q}$ and return to $q$.
Since this symbol is unique for every pair of states,
it uniquely determines where we can leave the gadget to, even if there are multiple incoming and outgoing transitions
at the gadget $G_{k-1}$.
Finally, one can verify that \cref{prop:bidirectedgadget} still holds for this adapted version of $G_k$.

\subparagraph{Simulating the counter program}

We are ready to finish the lower bound proof.
We are given a counter program of size $n$ with three counters $x_1,x_2,x_3$ and want to reduce the $\exp^{k+1}(n)$-bounded halting problem
to the reachability problem for bidirected PVASS using $2k+6$ counters.
To this end, we construct the gadget $G_{k+1}$ three times: Each of these three instances has, instead of $c_{k+1}$ (and its complement), a counter $x_i$ (and its complement) for some $i\in\{1,2,3\}$. 
However, the three instances of $G_{k+1}$ share the counters $c_1,\ldots,c_k$ (and their complements).
Thus, in total, we have $2\cdot k+2\cdot 3=2k+6$ counters.
If $k$ is fixed, this yields our $k$-$\EXPSPACE$ lower bound (\cite{FischerMeyerRosenberg1968}). If $k$ is part of the input, the problem becomes $\TOWER$-complete.
We start by initializing the complement counters $\bar c_1, \dots, \bar c_k$ in sequence,
using variants of the gadgets $G_i$ that 
(i) operate on the balance counter $\bar c_i$ instead of $c_i$,
(ii) do not decrement $c_i$ when incrementing $\bar c_i$,
and (iii) operate on the lower $i-1$ counters as normal.
Similarly we initialize $\bar x_1, \bar x_2, \bar x_3$ to $\exp^{k+1}(n)$.
Finally, in order to have an all-zero configuration in the final state, we de-initialize these counters before entering the final state.

Increments and decrements in the counter program are directly translated into counter updates in the PVASS.
A conditional {\bf if} $x_i = 0$ {\bf then goto} $L_1$ {\bf else goto} $L_2$ is replaced by a nondeterministic guess
of whether $x_i = 0$ or $x_i \neq 0$, verifying this (in)equality, and jumping to $L_1$ or $L_2$.
Here we use variants of the zero tests $Z_{k+1} = G_{k+1} \circ \bar G_{k+1}$
which on their highest level operate on $x$ and $\bar x$ (instead of $c_{k+1}$ and $\bar c_{k+1}$).
The question of reachability of ${\bf halt}$ is then a reachability instance on the bidirected version of the PVASS.

If the counter program halts then we can find a corresponding computation in the PVASS.
Conversely, consider a successful run of the bidirected PVASS which uses a minimal number of reverse transitions.
By \Cref{prop:bidirectedgadget} we can assume
that no gadget $G_{k+1}$ and $\bar G_{k+1}$ (and their variants) is entered and exited through the same terminal state.
Furthermore, any subrun passing through such a gadget can be assumed to use only forward transitions.
Hence the only reverse transitions remaining are from increments or decrements.
Observe that the last occurrence of such a reverse transition $\bar \tau$
must be followed by its corresponding forward transition $\tau$.
Hence we can cancel $\tau$ with $\bar \tau$, contradiction.

\section{Conclusion}
We have shown that the reachability problem in bidirected
pushdown VASS is decidable, with an Ackermann upper bound
and a $\TOWER$ lower bound. Moreover, in the
one-dimensional case, the problem is in $\PSPACE$, whereas
$\PTIME$-hardness was shown in
\cite{GanardiMajumdarZetzsche2022a}. Thus, the exact
complexity, both in the general and the one-dimensional
case, remains open.

Another direction for future research is to study
bidirected versions of other infinite-state models. For
example, pushdown VASS are the simplest level in a
hierarchy of infinite-state models for which decidability
of the reachability problem is open~\cite{Zetzsche2017a}.
Perhaps the techniques from this paper can be applied to
show decidability of all levels in the bidirected setting.

\label{beforebibliography}
\newoutputstream{pages}
\openoutputfile{main.pg}{pages}
\addtostream{pages}{\getpagerefnumber{beforebibliography}}
\closeoutputstream{pages}
\bibliography{bib}

\begin{thebibliography}{10}

\bibitem{DBLP:conf/fsttcs/AtigG11}
Mohamed~Faouzi Atig and Pierre Ganty.
\newblock Approximating {Petri} net reachability along context-free traces.
\newblock In {\em {IARCS} Annual Conference on Foundations of Software
  Technology and Theoretical Computer Science, {FSTTCS} 2011, December 12-14,
  2011, Mumbai, India}, volume~13 of {\em LIPIcs}, pages 152--163. Schloss
  Dagstuhl - Leibniz-Zentrum f{\"{u}}r Informatik, 2011.
\newblock \href {http://dx.doi.org/10.4230/LIPIcs.FSTTCS.2011.152}
  {\path{doi:10.4230/LIPIcs.FSTTCS.2011.152}}.

\bibitem{BW93}
Thomas Becker, Volker Weispfenning, and Heinz Kredel.
\newblock {\em Gr{\"{o}}bner bases - a computational approach to commutative
  algebra}, volume 141 of {\em Graduate texts in mathematics}.
\newblock Springer, 1993.

\bibitem{bonnet2013theory}
R{\'e}mi Bonnet.
\newblock {\em Theory of well-structured transition systems and extended
  vector-addition systems}.
\newblock PhD thesis, ENS Cachan, France, 2013.
\newblock Th\`{e}se de doctorat.

\bibitem{DBLP:journals/corr/abs-1205-4458}
R{\'{e}}mi Bonnet, Alain Finkel, J{\'{e}}r{\^{o}}me Leroux, and Marc Zeitoun.
\newblock Model checking vector addition systems with one zero-test.
\newblock {\em Log. Methods Comput. Sci.}, 8(2), 2012.
\newblock \href {http://dx.doi.org/10.2168/LMCS-8(2:11)2012}
  {\path{doi:10.2168/LMCS-8(2:11)2012}}.

\bibitem{DBLP:conf/concur/BouajjaniEM97}
Ahmed Bouajjani, Javier Esparza, and Oded Maler.
\newblock Reachability analysis of pushdown automata: Application to
  model-checking.
\newblock In {\em {CONCUR} '97: Concurrency Theory, 8th International
  Conference, Warsaw, Poland, July 1-4, 1997, Proceedings}, volume 1243 of {\em
  Lecture Notes in Computer Science}, pages 135--150. Springer, 1997.
\newblock \href {http://dx.doi.org/10.1007/3-540-63141-0\_10}
  {\path{doi:10.1007/3-540-63141-0\_10}}.

\bibitem{DBLP:journals/entcs/BouzianeF97}
Zakaria Bouziane and Alain Finkel.
\newblock Cyclic {Petri} net reachability sets are semi-linear effectively
  constructible.
\newblock In {\em Second International Workshop on Verification of Infinite
  State Systems, Infinity 1997, Bologna, Italy, July 11-12, 1997}, volume~9 of
  {\em Electronic Notes in Theoretical Computer Science}, pages 15--24.
  Elsevier, 1997.
\newblock \href {http://dx.doi.org/10.1016/S1571-0661(05)80423-2}
  {\path{doi:10.1016/S1571-0661(05)80423-2}}.

\bibitem{Buchberger65}
Bruno Buchberger.
\newblock {\em Ein Algorithmus zum Auffinden der Basiselemente des
  Restklassenrings nach einem nulldimensionalen Polynomideal}.
\newblock PhD thesis, Universität Innsbruck, 1965.

\bibitem{ChatterjeeCP2018}
Krishnendu Chatterjee, Bhavya Choudhary, and Andreas Pavlogiannis.
\newblock {Optimal Dyck Reachability for Data-Dependence and Alias Analysis}.
\newblock {\em Proc. ACM Program. Lang.}, 2(POPL), December 2018.
\newblock \href {http://dx.doi.org/10.1145/3158118}
  {\path{doi:10.1145/3158118}}.

\bibitem{Chaudhuri}
Swarat Chaudhuri.
\newblock Subcubic algorithms for recursive state machines.
\newblock In {\em Proceedings of the 35th {ACM} {SIGPLAN-SIGACT} Symposium on
  Principles of Programming Languages, {POPL} 2008, San Francisco, California,
  USA, January 7-12, 2008}, pages 159--169. {ACM}, 2008.
\newblock \href {http://dx.doi.org/10.1145/1328438.1328460}
  {\path{doi:10.1145/1328438.1328460}}.

\bibitem{Cooper72}
David~C Cooper.
\newblock Theorem proving in arithmetic without multiplication.
\newblock {\em Machine intelligence}, 7(91-99):300, 1972.

\bibitem{CzerwinskiLLLM19}
Wojciech Czerwi\'{n}ski, Slawomir Lasota, Ranko Lazi\'{c}, J{\'{e}}r{\^{o}}me
  Leroux, and Filip Mazowiecki.
\newblock The reachability problem for {Petri} nets is not elementary.
\newblock In {\em Proceedings of the 51st Annual {ACM} {SIGACT} Symposium on
  Theory of Computing, {STOC} 2019, Phoenix, AZ, USA, June 23-26, 2019}, pages
  24--33. {ACM}, 2019.
\newblock \href {http://dx.doi.org/10.1145/3313276.3316369}
  {\path{doi:10.1145/3313276.3316369}}.

\bibitem{DBLP:conf/focs/CzerwinskiO21}
Wojciech Czerwi\'{n}ski and {\L}ukasz Orlikowski.
\newblock Reachability in vector addition systems is {Ackermann}-complete.
\newblock In {\em 62nd {IEEE} Annual Symposium on Foundations of Computer
  Science, {FOCS} 2021, Denver, CO, USA, February 7-10, 2022}, pages
  1229--1240. {IEEE}, 2021.
\newblock \href {http://dx.doi.org/10.1109/FOCS52979.2021.00120}
  {\path{doi:10.1109/FOCS52979.2021.00120}}.

\bibitem{Dube90}
Thomas~W Dub{\'e}.
\newblock The structure of polynomial ideals and {Gr{\"o}bner} bases.
\newblock {\em SIAM Journal on Computing}, 19(4):750--773, 1990.
\newblock \href {http://dx.doi.org/10.1137/0219053}
  {\path{doi:10.1137/0219053}}.

\bibitem{EilenbergSchutzenberger1969}
Samuel Eilenberg and M.~P. Sch\"{u}tzenberger.
\newblock Rational sets in commutative monoids.
\newblock {\em Journal of Algebra}, 13(2):173--191, 1969.
\newblock \href {http://dx.doi.org/10.1016/0021-8693(69)90070-2}
  {\path{doi:10.1016/0021-8693(69)90070-2}}.

\bibitem{EnglertHLLLS21}
Matthias Englert, Piotr Hofman, Slawomir Lasota, Ranko Lazi\'{c},
  J{\'{e}}r{\^{o}}me Leroux, and Juliusz Straszynski.
\newblock A lower bound for the coverability problem in acyclic pushdown {VAS}.
\newblock {\em Inf. Process. Lett.}, 167:106079, 2021.
\newblock \href {http://dx.doi.org/10.1016/j.ipl.2020.106079}
  {\path{doi:10.1016/j.ipl.2020.106079}}.

\bibitem{FigueiraFSS11}
Diego Figueira, Santiago Figueira, Sylvain Schmitz, and Philippe Schnoebelen.
\newblock {Ackermannian and Primitive-Recursive Bounds with Dickson's Lemma}.
\newblock In {\em Proceedings of the 26th Annual {IEEE} Symposium on Logic in
  Computer Science, {LICS} 2011, June 21-24, 2011, Toronto, Ontario, Canada},
  pages 269--278. {IEEE} Computer Society, 2011.
\newblock \href {http://dx.doi.org/10.1109/LICS.2011.39}
  {\path{doi:10.1109/LICS.2011.39}}.

\bibitem{DBLP:conf/fsttcs/FinkelLS18}
Alain Finkel, J{\'{e}}r{\^{o}}me Leroux, and Gr{\'{e}}goire Sutre.
\newblock Reachability for two-counter machines with one test and one reset.
\newblock In {\em 38th {IARCS} Annual Conference on Foundations of Software
  Technology and Theoretical Computer Science, {FSTTCS} 2018, December 11-13,
  2018, Ahmedabad, India}, volume 122 of {\em LIPIcs}, pages 31:1--31:14.
  Schloss Dagstuhl - Leibniz-Zentrum f{\"{u}}r Informatik, 2018.
\newblock \href {http://dx.doi.org/10.4230/LIPIcs.FSTTCS.2018.31}
  {\path{doi:10.4230/LIPIcs.FSTTCS.2018.31}}.

\bibitem{DBLP:conf/stacs/FinkelS00}
Alain Finkel and Gr{\'{e}}goire Sutre.
\newblock Decidability of reachability problems for classes of two counters
  automata.
\newblock In {\em {STACS} 2000, 17th Annual Symposium on Theoretical Aspects of
  Computer Science, Lille, France, February 2000, Proceedings}, volume 1770 of
  {\em Lecture Notes in Computer Science}, pages 346--357. Springer, 2000.
\newblock \href {http://dx.doi.org/10.1007/3-540-46541-3\_29}
  {\path{doi:10.1007/3-540-46541-3\_29}}.

\bibitem{FischerMeyerRosenberg1968}
Patrick~C. Fischer, Albert~R. Meyer, and Arnold~L. Rosenberg.
\newblock Counter machines and counter languages.
\newblock {\em Mathematical systems theory}, 1968.
\newblock \href {http://dx.doi.org/10.1007/BF01694011}
  {\path{doi:10.1007/BF01694011}}.

\bibitem{GanardiMajumdarZetzsche2022a}
Moses Ganardi, Rupak Majumdar, and Georg Zetzsche.
\newblock The complexity of bidirected reachability in valence systems.
\newblock To appear in Proc. of the Thirty-Seventh Annual ACM/IEEE Symposium on
  Logic in Computer Science (LICS 2022).

\bibitem{grabowski1981}
Jan Grabowski.
\newblock An algorithm to identify slices, with applications to vector
  replacement systems.
\newblock In {\em International Conference on Fundamentals of Computation
  Theory (FCT 1981)}, pages 425--432. Springer, 1981.
\newblock \href {http://dx.doi.org/10.1007/3-540-10854-8_46}
  {\path{doi:10.1007/3-540-10854-8_46}}.

\bibitem{Haase18}
Christoph Haase.
\newblock A survival guide to {Presburger} arithmetic.
\newblock {\em {ACM} {SIGLOG} News}, 5(3):67--82, 2018.
\newblock \href {http://dx.doi.org/10.1145/3242953.3242964}
  {\path{doi:10.1145/3242953.3242964}}.

\bibitem{DBLP:conf/cav/HagueL11}
Matthew Hague and Anthony~Widjaja Lin.
\newblock Model checking recursive programs with numeric data types.
\newblock In {\em Computer Aided Verification - 23rd International Conference,
  {CAV} 2011, Snowbird, UT, USA, July 14-20, 2011. Proceedings}, volume 6806 of
  {\em Lecture Notes in Computer Science}, pages 743--759. Springer, 2011.
\newblock \href {http://dx.doi.org/10.1007/978-3-642-22110-1\_60}
  {\path{doi:10.1007/978-3-642-22110-1\_60}}.

\bibitem{DBLP:journals/jcss/HarjuIKS02}
Tero Harju, Oscar~H. Ibarra, Juhani Karhum{\"{a}}ki, and Arto Salomaa.
\newblock Some decision problems concerning semilinearity and commutation.
\newblock {\em J. Comput. Syst. Sci.}, 65(2):278--294, 2002.
\newblock \href {http://dx.doi.org/10.1006/jcss.2002.1836}
  {\path{doi:10.1006/jcss.2002.1836}}.

\bibitem{HeintzeMcAllester}
Nevin Heintze and David~A. McAllester.
\newblock On the cubic bottleneck in subtyping and flow analysis.
\newblock In {\em Proceedings, 12th Annual {IEEE} Symposium on Logic in
  Computer Science ({LICS}), Warsaw, Poland, June 29 - July 2, 1997}, pages
  342--351. {IEEE} Computer Society, 1997.
\newblock \href {http://dx.doi.org/10.1109/LICS.1997.614960}
  {\path{doi:10.1109/LICS.1997.614960}}.

\bibitem{hirshfeld1994congruences}
Yoram Hirshfeld.
\newblock {\em Congruences in commutative semigroups}.
\newblock LFCS, Department of Computer Science, University of Edinburgh
  Edinburgh, 1994.

\bibitem{HU1979}
John~E Hopcroft and Jeffrey~D Ullman.
\newblock {\em Introduction to Automata Theory, Languages and Computation}.
\newblock Adison-Wesley, Reading, Mass, 1979.

\bibitem{Huynh86}
Dung~T. Huynh.
\newblock A superexponential lower bound for {Gr{\"{o}}bner} bases and
  {Church-Rosser} commutative {T}hue systems.
\newblock {\em Inf. Control.}, 68(1-3):196--206, 1986.
\newblock \href {http://dx.doi.org/10.1016/S0019-9958(86)80035-3}
  {\path{doi:10.1016/S0019-9958(86)80035-3}}.

\bibitem{DBLP:journals/ngc/Kari18}
Jarkko Kari.
\newblock Reversible cellular automata: From fundamental classical results to
  recent developments.
\newblock {\em New Gener. Comput.}, 36(3):145--172, 2018.
\newblock \href {http://dx.doi.org/10.1007/s00354-018-0034-6}
  {\path{doi:10.1007/s00354-018-0034-6}}.

\bibitem{DBLP:journals/pacmpl/KP22}
Adam~Husted Kjelstr{\o}m and Andreas Pavlogiannis.
\newblock The decidability and complexity of interleaved bidirected {Dyck}
  reachability.
\newblock {\em Proc. {ACM} Program. Lang.}, 6({POPL}):1--26, 2022.
\newblock \href {http://dx.doi.org/10.1145/3498673}
  {\path{doi:10.1145/3498673}}.

\bibitem{KoppenhagenM97}
Ulla Koppenhagen and Ernst~W. Mayr.
\newblock The complexity of the coverability, the containment, and the
  equivalence problems for commutative semigroups.
\newblock In {\em Fundamentals of Computation Theory, 11th International
  Symposium, {FCT} '97, Krak{\'{o}}w, Poland, September 1-3, 1997,
  Proceedings}, volume 1279 of {\em Lecture Notes in Computer Science}, pages
  257--268. Springer, 1997.
\newblock \href {http://dx.doi.org/10.1007/BFb0036189}
  {\path{doi:10.1007/BFb0036189}}.

\bibitem{Lazic2017}
Ranko Lazi{\'{c}} and Patrick Totzke.
\newblock {\em What Makes {Petri} Nets Harder to Verify: Stack or Data?}, pages
  144--161.
\newblock Springer International Publishing, 2017.
\newblock \href {http://dx.doi.org/10.1007/978-3-319-51046-0_8}
  {\path{doi:10.1007/978-3-319-51046-0_8}}.

\bibitem{DBLP:conf/popl/Leroux11}
J{\'{e}}r{\^{o}}me Leroux.
\newblock Vector addition system reachability problem: a short self-contained
  proof.
\newblock In {\em Proceedings of the 38th {ACM} {SIGPLAN-SIGACT} Symposium on
  Principles of Programming Languages, {POPL} 2011, Austin, TX, USA, January
  26-28, 2011}, pages 307--316. {ACM}, 2011.
\newblock \href {http://dx.doi.org/10.1145/1926385.1926421}
  {\path{doi:10.1145/1926385.1926421}}.

\bibitem{DBLP:journals/corr/abs-1301-4874}
J{\'{e}}r{\^{o}}me Leroux.
\newblock Vector addition system reversible reachability problem.
\newblock {\em Log. Methods Comput. Sci.}, 9(1), 2013.
\newblock \href {http://dx.doi.org/10.2168/LMCS-9(1:5)2013}
  {\path{doi:10.2168/LMCS-9(1:5)2013}}.

\bibitem{DBLP:conf/fsttcs/Leroux19}
J{\'{e}}r{\^{o}}me Leroux.
\newblock Distance between mutually reachable {Petri} net configurations.
\newblock In {\em 39th {IARCS} Annual Conference on Foundations of Software
  Technology and Theoretical Computer Science, {FSTTCS} 2019, December 11-13,
  2019, Bombay, India}, volume 150 of {\em LIPIcs}, pages 47:1--47:14. Schloss
  Dagstuhl - Leibniz-Zentrum f{\"{u}}r Informatik, 2019.
\newblock \href {http://dx.doi.org/10.4230/LIPIcs.FSTTCS.2019.47}
  {\path{doi:10.4230/LIPIcs.FSTTCS.2019.47}}.

\bibitem{DBLP:conf/focs/Leroux21}
J{\'{e}}r{\^{o}}me Leroux.
\newblock The reachability problem for {Petri} nets is not primitive recursive.
\newblock In {\em 62nd {IEEE} Annual Symposium on Foundations of Computer
  Science, {FOCS} 2021, Denver, CO, USA, February 7-10, 2022}, pages
  1241--1252. {IEEE}, 2021.
\newblock \href {http://dx.doi.org/10.1109/FOCS52979.2021.00121}
  {\path{doi:10.1109/FOCS52979.2021.00121}}.

\bibitem{LerouxS19}
J{\'{e}}r{\^{o}}me Leroux and Sylvain Schmitz.
\newblock Reachability in vector addition systems is primitive-recursive in
  fixed dimension.
\newblock In {\em 34th Annual {ACM/IEEE} Symposium on Logic in Computer
  Science, {LICS} 2019, Vancouver, BC, Canada, June 24-27, 2019}, pages 1--13.
  {IEEE}, 2019.
\newblock \href {http://dx.doi.org/10.1109/LICS.2019.8785796}
  {\path{doi:10.1109/LICS.2019.8785796}}.

\bibitem{LerouxST15}
J{\'{e}}r{\^{o}}me Leroux, Gr{\'{e}}goire Sutre, and Patrick Totzke.
\newblock On the coverability problem for pushdown vector addition systems in
  one dimension.
\newblock In {\em Automata, Languages, and Programming - 42nd International
  Colloquium, {ICALP} 2015, Kyoto, Japan, July 6-10, 2015, Proceedings, Part
  {II}}, volume 9135 of {\em Lecture Notes in Computer Science}, pages
  324--336. Springer, 2015.
\newblock \href {http://dx.doi.org/10.1007/978-3-662-47666-6\_26}
  {\path{doi:10.1007/978-3-662-47666-6\_26}}.

\bibitem{DBLP:conf/pldi/LiZR20}
Yuanbo Li, Qirun Zhang, and Thomas~W. Reps.
\newblock {Fast graph simplification for interleaved Dyck-reachability}.
\newblock In {\em Proceedings of the 41st {ACM} {SIGPLAN} International
  Conference on Programming Language Design and Implementation, {PLDI} 2020,
  London, UK, June 15-20, 2020}, pages 780--793. {ACM}, 2020.
\newblock \href {http://dx.doi.org/10.1145/3385412.3386021}
  {\path{doi:10.1145/3385412.3386021}}.

\bibitem{DBLP:journals/pacmpl/LiZR21}
Yuanbo Li, Qirun Zhang, and Thomas~W. Reps.
\newblock {On the complexity of bidirected interleaved Dyck-reachability}.
\newblock {\em Proc. {ACM} Program. Lang.}, 5({POPL}):1--28, 2021.
\newblock \href {http://dx.doi.org/10.1145/3434340}
  {\path{doi:10.1145/3434340}}.

\bibitem{lohrey2010automata}
Markus Lohrey and Benjamin Steinberg.
\newblock An automata theoretic approach to the generalized word problem in
  graphs of groups.
\newblock {\em Proceedings of the American Mathematical Society},
  138(2):445--453, 2010.
\newblock \href {http://dx.doi.org/10.1090/S0002-9939-09-10126-0}
  {\path{doi:10.1090/S0002-9939-09-10126-0}}.

\bibitem{MM82}
Ernst~W. Mayr and Albert~R. Meyer.
\newblock The complexity of the word problems for commutative semigroups and
  polynomial ideals.
\newblock {\em Advances in mathematics}, 46(3):305--329, 1982.
\newblock \href {http://dx.doi.org/10.1016/0001-8708(82)90048-2}
  {\path{doi:10.1016/0001-8708(82)90048-2}}.

\bibitem{Oppen78}
Derek~C. Oppen.
\newblock A $2^{2^{2^{pn}}}$ upper bound on the complexity of {Presburger}
  arithmetic.
\newblock {\em J. Comput. Syst. Sci.}, 16(3):323--332, 1978.
\newblock \href {http://dx.doi.org/10.1016/0022-0000(78)90021-1}
  {\path{doi:10.1016/0022-0000(78)90021-1}}.

\bibitem{Pierre1992}
Laurent Pierre.
\newblock Rational indexes of generators of the cone of context-free languages.
\newblock {\em Theoretical Computer Science}, 95(2):279--305, 1992.
\newblock \href {http://dx.doi.org/10.1016/0304-3975(92)90269-L}
  {\path{doi:10.1016/0304-3975(92)90269-L}}.

\bibitem{Post47}
Emil~L. Post.
\newblock Recursive unsolvability of a problem of {T}hue.
\newblock {\em J. Symb. Log.}, 12(1):1--11, 1947.
\newblock \href {http://dx.doi.org/10.2307/2267170}
  {\path{doi:10.2307/2267170}}.

\bibitem{Pottier91}
Lo\"{\i}c Pottier.
\newblock Minimal solutions of linear {D}iophantine systems: Bounds and
  algorithms.
\newblock In {\em Rewriting Techniques and Applications, 4th International
  Conference, RTA-91, Como, Italy, April 10-12, 1991, Proceedings}, volume 488
  of {\em Lecture Notes in Computer Science}, pages 162--173. Springer, 1991.
\newblock \href {http://dx.doi.org/10.1007/3-540-53904-2\_94}
  {\path{doi:10.1007/3-540-53904-2\_94}}.

\bibitem{DBLP:journals/entcs/Reinhardt08}
Klaus Reinhardt.
\newblock Reachability in {Petri} nets with inhibitor arcs.
\newblock {\em Electron. Notes Theor. Comput. Sci.}, 223:239--264, 2008.
\newblock \href {http://dx.doi.org/10.1016/j.entcs.2008.12.042}
  {\path{doi:10.1016/j.entcs.2008.12.042}}.

\bibitem{Schmitz16}
Sylvain Schmitz.
\newblock Complexity hierarchies beyond elementary.
\newblock {\em {ACM} Trans. Comput. Theory}, 8(1):3:1--3:36, 2016.
\newblock \href {http://dx.doi.org/10.1145/2858784}
  {\path{doi:10.1145/2858784}}.

\bibitem{Schmitz17}
Sylvain Schmitz.
\newblock {\em Algorithmic Complexity of Well-Quasi-Orders. (Complexit{\'{e}}
  algorithmique des beaux pr{\'{e}}-ordres)}.
\newblock 2017.
\newblock URL: \url{https://tel.archives-ouvertes.fr/tel-01663266}.

\bibitem{DBLP:conf/rp/SchmitzZ19}
Sylvain Schmitz and Georg Zetzsche.
\newblock Coverability is undecidable in one-dimensional pushdown vector
  addition systems with resets.
\newblock In {\em Reachability Problems - 13th International Conference, {RP}
  2019, Brussels, Belgium, September 11-13, 2019, Proceedings}, volume 11674 of
  {\em Lecture Notes in Computer Science}, pages 193--201. Springer, 2019.
\newblock \href {http://dx.doi.org/10.1007/978-3-030-30806-3\_15}
  {\path{doi:10.1007/978-3-030-30806-3\_15}}.

\bibitem{DBLP:conf/ecoop/XuRS09}
Guoqing Xu, Atanas Rountev, and Manu Sridharan.
\newblock Scaling {CFL}-reachability-based points-to analysis using
  context-sensitive must-not-alias analysis.
\newblock In {\em {ECOOP} 2009 - Object-Oriented Programming, 23rd European
  Conference, Genoa, Italy, July 6-10, 2009. Proceedings}, volume 5653 of {\em
  Lecture Notes in Computer Science}, pages 98--122. Springer, 2009.
\newblock \href {http://dx.doi.org/10.1007/978-3-642-03013-0\_6}
  {\path{doi:10.1007/978-3-642-03013-0\_6}}.

\bibitem{DBLP:conf/issta/YanXR11}
Dacong Yan, Guoqing Xu, and Atanas Rountev.
\newblock {Demand-driven context-sensitive alias analysis for Java}.
\newblock In {\em Proceedings of the 20th International Symposium on Software
  Testing and Analysis, {ISSTA} 2011, Toronto, ON, Canada, July 17-21, 2011},
  pages 155--165. {ACM}, 2011.
\newblock \href {http://dx.doi.org/10.1145/2001420.2001440}
  {\path{doi:10.1145/2001420.2001440}}.

\bibitem{Zetzsche2017a}
Georg Zetzsche.
\newblock The emptiness problem for valence automata over graph monoids.
\newblock {\em Information and Computation}, 277, 2021.
\newblock \href {http://dx.doi.org/10.1016/j.ic.2020.104583}
  {\path{doi:10.1016/j.ic.2020.104583}}.

\bibitem{DBLP:conf/pldi/ZhangLYS13}
Qirun Zhang, Michael~R. Lyu, Hao Yuan, and Zhendong Su.
\newblock {Fast algorithms for Dyck-CFL-reachability with applications to alias
  analysis}.
\newblock In {\em {ACM} {SIGPLAN} Conference on Programming Language Design and
  Implementation, {PLDI} '13, Seattle, WA, USA, June 16-19, 2013}, pages
  435--446. {ACM}, 2013.
\newblock \href {http://dx.doi.org/10.1145/2491956.2462159}
  {\path{doi:10.1145/2491956.2462159}}.

\bibitem{DBLP:conf/popl/ZhangS17}
Qirun Zhang and Zhendong Su.
\newblock Context-sensitive data-dependence analysis via linear conjunctive
  language reachability.
\newblock In {\em Proceedings of the 44th {ACM} {SIGPLAN} Symposium on
  Principles of Programming Languages, {POPL} 2017, Paris, France, January
  18-20, 2017}, pages 344--358. {ACM}, 2017.
\newblock \href {http://dx.doi.org/10.1145/3009837.3009848}
  {\path{doi:10.1145/3009837.3009848}}.

\end{thebibliography}

\appendix

\section{Additional material for \cref{sec:ackermann}} \label{app:ackermann}

\subtractive*

\begin{proof}

Let $M \subseteq S$ be the set of minimal nonzero elements of $S \subseteq \N^k$.
Clearly $M^* \subseteq S$ because $S$ is a submonoid.
For the converse, let $\bm{s}$ be a minimal element in $S \setminus M^*$.
Then, there exists $\bm{m} \in M^*$ such that $\bm{m} < \bm{s}$.
Since $S$ is subtractive we know that $\bm{s} - \bm{m} \in S$.
Moreover, by minimality of $\bm{s}$ in $S \setminus M^*$
we obtain $\bm{s} - \bm{m} \in M^*$.
But then $\bm{s} = (\bm{s} - \bm{m}) + \bm{m} \in M^*$ is a contradiction.
\end{proof}

\bigvector*

\begin{proof}

	We may clearly assume that $R$ is symmetric.
Let $\S = \langle V \rangle \cap (\N^d \times \N^d)$
where $V = R \cup E$,
$E = \{ (\bm{e}_i,\bm{e}_i) \mid i \in [1,d] \}$,
and $\langle \cdot \rangle$ denotes the generated subgroup.
We claim that $\S$ is a congruence on $\N^d$.
First, $\S$ is a subtractive submonoid and therefore $(\bm{x},\bm{y}), (\bm{z},\bm{z}) \in \S$ implies
$(\bm{x}+\bm{z},\bm{y}+\bm{z}) \in \S$.
It is a reflexive and symmetric relation since $E^* \subseteq \S$ and $V$ is symmetric.
For transitivity suppose $(\bm{x},\bm{y}), (\bm{y},\bm{z}) \in \S$,
and therefore $(\bm{x}+\bm{y},\bm{y}+\bm{z}) \in \S$.
Since $(\bm{y},\bm{y}) \in \S$ and $\S$ is subtractive we also have $(\bm{x},\bm{z}) \in \S$.

Next observe that $\Q \subseteq \S$ since $\S$ is a congruence containing $R$.
Furthermore $\Q_{\bm{b} {\uparrow}} \subseteq \S$ for any $\bm{b}$
since the subgroup $\langle V \rangle$ is clearly invariant under translation by $(\bm{b},\bm{b})$.
It remains to find a vector $\bm{b}$ such that $\Q_{\bm{b} {\uparrow}} \supseteq \S$.
Let $A \in \Z^{2d \times 2|V|}$ be the matrix with the $2|V|$ many column vectors $\{\bm{v}, -\bm{v} \mid \bm{v} \in V\}$.
Then we can write:
\[
	\S = \{ \bm{y} \in \N^{2d} \mid \exists \bm{x} \in \N^{2|V|} \colon A\bm{x} = \bm{y} \}.
\]
The set of all solutions $(\bm{x}, \bm{y}) \in \N^{2d + 2|V|}$ of $A \bm{x} - \bm{y} = \bm{0}$ forms
a submonoid generated by the finite set $M'$ of its minimal nonzero solutions.
By \cite[Theorem~1]{Pottier91} the norm of each minimal solution in $M'$ is bounded by
$\lambda = (1 + \|A\|_{1,\infty})^{2d}$ where $\|A\|_{1,\infty}$ is the maximal 1-norm of a column in $A$.
By projecting $M'$ to the $\bm{y}$-components we obtain the set $M$ of minimal nonzero solutions of $\S$,
which generates $\S$.
Hence, each vector $(\bm{s},\bm{t}) \in M$ can be written as
\begin{equation}
	\label{eq:st}
	(\bm{s},\bm{t}) = \sum_{\bm{v} \in V} \lambda_{\bm{s},\bm{t},\bm{v}} \cdot \bm{v} \quad \text{for some } \lambda_{\bm{s},\bm{t},\bm{p}} \in \Z, \, \bm{v} \in V
\end{equation}
where the absolute values of the coefficients $\lambda_{\bm{s},\bm{t},\bm{p}}$ are bounded by $\lambda$.
Define the vector
\begin{equation}
	\label{eq:r}
	(\bm{b},\bm{b}) = \lambda \cdot \sum_{\bm{v} \in V} \bm{v}.
\end{equation}
By summing \eqref{eq:st} and \eqref{eq:r}, we can write $(\bm{b},\bm{b})+ (\bm{s},\bm{t})$
as a non-negative linear combination of vectors $\bm{p} \in V$.
This proves $(\bm{s},\bm{t}) \in \Q_{\bm{b} {\uparrow}}$ and hence also $\S = M^* \subseteq \Q_{\bm{b} {\uparrow}}$.
\end{proof}

\subparagraph{Elimination ideals and ideal quotients}

Let $I \subseteq \Z[\bm{x}]$ is be an ideal, given by a basis $B$, and $\bm{y}$ is a subsequence of $\bm{x}$.
To compute a Gröbner basis for the elimination ideal $I \cap \Z[\bm{y}]$,
we first turn $B$ into a Gröbner basis with respect to a monomial ordering
where the variables in $\bm{y}$ are ordered greater than the variables in $\bm{x} \setminus \bm{y}$.
Then $B \cap \Z[\bm{y}]$ is a Gröbner basis for $I \cap \Z[\bm{y}]$ \cite[Proposition~6.15]{BW93}.

Next we explain how to compute a Gröbner basis for the ideal quotient $I : \bm{x}^{\bm{b}}$
for a given vector $\bm{b} \in \N^d$.
Observe that $I : \bm{x}^{\bm{b}}$ is a projection of the set of all solutions $(p_0,\bm{q}) \in \Z[\bm{x}]^{1+|B|}$
of the polynomial equation $p \bm{x}^{\bm{b}} + \sum_{f \in B} q_f f = 0$.
In general, if $\bm{f}=(f_1,\ldots,f_n)\in\Z[\bm{x}]^n$ is a tuple of polynomials, then the solutions $(q_1, \dots, q_n) \in \Z[\bm{x}]^n$, called {\em syzygies},
to the polynomial equation $q_1 f_1 + \dots + q_n f_n = 0$
form a {\em submodule} of $\Z[\bm{x}]^n$, denoted $\mathrm{Syz}(\bm{f})$. This means, $\mathrm{Syz}(\bm{f})$ is closed under componentwise addition in $\Z[\bm{x}]^n$
and is closed under scalar multiplication by $\Z[\bm{x}]$-polynomials.
Given $\bm{f} \in \Z[\bm{x}]^n$, one can compute a {\em generating set} for $\mathrm{Syz}(\bm{f})$
in elementary time \cite[Section~6.1]{BW93}, i.e. tuples $\bm{g}_1, \dots, \bm{g}_m \in \Z[\bm{x}]^n$
such that $\mathrm{Syz}(\bm{f}) = \{ \sum_{i=1}^m h_i \bm{g}_i \mid h_1, \dots, h_m \in \Z[\bm{x}] \}$.

\shift*

\begin{proof}
Let $L = \bm{b} + P^*$ be the region. We can treat the cases $L = \bm{b} {\uparrow}$ and $L = P^*$ separately
since $\Q_L = (\Q_{\bm{b} {\uparrow}})_{P^*}$.
For $L = \bm{b} {\uparrow}$ we compute the reduced Gröbner basis $B_{\bm{b}}$ for the ideal quotient $I : \bm{x}^{\bm{b}}$
and turn the basis back into a finite congruence basis
$R_{\bm{b}} = \{ (\bm{s},\bm{t}) \mid \bm{x}^{\bm{s}} - \bm{x}^{\bm{t}} \in B_{\bm{b}} \}$.
For $L = P^*$ we compute the reduced Gröbner basis $B_{\bm{y}}$ for the elimination ideal $I \cap \Z[\bm{y}]$
where $\bm{y}$ contains the variables in $\bm{x}$ that do not correspond to a unit vector in $P$.
Then we turn the basis $B_{\bm{y}}$ back into a finite congruence basis
$R_{\bm{y}} = \{ (\bm{s},\bm{t}) \mid \bm{x}^{\bm{s}} - \bm{x}^{\bm{t}} \in B_{\bm{y}} \}$.
\end{proof}

\congclosure*

\begin{proof}
As described before we can assume that $R$ is finite by replacing each linear set
$\bm{b} + \{\bm{p}_1, \dots, \bm{p}_n\}^*$
by the vectors $\bm{b}, \bm{b} + \bm{p}_1, \dots, \bm{b} + \bm{p}_n$.
We proceed by induction over $d$.
It suffices to give a reduction from an instance in dimension $d$ to instances in dimension $d-1$
whose running time is bounded by an elementary function independent from $d$.

First, we compute a vector $\bm{b} \in \N^d$ and a finite set $M \subseteq \N^{2d}$ of vectors of elementary norm
such that $\Q_{\bm{b} {\uparrow}} = M^*$ by \cref{lem:big-vector}.
This yields a representation for $\Q \cap L_0^2 = (\bm{b},\bm{b}) + M^*$ on the region $L_0 = \bm{b} {\uparrow}$.
The complement $\N^d \setminus (\bm{b} {\uparrow})$ can be decomposed into at most
$d \cdot \|\bm{b}\|$ many regions of dimension $d-1$:
Let $i_1, i_2, \dots, i_n \in [1,d]$ be arbitrary such that $\bm{b} = \bm{e}_{i_1} + \dots + \bm{e}_{i_n}$
where $n \le d \cdot \|\bm{b}\|$.
Then $\N^d \setminus (\bm{b} {\uparrow}) = \bigcup_{j=1}^n L_j$
where the region $L_j = \bm{b}_j + P_j^*$ is defined by
$\bm{b}_j = \bm{e}_{i_1} + \dots + \bm{e}_{i_{j-1}}$
and $P_j = \{ \bm{e}_k \mid k \neq i_j \}^*$.
Using \cref{lem:shift} we compute bases for the congruences $\Q_{L_j}$ on the submonoids $P_j^*$.
By projecting away some zero-component we can compute a semilinear representation for $\Q_{L_j}$ by induction hypothesis,
and thus also for $\Q \cap L_j^2 = (\bm{b}_j,\bm{b}_j) + \Q_{L_j}$.

Finally, given a decomposition $\N^d = \bigcup_{j=0}^n L_j$ into regions
and semilinear representations for the restrictions $Q_j = \Q \cap L_j^2$,
we can define the entire congruence $\Q$ in Presburger arithmetic.
Suppose that $\bm{s} \sim_\Q \bm{t}$
and consider a minimal length sequence 
$\bm{w}_1, \dots, \bm{w}_\ell \in \N^d$ where $\bm{s} = \bm{w}_1$, $\bm{w}_\ell = \bm{t}$,
and for each $i \in [1,\ell-1]$,
either there exists $(\bm{u},\bm{v}) \in R$ with
$\bm{w}_i - \bm{u} \ge \bm{0}$ and $\bm{w}_{i+1} = \bm{w}_i - \bm{u} + \bm{v}$,
or $(\bm{w}_i,\bm{w}_{i+1}) \in Q_k$ for some $k \in [0,n]$.
Then for each region $L_k$ the derivation contains at most 2 elements $\bm{w}_i$ belonging to $L_k$
since we could otherwise replace a subsequence $\bm{w}_i, \bm{w}_{i+1}, \dots, \bm{w}_j$ where $\bm{w}_i, \bm{w}_j \in L_k$
by the pair $\bm{w}_i, \bm{w}_j$ related in $Q_k$.
Therefore, we can express $\bm{u} \sim_\Q \bm{v}$ in Presburger arithmetic
by quantifying over sequences of length at most $2(n+1)$
where adjacent elements are related by an $R$-step or in some relation $Q_k$.
This concludes the proof.
\end{proof}

\bidirectedpvas*

\begin{proof}
	Let us bound the length of the chain $R_1 \subsetneq R_2 \subsetneq \dots \subsetneq R_\ell$ computed by \Cref{alg:pvas}.
	Let $n$ be the size of the PVAS.
	Then $\|R_i\| \le \exp^{c_2 d}(\|R_{i-1}\|)$ for all $i \in [1,\ell]$ where $c_2$ is an absolute constant.
	This follows from \cref{thm:cong-closure} and the fact that the update in line 7 takes elementary time.
	Hence, we can bound $\|R_i\| \le \exp^{c_2 i d}(n)$ for all $i \in [1,\ell]$.
	This in turn implies that the chain $R_1, R_2, \dots$ is $(g,n)$-controlled
	where $g(n) = \exp^{c_3 d}(n)$ for some absolute constant $c_3$.
	Here we use the fact that the set difference $S_2 \setminus S_1$
	of two semilinear sets contains a vector of elementary norm in $\|S_1\|+\|S_2\|$,
	if it is empty.
	It is easy to verify that $\exp(x) = 2^x \le 2^{x+1}(x+1)-1 = H^{\omega^2}(x)$.
	Since the Hardy functions satisfy $(h^\alpha)^\beta(x) = h^{\alpha \cdot \beta}(x)$
	(under specific conditions on $\alpha \cdot \beta$, see \cite[Eq.~(3.12)]{Schmitz16})
	the chain is indeed $(H^{\omega^2 c_3 d},n)$-controlled.
	By \Cref{lem:cong-chain} its length $\ell$ is bounded by
	\begin{align*}
	1 + (H^{\omega^2 c_3 d})^{\omega^{4d}}(4dn) &\le 1 + (H^{\omega^2 c_3 d})^{\omega^{4d}}(4dn c_3) \\
	&\le 1 + (H^{\omega^3})^{\omega^{4d}}(4dnc_3) \\
	& = 1 + H^{\omega^{4d+3}}(4dnc_3).
	\end{align*}
	
	It remains to argue that the running time is basically determined by $\ell$.
	Iteration $i$ in \Cref{alg:pvas} takes $\exp^{O(d)}(\|R_{i-1}\|)$ time,
	where $\|R_{i-1}\|$ can be bounded by $\exp^{O(\ell d)}(n)$.
	Hence the total running time of the algorithm is $\ell \cdot \exp^{O(\ell d)}(n)$.
	By \cite[Lemma~4.6]{Schmitz16} we obtain the complexity bounds $\F_{4d+3}$ in fixed dimension $d$
	and $\F_\omega$ in arbitrary dimension.
\end{proof}

\section{Additional material for \cref{sec:1d}} \label{app:1d}

\begin{lemma}\label{lem:summary_functions}
Consider any two states $p,q$ and natural number $k\in \N$,
and let $\DFinite_k(p,q)=(a,b)$ and $\DOmega_k(p)=c$, with $-\infty< a$.
Either $a>c$, or $a=c$ and $b=\omega$.
\end{lemma}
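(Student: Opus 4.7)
The plan is to prove the lemma by contradiction, showing first that $a<c$ is impossible, and then that $a=c$ forces $b=\omega$. Both parts use the same building block: concatenation, with a positive-weight cycle at $p$ pumped arbitrarily many times.

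First, I would dispose of the trivial case $c=-\infty$, where $a>c$ holds by assumption $a>-\infty$. So assume $c>-\infty$ and fix witnesses: a cycle $P_c\in\Paths{p}{p}_k$ with $\PathMin(P_c)=c$ and $\PathSum(P_c)>0$, and a path $P_a\in\Paths{p}{q}_k$ with $\PathMin(P_a)=a$. The key calculation I would record is that concatenation preserves the stack-height bound (both paths begin and end with empty stack) and satisfies
\[
\PathMin(P_1\Concat P_2)=\min\bigl(\PathMin(P_1),\,\PathSum(P_1)+\PathMin(P_2)\bigr),\qquad \PathSum(P_1\Concat P_2)=\PathSum(P_1)+\PathSum(P_2).
\]
Iterating, $\PathSum(P_c^m)=m\cdot\PathSum(P_c)$ and $\PathMin(P_c^m)=c$ for every $m\ge 1$, since $\PathSum(P_c)>0$ means the minimum along $P_c^m$ is attained during the first iteration.

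Step 1: rule out $a<c$. Suppose $a<c$. Then $P_c\Concat P_a\in \Paths{p}{q}_k$, and
\[
\PathMin(P_c\Concat P_a)=\min\bigl(c,\,\PathSum(P_c)+a\bigr)>a,
\]
since $c>a$ and $\PathSum(P_c)>0$ gives $\PathSum(P_c)+a>a$. This contradicts the maximality defining $a$ in $\DFinite_k(p,q)$. Hence $a\ge c$.

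Step 2: if $a=c$, show $b=\omega$. For each $m\ge 0$ consider $P_c^m\Concat P_a\in\Paths{p}{q}_k$. Using the formulas above together with $a=c$ and $\PathSum(P_c)>0$,
\[
\PathMin(P_c^m\Concat P_a)=\min\bigl(c,\,m\cdot\PathSum(P_c)+a\bigr)=a,
\]
so each such path lies in the set defining $b$, and its weight is $m\cdot\PathSum(P_c)+\PathSum(P_a)$, which is unbounded in $m$. Therefore $b=\omega$. There is no real obstacle here, since all relevant witnesses are available directly from the definitions; the only subtlety is ensuring $P_a$ exists, which it does because $a>-\infty$ guarantees the set of $\PathMin$ values is nonempty and attained.
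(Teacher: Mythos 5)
Your proof is correct and uses essentially the same key idea as the paper's: concatenating a positive-weight $p$-cycle witnessing $\DOmega_k(p)=c$ with a path to $q$ witnessing $\PathMin = a$. The paper phrases it as a single argument (assume $a \le c$, derive $a = c$ and $b=\omega$), whereas you split it into ruling out $a<c$ and then handling $a=c$, but the underlying construction is identical.
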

\begin{proof}
Since $-\infty< a$, there is a path $P\in \Paths{p}{q}_k$.
Assume that $a \leq c$, and let $C\in \Paths{p}{p}_k$ be a path with $\Weight(C)>0$.
Then for any natural number $\ell\in \N$, we can construct a path $H_{\ell}$ with $\PathMin(H_{\ell})=c$ and $\PathSum(H_{\ell})\geq\ell$ by traversing $C$ sufficiently many times and then following $P$ to $q$.
The sequence of paths $(H_{\ell})_{\ell }$ witnesses that $a=c$ and $b=\omega$, as desired.
\end{proof}

\begin{proof}[Proof of \cref{lem:sum_bounded}]
Since $b>-\infty$, we have $\Paths{p}{q}_k\neq \emptyset$.
Since $\P$ is bidirected, we also have $\Paths{q}{p}_k\neq \emptyset$, and thus there exists a path $P'\in \Paths{q}{p}_k$.
It is known that shortest paths in pushdown systems of $n$ states have length upper bounded by $2^{O(n^2)}$ (this follows from the pumping lemma for context-free languages~\cite[Lemma~6.1]{HU1979}), hence $|P'|\le2^{\alpha n^2}$, for some constant $\alpha$.
Now assume that $b>|P'|$, and thus we have a path $P\in \Paths{p}{q}_k$ with $\PathMin(P)=a$ and $\PathSum(P)>|P'|$.
We construct a cycle $C\in \Paths{p}{p}_k$ as $C=P\Concat P'$.
We have $\PathSum(C)=\PathSum(P)+\PathSum(P')\geq\PathSum(P) - |P'|>0$ and $\PathMin(C)\geq \min(\PathMin(P), \PathSum(P)-|P'|)=a$.
Thus, there exists a cycle starting in $p$ whose minimal counter value is $\ge a$ and which has
positive effect.
Therefore, if we set $c=\DOmega_k(p)$, then we have shown $a\leq c$. By \cref{lem:summary_functions} we have $a=c$ and $b=\omega$.
\end{proof}

\begin{proof}[Proof of \cref{lem:domega_pumping}]
By the definition of $\DFinite_{k}$, we have that $\PathMin(P)<a$.
Let $H$ be a path that witnesses $\DFinite_{k}(p,q)$, and $C=P\Concat \Reverse{H}$, 
and notice that $C\in \Paths{p}{p}_{k}$.
Moreover, $\PathSum(C) = \PathSum(P) + \PathSum(\ov{H}) = \PathSum(P)-b>0$,
while $\PathMin(C)=\min(\PathMin(P), \PathSum(P) + \PathMin(\Reverse{H}))$.
Note that $\PathMin(\Reverse{H}) = a-b$, and thus $\PathSum(P)+\PathMin(\Reverse{H})>a$, concluding that $\PathMin(C)=\PathMin(P)$.
Thus $\DOmega_k(p)\geq \PathMin(P)$, as desired.
\end{proof}

\begin{proof}[Proof of \cref{lem:min_cycle_bounded}]
Consider a shortest path $P\in \Paths{p}{p}_k$ with $\PathSum(P)>0$, and we first argue that $\MaxStackHeight(P)\leq 2n^2$.
Indeed, if $\MaxStackHeight(P)> 2n^2$, then there is a pair of states $(q, r)$ that appears three times in $P$, $(q_i,r_i)_{1\leq i \leq 3}$, such that, for each $i\in \{1,2\}$, $P$ has subpaths $P^q_i\colon q_{i}\rightsquigarrow q_{i+1}$ and $P^r_i\colon r_{i+1}\rightsquigarrow r_i$ such that removing $P^q_i$ and $P^r_i$ yields another path $P'_i\in \Paths{p}{p}_{<k}$.
Moreover, let $P'_{1,2}\in\Paths{p}{p}_{<k}$ be the path obtained by removing $P^q_i$ and $P^r_i$ for both $i=1$ and $i=2$.
Then each of the three paths $P'_1,P'_2,P'_{1,2}$ is strictly shorter than $P$ and therefore $w(P'_1)=w(P'_2)=w(P'_{1,2})=0$: If one of these paths had non-zero effect, then this path or its reverse would be a shorter cycle on $p$ than $P$ with positive effect.
However, observe that 
\[ w(P)=w(P'_{1,2})~+~\underbrace{(w(P'_1)-w(P'_{1,2}))}_{\text{effect of $P^q_1$ and $P^r_1$}}~+~\underbrace{(w(P'_2)-w(P'_{1,2}))}_{\text{effect of $P^q_2$ and $P^r_2$}}, \]
and since the right-hand side vanishes, we have $w(P)=0$, a contradiction.
Thus $\MaxStackHeight(P)\leq 2n^2$.

Next, we argue that $|P| \le 2n|\Gamma|^{2n^2}$. Indeed, if this is not the case, then $P$ traverses two cycles in the configuration space, i.e., it contains two subpaths $H_s\in \Paths{s}{s}$ and $H_t\in \Paths{t}{t}$ such that, removing either or both of them, results in another path in $\Paths{p}{p}_{k'}$, where $k'=\MaxStackHeight(P)$.
An analogous argument to the above concludes that we can remove one or both of them, to obtain a shorter positive cycle from $p$ to itself.

Since $|\Gamma|=2$, this means $|P|=2^{O(n^2)}$, which implies the statement of the lemma (witnessed by the worst case where $P$ decrements the counter in each step).
\end{proof}

\begin{proof}[Proof of \cref{lem:strongly_connected}]
Since $\P$ is bidirected, it suffices to argue that for every two states $p,q$, and $\sigma\in \StackAlphabetBot$, if we have a path $\tuple{p,\sigma}\DTo{} \tuple{p,v,\sigma}\DTo{} \tuple{q,\sigma}$ then we also have a path $\tuple{q,\sigma}\DTo{} \tuple{q,u,\sigma}\DTo{} \tuple{p,\sigma}$.
Observe that the condition for having the former path is that $\DFinite_{k-1}(u,v)=(a,\Some)$, for some $-\infty <a$, i.e., we have $\Paths{u}{v}_{k-1}\neq \emptyset$.
Since $\P$ is bidirected, we also have $\Paths{v}{u}_{k-1}\neq \emptyset$, and thus the latter path is also present in $G_k$.
\end{proof}

\begin{proof}[Proof of \cref{lem:1d_correctness}]
The statement holds clearly for $k=0$.
Now we argue that the statement also holds for arbitrary $k$, given that it holds for $k-1$.
In turn, it suffices to show that, for every two states $p,q$, for every path $P\in \Paths{p}{q}_{k-1}$ 
and $\sigma\in \StackAlphabetBot$,
we have a path $P'\in \Paths{\tuple{p,\sigma}}{\tuple{q,\sigma}}^{G_{k}}$ that is contained in $G_{k}\Project\sigma$ and such that $P\Dominated P'$.
This is sufficient because for any paths $P_1$, $P_2$, $P_1'$, $P'_2$ with $P_1\Dominated P'_1$ and $P_2\le P'_2$, we have $P_1\circ P_2\Dominated P'_1\circ P'_2$.

Let $\DFinite_{k-1}(p,q)=(a,b)$ and $\DOmega_{k-1}(p)=c$, and by the induction hypothesis we have $(a,b)=\DFinite^{G_{k-1}}(\tuple{p,\bot},\tuple{q,\bot})$ and $c=\DOmega^{G_{k-1}}(\tuple{p,\bot})$.
In particular, this implies that $\PathMin(P)\leq a$.
We distinguish the following cases.
\begin{itemize}
\item $b=\omega$.
Due to \cref{lem:summary_functions}, we have $c=a$.
We construct $P'$ by traversing the loop $\tuple{p, \sigma}\DTo{c} \tuple{p',\sigma}\DTo{-c+1}\tuple{p,\sigma}$ sufficiently many times, and then taking the edges $\tuple{p, \sigma}\DTo{a} \tuple{p,q,\sigma} \DTo{-a}\tuple{q,\sigma}$, to obtain $P\Dominated P'$.
\item $b<\omega$ and $\PathSum(P)\le b$. 
We construct $P'$ by traversing the edges $\tuple{p, \sigma}\DTo{a} \tuple{p,q,\sigma}\DTo{-a+b} \tuple{q,\sigma}$, to obtain $P\Dominated P'$.
\item $b<\omega$ and $\PathSum(P)>b$.
Due to \cref{lem:domega_pumping}, we have that $\PathMin(P)<c$.
We construct $P'$ by traversing the loop $\tuple{p, \sigma}\DTo{c} \tuple{p',\sigma}\DTo{-c+1}\tuple{p,\sigma}$ sufficiently many times, and then taking the edges $\tuple{p, \sigma}\DTo{a} \tuple{p,q,\sigma} \DTo{-a+b}\tuple{q,\sigma}$, to obtain $P\Dominated P'$.
\end{itemize}
\end{proof}

\begin{proof}[Proof of \cref{lem:1d_complexity}]
Due to \cref{lem:min_bounded}, \cref{lem:sum_bounded} and \cref{lem:min_cycle_bounded}, all elements of the summary functions $\DFinite_k$ and $\DOmega_k$ that are finite are bounded in absolute value by $2^{O(n^2)}$, and thus each iteration takes polynomial space.
Since for every $k>0$, the algorithm proceeds in the next iteration iff $\DFinite_{k-1}\StrictDominated \DFinite_{k}$ or $\DOmega_{k-1}\StrictDominated \DOmega_{k}$, we conclude that the algorithm terminates within $2^{O(n^2)}$ iterations.
\end{proof}

\subparagraph{Computing $\DFinite^G$ and $\DOmega^G$.}
Here we present an algorithm behind \cref{lem:dfinite_computation}.

Let $n=|V|$ be the number of nodes in $G$.
Given a simple cycle $C$ of $G$ with $\PathSum(C)>0$, we call a node $x$ in $C$ {\em critical} if the path $C_x$ obtained by traversing $C$ starting from $x$ is such that $\PathMin(C_x)=0$.
Note that every positive simple cycle has at least one critical node.
We assume without loss of generality that $G$ consists of a single (strongly) connected component, as otherwise we can repeat the computation on each component of $G$ independently.
Given a pair of values $(a,b)\in (\Z_{\leq 0}\cup\{-\infty\}) \times (\Z\cup\{-\infty \})$ and an integer $c\in \Z$, we define the extend operation $\Extend$ as 
\[
(a,b)\Extend c = ( \min(a, b+c), b+c ).
\]
The extend operation is monotonic, i.e.\ $(a,b) \le (a',b')$ implies $(a,b)\Extend c \le (a',b')\Extend c$.
The algorithm manipulates a graph data structure $G'$, initially identical to $G$, as follows.
\begin{enumerate}
\item\label{item:step1} We perform a standard Bellman-Ford computation on $G'$,
and detect whether there exists a positive cycle $C$.
If so, we identify a critical node $x$ of $C$.
We remove the edges outgoing $x$ from $G'$, and repeat this step.
\item\label{item:step2}
Let $X$ be the set of critical nodes identified in the previous step, 
and $G'=(V', E', \Weight')$ the graph data structure at the end of that step (note that $V'=V$).
For every node $u$, we compute $\DOmega^{G}(u)$ and $\DFinite^{G}(u,v)$ for all nodes $v\in V'$, as follows.
\begin{enumerate}
\item\label{step2a} 
We initialize a map data structure $\DS\colon V \to (\Z_{\leq 0}\cup\{-\infty\}) \times (\Z\cup\{-\infty \})$ with 
$\DS(u)=(0, 0)$ and $\DS(v)=(-\infty, -\infty)$ for all nodes $v\neq u$.
We then perform $n-1$ relaxation steps as follows.
In each step, we iterate over all edges $(y,v) \in E'$, and let $(a',b')=\DS(y)\Extend \Weight'(y,v)$.
If $\DS(v)\StrictDominated (a',b')$, we update $\DS(v)$ with $(a',b')$.
\item\label{step2b} 
We compute $c=\min(\{ a \colon x\in X \text{ and } \DS(x)=(a,\Some) \})$, and report that $\DOmega^{G}(u)=c$.
Given a node $v$, let $\DS(v)=(a_v,b_v)$ at the end of the previous step.
If $a_v\leq c$, we report that $\DFinite^G(u,v)=(c, \omega)$.
Otherwise, we report that $\DFinite^G(u,v)=\DS(v)$.
\end{enumerate}
\end{enumerate}

The algorithm clearly operates in polynomial time in the size of $G$, thus it remains to argue about the correctness.
We start with a straightforward lemma.

\begin{lemma}\label{lem:bf_correctness}
At the end of \cref{step2a} above, for every node $v$ we have $\DS(v)=\DFinite^{G'}(u,v)$.
\end{lemma}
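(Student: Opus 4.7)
The plan is to proceed by induction on the iteration count $k \in \{0, 1, \dots, n-1\}$ of the relaxation loop, establishing the invariant: \emph{after $k$ iterations, $\DS(v)$ equals the restriction of $\DFinite^{G'}(u,v)$ to paths of length at most $k$}. The base case $k=0$ is immediate from the initialization, since $\DS(u) = (0,0)$ reflects the empty path at $u$ while no length-$0$ path reaches any other vertex.

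The key algebraic identity used in the inductive step is that appending an edge of weight $c$ to a path $P$ with summary $(a, b)$ produces the summary $(a, b) \Extend c = (\min(a, b+c), b+c)$: the new final counter is $b+c$, while the new minimum is the minimum of the old minimum $a$ and the new endpoint $b+c$. Hence, if $P = P' \Concat e$ is a path of length $k+1$ from $u$ to $v$ with final edge $e = (y, v)$ of weight $c$, its summary equals $(\PathMin(P'), \PathSum(P')) \Extend c$; by the inductive hypothesis applied to $P'$ (of length $\le k$), this is $\Dominated$ by $\DS(y) \Extend c$, and the update rule during iteration $k+1$ correctly propagates it into $\DS(v)$. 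Monotonicity of $\Extend$ preserves these dominations across subsequent iterations.

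To conclude the lemma, I would argue that every optimal path in $G'$ admits a simple representative of length at most $n - 1$. Since $G'$ contains no positive cycles (by Step~1), removing a cycle $C$ from a path $P = P_1 \Concat C \Concat P_2$ yields a path $P' = P_1 \Concat P_2$ with $\PathSum(P') \ge \PathSum(P)$ (because $\PathSum(C) \le 0$) and $\PathMin(P') \ge \PathMin(P)$ (the dip during $C$ disappears, and the counter values along $P_2$ shift upward by $-\PathSum(C) \ge 0$). Iterating this reduction yields a simple path whose summary dominates that of $P$, so after $n-1$ iterations $\DS(v) = \DFinite^{G'}(u, v)$.

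The main subtlety is that $\Dominated$ is only a partial order on summaries, so two length-$k$ paths to the same intermediate vertex $y$ may yield Pareto-incomparable summaries, and only one can be stored in the single slot $\DS(y)$. Making the inductive hypothesis rigorous therefore requires either strengthening it to a statement about Pareto frontiers, or exploiting the structure of the graphs $G_k$ from the surrounding saturation procedure (whose edges arise in complementary pairs reflecting the bidirected PVASS and the summaries $\DFinite_{k-1}, \DOmega_{k-1}$) to rule out pathological incomparable trade-offs along paths witnessing optimal summaries.
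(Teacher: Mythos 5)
Your approach---Bellman--Ford correctness via induction on the number of relaxation rounds---is the same as the paper's. The paper splits the claimed equality into two one-sided domination invariants: $\DS(v) \Dominated \DFinite^{G'}(u,v)$ holds throughout, and $\DFinite^{G'}_i(u,v) \Dominated \DS(v)$ holds after round $i$ (where $\DFinite^{G'}_i$ restricts to paths of length at most $i$); combined with the observation that simple paths suffice in $G'$ because it has no positive cycles, these two give equality. Your cycle-removal argument and your use of monotonicity of $\Extend$ match the paper's corresponding steps.

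However, the subtlety you flag at the end is genuine, and it is not automatically resolved by splitting into two one-sided invariants. Both of the paper's invariant proofs tacitly use that the summary $(\PathMin(H),\PathSum(H))$ of an arbitrary path or path prefix $H$ is componentwise dominated by the corresponding $\DFinite$ value. That is false in general, since $\DFinite$ records only the \emph{lexicographically} maximal pair (highest $\PathMin$, then highest $\PathSum$ among those); a path with a strictly smaller minimum may have a strictly larger final sum, giving an incomparable summary. Concretely, in the second invariant the witness $P$ for $\DFinite^{G'}_i(u,v)$ is written as $H\Concat(y,v)$, and the step asserting $(\PathMin(H),\PathSum(H))\Dominated\DS(y)$ ``by the induction hypothesis'' would require $(\PathMin(H),\PathSum(H))\Dominated\DFinite^{G'}_{i-1}(u,y)$, which does not follow. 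Relatedly, the update rule ``if $\DS(v)\StrictDominated(a',b')$ then overwrite'' silently discards a newly computed candidate that is incomparable to the stored value, so the final $\DS(v)$ can in principle depend on the edge-processing order unless incomparable summaries are ruled out. A complete argument must therefore do exactly what you sketch at the end: either show that in $G'$ (obtained from a strongly connected $G$ by cutting all positive cycles at critical nodes) no two Pareto-incomparable summaries can reach the same node, or change the stored data to a Pareto frontier, or switch to a total comparison under which $\Extend$ remains monotone. You have correctly located where the work lies; a finished proof must actually supply that step, and the paper's own write-up is terse on precisely this point.
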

\begin{proof}
For a natural number $i\in \N$, we let $\Paths{u}{v}^{G'}_i$ be the paths in $\Paths{u}{v}^{G'}$ of length $\leq i$.
and similarly denote by $\DFinite^{G'}_i$ the corresponding summary function restricted to paths in $\Paths{u}{v}^{G'}_i$.
We argue that the following assertions hold for all nodes $v$.
\begin{enumerate}
\item\label{item:inv1}
At all times, we have $\DS(v)\Dominated \DFinite^{G'}(u,v)$.
\item\label{item:inv2}
For all $i\in \{0,\dots, n-1\}$, at the end of iteration $i$, we have $\DFinite^{G'}_i(u,v) \Dominated \DS(v)$.
\end{enumerate}
Note that since $G'$ does not have positive cycles, \cref{item:inv2} implies that, in fact,  $\DFinite^{G'}(u,v) \Dominated \DS(v)$,
and hence the two items imply that $\DFinite^{G'}_i(u,v) = \DS(v)$.
We argue about each item separately.
\begin{enumerate}
\item The statement clearly holds after the initialization of $\DS$ with $\DS(u)=0$ and $\DS(v)=-\infty$ for all nodes $v\neq u$.
Now consider a point in which $\DS(v)$ is updated with $\DS(y)\Extend \Weight'(y,v)$, and assume that the statement holds for $\DS(y)$.
Thus we have a path $P\in \Paths{u}{y}^{G'}$ with $\DS(y)\Dominated (\PathMin(P), \PathSum(P))$.
We obtain a path $H\in \Paths{u}{v}^{G'}$ by extending $P$ with the edge $(y,v)$.
Observe that $\PathMin(H)=\min(\PathMin(P), \PathSum(P)+\Weight'(y,v))$ and $\PathSum(H)=\PathSum(P)+\Weight'(y,v)$.
Hence $(\PathMin(H), \PathSum(H))=(\PathMin(P), \PathSum(P)) \Extend \Weight'(y,v)$, and thus after the update
we have $\DS(v)\Dominated \DFinite^{G'}(u,v)$ by monotonicity of $\Extend$.
\item The statement clearly holds after the initialization of $\DS$, i.e., in iteration $0$.
Now assume it holds for some iteration $i-1$, for $i>0$, and we argue that it holds for iteration $i$.
Let $P$ be a path witnessing $\DFinite^{G'}_i(u,v)$.
If $|P|<i$, the statement holds by the induction hypothesis and the fact that $\DS(v)$ can only increase over the course of the algorithm.
Otherwise we have $|P|=i$, and $P$ occurs by extending a path $H$, having $|H|=i-1$, with an edge $(y,v)$.
By the induction hypothesis, we have $(\PathMin(H), \PathSum(H)) \Dominated \DS(y)$.
After the algorithm has processed the edge $(y,v)$, we have $(\PathMin(H), \PathSum(H))\Extend\Weight'(y,v)\Dominated \DS(v)$, and thus $(\PathMin(P), \PathSum(P))\Dominated\DS(v)$.
\end{enumerate}
The desired result follows.
\end{proof}

The following lemma yields the correctness of the algorithm.

\begin{lemma}\label{lem:finite_graphs_correctness}
Consider two nodes $u,v$ and let $\DFinite^G(u,v)=(a,b)$ and $\DOmega^G(u)=c$.
The following assertions hold.
\begin{enumerate}
\item $c=\min(\{a'\colon x\in X \text{ and }\DFinite^{G'}(u,x)=(a',\Some)\})$.
\item Either $a=c$ and $b=\omega$ or $a>c$ and $\DFinite^{G'}(u,v)=(a,b)$.
\end{enumerate}
\end{lemma}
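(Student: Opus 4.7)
I would start by recording the following structural facts about the algorithm. After termination of step~1, $G'$ is obtained from $G$ by removing, for each $x \in X$, all outgoing edges of $x$, and no positive simple cycle survives in $G'$; equivalently, every closed walk in $G'$ has non-positive weight. Consequently every positive closed walk of $G$ must traverse at least one outgoing edge of some $x \in X$. These observations, together with the strong-connectivity hypothesis on the components of $G$, drive the rest of the proof.

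\textbf{Proving part (1).} For the ``easy'' direction I would fix any $x \in X$ reachable from $u$ in $G'$ with $\DFinite^{G'}(u,x) = (a_x, b_x)$ and exhibit a positive cycle in $\Paths{u}{u}^G$ whose $\PathMin$ equals $a_x$. The witness is $P \Concat C_x^m \Concat R$, where $P$ is a $G'$-path attaining $\PathMin(P) = a_x$ and $\PathSum(P) = b_x$, $C_x$ is the positive simple cycle at $x$ identified by the algorithm (so $\PathMin(C_x) = 0$ by criticality while $\PathSum(C_x) > 0$), and $R$ is an arbitrary return path from $x$ to $u$ in $G$, which exists by strong-connectivity. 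Since $\PathMin(C_x) = 0$ and $b_x \ge a_x$, iterating $C_x$ keeps the counter at least $b_x$, and choosing $m$ large enough ensures the counter stays non-negative (and in fact large) along $R$; hence the concatenation has $\PathMin = a_x$ and positive $\PathSum$, giving $\DOmega^G(u) \ge a_x$. For the matching direction, consider a cycle $P^\star \in \Paths{u}{u}^G$ attaining $\DOmega^G(u)$. Since $\PathSum(P^\star) > 0$ and $G'$ has no positive cycle, $P^\star$ cannot lie entirely in $G'$, so it uses the outgoing edge of some $x^\star \in X$; taking the first such occurrence, the prefix of $P^\star$ up to $x^\star$ lives entirely in $G'$, whence $a_{x^\star} \ge \PathMin(\text{prefix}) \ge \PathMin(P^\star) = \DOmega^G(u)$. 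Combining both directions identifies $\DOmega^G(u)$ with the aggregate $c$ over $\{a_x : x \in X\}$.

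\textbf{Proving part (2).} If $a = c$, I would mimic the argument of \cref{lem:summary_functions} in the graph setting: take a positive cycle from $u$ with $\PathMin = c$ and concatenate sufficiently many copies of it in front of a path witnessing $\DFinite^G(u,v) = (a,b)$. The resulting paths in $\Paths{u}{v}^G$ have $\PathMin = c = a$ and unboundedly large $\PathSum$, so $b = \omega$. If $a > c$, I claim that every $G$-path from $u$ to $v$ with $\PathMin = a$ already lies in $G'$: otherwise, letting $x \to \cdot$ be the first outgoing $X$-edge it uses, the prefix up to $x$ is a $G'$-path with $\PathMin \ge a > c \ge a_x$, contradicting the optimality of $\DFinite^{G'}(u,x)$. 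Hence the sets of $G$- and $G'$-paths achieving $\PathMin = a$ coincide, which together with the absence of positive cycles in $G'$ (bounding $\PathSum$) yields $\DFinite^{G'}(u,v) = (a,b)$.

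\textbf{Main obstacle.} The most delicate piece is the ``first $X$-outgoing edge'' argument, used in both the upper bound of (1) and the case $a > c$ of (2): one must show that an optimal walk either respects $G'$ throughout, or its very first violation of $G'$ produces a tight bound against the corresponding $a_{x^\star}$. Boundary cases---when $a = -\infty$, when $X = \emptyset$, or when some $x \in X$ is unreachable from $u$ in $G'$---must be handled separately but follow the same template with no new ideas.
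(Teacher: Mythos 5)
Your overall structure tracks the paper's own proof quite closely: for part~(1) you lower-bound $\DOmega^G(u)$ via the reach--cycle--return construction (reach a critical node $x$ in $G'$, iterate the cycle on which $x$ is critical, return by strong connectivity), and upper-bound it by taking an optimal positive cycle $P^\star$ and cutting it at the first outgoing $X$-edge; for part~(2) you split on $a=c$ (pump, as in \cref{lem:summary_functions}) versus $a>c$ (the witness path lies in $G'$). These are exactly the ideas the paper uses.

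There is, however, a discrepancy you have glossed over by writing ``aggregate.'' Your two inequalities give $\DOmega^G(u)\ge a_x$ for \emph{every} reachable $x\in X$ and $\DOmega^G(u)\le a_{x^\star}$ for one particular $x^\star$, which together force $\DOmega^G(u)=a_{x^\star}=\max_{x\in X}a_x$ -- not the $\min$ asserted in the lemma. Your part-(2) case $a>c$ also silently needs $c\ge a_{x^\star}$, which only holds when $c$ is the maximum; with $c=\min$ the chain $\PathMin(\mathrm{prefix})\ge a>c\ge a_{x^\star}$ breaks because $c\ge a_{x^\star}$ can fail. And the $\min$ version is indeed false: take nodes $u,x_1,x_2$ with edges $u\to x_1$ of weight $-5$, $x_1\to x_1$ of weight $+1$, $u\to x_2$ of weight $-3$, $x_2\to x_2$ of weight $+5$, and return edges $x_1\to u$, $x_2\to u$ of weight $0$. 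Step~1 removes outgoing edges of $x_1$ and $x_2$, so $X=\{x_1,x_2\}$, $a_{x_1}=-5$, $a_{x_2}=-3$, yet $\DOmega^G(u)=-3$ (via $u\to x_2\to x_2\to u$), which is the $\max$. The paper's own upper-bound step has the same gap: ``$P$ must contain a node $x\in X$, thus $\PathMin(P)\le\min(\ldots)$'' only gives $\PathMin(P)\le a_{x^\star}$, bounding $c$ by $\max$, not $\min$. So the $\min$ in the lemma (and in step~2(b) of the algorithm) is a $\min/\max$ slip; what you have argued is the corrected statement, and you should say so explicitly instead of leaving the aggregation undetermined.
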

\begin{proof}
We argue about each item separately.
\begin{enumerate}
\item Since $G'$ is a subgraph of $G$, we clearly have $c\geq \min(\{a'\colon x\in X \text{ and }\DFinite^{G'}(u,x)=(a',\Some)\})$,
witnessed by a path that reaches such a node $x$, traverses a positive cycle on which $x$ is critical sufficiently many times, and then returns to $u$.
We now argue that $c\leq \min(\{a'\colon x\in X \text{ and }\DFinite^{G'}(u,x)=(a',\Some)\})$.
Indeed, let $P$ be a path witnessing $\DOmega^G(u)$, and thus $\PathMin(P)=c$ and $\PathSum(P)>0$.
Since $P$ traverses a positive cycle and there are no positive cycles in $G'$, $P$ must contain a node $x\in X$,  thus $\PathMin(P)\leq \min(\{a'\colon x\in X \text{ and }\DFinite^{G'}(u,x)=(a',\Some)\})$, and hence $c\leq \min(\{a'\colon x\in X \text{ and }\DFinite^{G'}(u,x)=(a',\Some)\})$, as desired.
\item Due to \cref{lem:summary_functions} and the previous item, either $a=c$ and $b=\omega$, or $a>c$.
In the latter case there is a path $P$ witnessing $\DFinite^G(u,v)=(a,b)$ which does not contain any node in $X$.
Hence $P$ is a path in $G'$, and thus $\DFinite^{G'}(u,v)=(a,b)$.
\end{enumerate}
\end{proof}

\subparagraph{Exponential stack height}
We remark that even in 1 dimension, runs that witness reachability might require an exponential height on the stack.
As a simple example, consider a PVASS $\P$ with transitions 
(i)~$p\DTo{\sigma}p'$,
(ii)~$p'\DTo{1}p$
(iii)~$q\DTo{\sigma}q'$,
(iv)~$q'\DTo{1}q$,
as well as their bidirected variants (we use $\sigma$ to denote pushing on the stack).
Moreover, $p$ can reach $q$ via an exponentially long path (that only uses polynomial stack height) which decreases the counter in almost every step in the first half of the path, and increases the counter in almost every step in the second half.
This effect can be easily generated by serially connecting two similar gadgets: each gadget uses the stack as a binary counter, which, before it ticks, it decrements (for the first gadget) and increments (for the second gadget) the PVASS counter
(we may simply make the two gadgets use different stack symbols to easily discard paths that might alternate between the two due to bidirectedness).
Note that $q$ is reachable from $p$, but any witness run requires to initially traverse the $p\DTo{\sigma}p'\DTo{1}p$ cycle exponentially many times in order to cross the exponentially deep valley generated by the two gadgets.
In effect, this results in increasing the stack height by an exponential amount.

\section{Additional material for \cref{sec:tower}} \label{app:tower}

\directedgadget*

\begin{proof}
We say that a gadget $G$ is {\em $i$-neutral} if $(s,\bm{u},w) \xrightarrow{*}_G (t,\bm{v},w')$
implies that $\bm{u}$ is $i$-initialized if and only if $\bm{v}$ is $i$-initialized.
Observe that the statement implies that $G_k$ is $k$-neutral
since both $[0, \dots, 0]$ and $[0, \dots, 0, \exp^k(n)]$ are $k$-initialized.

We proceed by induction over $k$ where the case $k = 1$ is clear and we assume $k > 1$.
It is easy to verify the existence of the runs $(q_0,[0, \dots, 0],w) \xrightarrow{*}_{G_k} (q_5,[0, \dots, 0, \exp^k(n)],w)$,
which traverses the gadget Inc$_k$ exactly $\exp^k(n)-1$ times.
It remains to verify that these are the only $k$-initialized runs from $q_0$ to $q_5$.
To this end, we decompose such a run into subruns, along the several subgadgets it traverses.
By arguing that all subgadgets are $(k-1)$-neutral we can conclude that all subruns must be $(k-1)$-initialized
so that we can apply the induction hypothesis to the subruns.

First consider the gadgets $Z_{k-1}$ and Inc$_k$.
The only $(k-1)$-initialized runs in $Z_{k-1}$ between its terminal states, say $s$ and $t$,
are the runs of the form
\[
	(s,[0, \dots, 0],w) \xrightarrow{*}_{Z_{k-1}} (t,[0, \dots, 0],w) \quad \text{for } w \in \Gamma_{k-1}^*,
\]
since $G_{k-1}$ satisfies the induction hypothesis and is $(k-1)$-neutral.
Moreover, we claim that the only $(k-1)$-initialized runs in Inc$_k$ from $p_0$ to $p_4$ are the runs of the form
\begin{equation}
	\label{incrun}
	(p_0,[0, \dots, 0, x_k],w \texttt{0}\texttt{1}^\ell) \xrightarrow{*}_{\text{Inc}_k} (p_4,[0, \dots, 0, x_k + 1],w \texttt{1}\texttt{0}^\ell)
\end{equation}
for $w \in \Gamma_{k-1}^*$ and $\ell \le \exp^{k-1}(n)$.
This follows from the previous statement for $Z_{k-1}$, the fact that the stack determines how often the $p_1$-loop is executed
and the fact that the $p_2$-loop must be repeated until $c_{k-1} = 0$.

Now consider a $k$-initialized run in $G_k$ from $q_0$ to $q_5$.
It can be decomposed into subruns which traverse the gadgets $G_{k-1}$, $\bar G_{k-1}$ and $Z_{k-1}$,
and subruns that do not manipulate the counters $c_j, \bar c_j$ with $j < k-1$.
In particular, all these subruns are $(k-1)$-initialized, which allows us to apply the uniqueness properties for $G_{k-1}$ and $Z_{k-1}$:
The $Z_{k-1}$-runs must start and end with a vector $[0, \dots, 0, x_k]$
whereas the $G_{k-1}$-runs must start with a vector $[0, \dots, 0, x_k]$ and end with $[0, \dots, \exp^{k-1}(n), x_k]$.
This in turn implies that the loops on $q_1$ and $q_3$ must be executed exactly $\exp^{k-1}(n)$ times.
The gadget Inc$_k$ must be executed exactly $\exp^k(n)-1$ times, considering that the binary counter on the stack
must be incremented from $\texttt{0}^{\exp^{k-1}(n)}$ to $\texttt{1}^{\exp^{k-1}(n)}$ according to \eqref{incrun}.
\end{proof}

\bidirectedgadget*

\begin{proof}

Again we proceed by induction over $k$ where $k=1$ is clear.
For $k>1$ consider a $\leftrightarrow_{G_k}$-derivation
witnessing $(s,\bm{u},w)  \stackrel{*}{\leftrightarrow}_{G_k} (t,\bm{v},w')$,
with the minimal number of $\leftarrow_{G_k}$-steps.
Whenever the derivation passes through a subgadget $G_{k-1}$ or $\bar G_{k-1}$,
this part can be replaced by a forward derivation by induction hypothesis.
Moreover, the $\leftrightarrow_{G_k}$-derivation cannot contain subruns which enter and exit a subgadget $G_{k-1}$ or $\bar G_{k-1}$
through the same terminal state since, again by induction hypothesis, such a subrun could be cut out,
reducing the number of reverse transitions.
Hence, the $\leftrightarrow_{G_k}$-derivation can only contain reverse transitions which are not contained
in a subgadget $G_{k-1}$ or $\bar G_{k-1}$.
Consider the first occurrence of a reverse transition $\bar \tau$.
By a case distinction, one can argue that $\bar \tau$ must always be preceded by $\tau$, and hence we can cancel $\tau$ and $\bar \tau$.
If $\tau$ is the loop on $q_1$ then the occurrence of $\bar \tau$ must be preceded by either $\tau$ itself,
in which case $\tau$ and $\bar \tau$ can be cancelled, or a run through $G_{k-1}$.
However, after executing $G_{k-1}$ the counter $\bar c_{k-1}$ must be zero and $\bar \tau$ cannot decrement $\bar c_{k-1}$.
Similar arguments hold for the loops on $q_3$ and $p_1$, and the transitions $p_3 \to p_4$ and $q_3 \to q_4$.
If $\tau$ is the loop on $p_2$ then $\bar \tau$ is either preceded by $\tau$ or by the transition $p_1 \to p_2$.
However, since the transition $p_1 \to p_2$ pushes $\texttt{1}$ the symbol $\texttt{0}$ cannot be popped by $\bar \tau$.

In a very similar fashion we can prove the second statement,
by considering a $\leftrightarrow_{G_k}$-cycle with the minimal number of backward transitions.
Using the induction hypothesis we can cut out cycles on the gadgets $G_{k-1}$ and $\bar G_{k-1}$,
and replace bidirected derivations passing through $G_{k-1}$ or $\bar G_{k-1}$ by forward derivations.
It remains to deal with backward transitions, which can be done with the same arguments as above.
\end{proof}

\section{Failure of a $\PSPACE$ lower bound technique} \label{app:failure}
The paper~\cite{EnglertHLLLS21} presents a $\PSPACE$-hardness
proof for the coverability problem in $1$-PVASS that could be considered a
candidate to show $\PSPACE$-hardness of coverability (or just reachability) in
bidirected $1$-PVASS. However, it fails for non-obvious reasons. Here, we will
briefly outline why it fails.  More specifically, we argue that simply making
the $1$-PVASS from that proof bidirected (by adding backwards edges) will not
yield a correct reduction.  We assume familiarity with the construction from
\cite{EnglertHLLLS21}.

The reduction in~\cite{EnglertHLLLS21} reduces from the alternating subset sum problem. Here, we are given numbers $a_1,a'_1,\ldots,a_k,a'_k$, $e_1,e'_1,\ldots,e_k,e'_k$, and $s$ in $\N$ (in binary) and we want to decide whether
\[ \forall x_1\in\{a_1,a'_1\}~\exists y_1\in\{e_1,e'_1\}\cdots\forall x_k\in\{a_k,a'_k\}~\exists y_k\in\{e_k,e'_k\}\colon~x_1+y_1+\cdots+x_k+y_k=s. \]
This can be seen as a two-player game with a universal player that picks each
$x_i\in\{a_i,a'_i\}$ and an existential player that picks each
$y_i\in\{e_i,e'_i\}$. The instance is positive if the existential player has a
winning strategy, which means that it ensures $x_1+y_1+\cdots+x_k+y_k=s$.

We say that a sequence of transitions in a PVASS is a \emph{pseudo-run} if the
counter is allowed to go below zero.  The 1-PVASS constructed
in~\cite{EnglertHLLLS21} has the property that each
pseudo-run $(q_0,\varepsilon,0)\to(q_f,\varepsilon,m)$ corresponds to a
strategy for the existential player. Moreover, it is a run if it is indeed a
winning strategy for the existential player.  

Now suppose that (i)~every play will overshoot $s$, i.e.\
 $x_1+y_1+\cdots+x_k+y_k>s$ (in particular, there is no winning strategy), but
(ii)~there are two strategies for $E$ that will, with the same moves by $A$, overshoot
in distinct ways. Specifically, there are $x_i\in\{a_i,a'_i\}$ for $i\in[1,k]$
and $y_i\in\{e_i,e'_i\}$ for $i\in[1,k-1]$ such that
\begin{align*}
x_1+y_1+\cdots +x_k+e_k&>s \\
x_1+y_1+\cdots +x_k+e'_k&>s
\end{align*}
and $e_k\ne e'_k$. Then there are two runs in the constructed $1$-PVASS of the form
\begin{align}
(q_0,\varepsilon,0)\xrightarrow{*}(p,w,m)\xrightarrow{*} (q,w,m-\delta_1)~\text{and}~(p,w,m)\xrightarrow{*}(q,w,m-\delta_2), \label{pspace-failure-two-runs}
\end{align}
where $\delta_1\ne\delta_2$. This is because at the top of the stack (which
then has the form $wv$ for some $v$), it is checked that each sum is at least
$s$. However, if the sum is more than $s$, this only leads to issues all the
way at the end of the run, because going above $s$ will remove too many tokens
to sustain a run to $(q_f,\varepsilon,0)$. Therefore, in the $1$-PVASS,
there is no run in this case.

However, if we add backwards edges, the two runs in
\eqref{pspace-failure-two-runs} allow us to create an arbitrary supply of
tokens: Suppose $\delta_1<\delta_2$ and set $\delta=\delta_2-\delta_1>0$. Then
we can execute:
\begin{align}
(p,w,m)\xrightarrow{*}(q,w,m-\delta_1)\xrightarrow{*}(p,w,m-\delta_1+\delta_2)=(p,w,m+\delta).
\end{align}
And thus $(p,w,m)\xrightarrow{*}(p,w,m+\ell\delta)$ for every $\ell\in\N$.
However, in the constructed $1$-PVASS, it is easy to see that for every
reachable configuration $(r,v,n)$, there exists some $n'$ such that from
$(r,v,n')$, we can cover $(q_f,\varepsilon,0)$, this shows that we can cover
$(q_f,\varepsilon,0)$ even if there is no winning strategy.

\label{endofdocument}
\newoutputstream{pagestotal}
\openoutputfile{main.pgt}{pagestotal}
\addtostream{pagestotal}{\getpagerefnumber{endofdocument}}
\closeoutputstream{pagestotal}
\end{document}